\definecolor{dkgreen}{rgb}{0,0.6,0}
\definecolor{gray}{rgb}{0.5,0.5,0.5}
\definecolor{mauve}{rgb}{0.58,0,0.82}
\tiny\color{gray},
\newtheorem{theorem}{Theorem}
\newtheorem{lemma}[theorem]{Lemma}
\newtheorem{definition}[theorem]{Definition}
\newtheorem{proposition}[theorem]{Proposition}
\newcommand{\newloglike}[2]{\newcommand{#1}{\mathop{\rm #2}\nolimits}}
\newloglike{\sgn}{sgn}
\newcommand{\abs}[1]{\left\vert#1\right\vert}
\newcommand{\set}[1]{\left\{#1\right\}}
\newcommand{\tuple}[1]{\left(#1\right)} \newcommand{\eps}{\varepsilon}
\newcommand{\tp}{\tuple}
\newcommand{\sharpP}{\#\mathbf{P}} \renewcommand{\P}{\mathbf{P}}
\newcommand{\defeq}{\triangleq}
\def\*#1{\mathbf{#1}}
\def\+#1{\mathcal{#1}}
 \renewcommand{\mid}{\;\middle\vert\;}
\renewcommand{\Pr}[2][]{ \ifthenelse{\isempty{#1}}
  {\mathbf{Pr}\left[#2\right]} {\mathbf{Pr}_{#1}\left[#2\right]} }
\newcommand{\E}[2][]{ \ifthenelse{\isempty{#1}}
  {\mathop{\mathbf{E}}\left[#2\right]}
  {\mathop{\mathbf{E}}_{#1}\left[#2\right]} }
\newcommand{\nul}[1]{{\it et al.\/}}
\newcommand{\FPTAS}{\textup{\textbf{FPTAS}}~}
\newcommand{\pderiv}[2]{\frac{\partial{#1}}{\partial{#2}}}
\renewcommand{\deg}[2][]{\mathrm{deg}_{#1}\tuple{#2}}
\begin{document}

\title{An FPTAS for Counting Proper Four-Colorings on Cubic Graphs}
\author{ Pinyan Lu \thanks{Institute for Theoretical Computer Science,
    School of Information Management and Engineering, Shanghai
    University of Finance and
    Economics. \texttt{lu.pinyan@mail.shufe.edu.cn}} \and Kuan Yang
  \thanks{Department of Computer Science, University of Oxford. {\tt
      kuan.yang.6@gmail.com}} \and Chihao Zhang \thanks{Institute of
    Theoretical Computer Science and Communications, The Chinese
    University of Hong Kong. {\tt chihao.zhang@gmail.com}} \and
  Minshen Zhu \thanks{Purdue University. {\tt minshen.zh@gmail.com}} }
\date{}

\maketitle

\begin{abstract}
  Graph coloring is arguably the most exhaustively studied problem in
  the area of approximate counting. It is conjectured that there is a
  fully polynomial-time (randomized) approximation scheme
  (FPTAS/FPRAS) for counting the number of proper colorings as long as
  $q \geq \Delta + 1$, where $q$ is the number of colors and $\Delta$
  is the maximum degree of the graph. The bound of $q = \Delta + 1$ is
  the uniqueness threshold for Gibbs measure on $\Delta$-regular
  infinite trees.  However, the conjecture remained open even for any
  fixed $\Delta\geq 3$ (The cases of $\Delta=1, 2$ are trivial). In
  this paper, we design an FPTAS for counting the number of proper
  $4$-colorings on graphs with maximum degree $3$ and thus confirm the
  conjecture in the case of $\Delta=3$.  This is the first time to
  achieve this optimal bound of $q = \Delta + 1$. Previously, the best
  FPRAS requires $q > \frac{11}{6} \Delta$ and the best deterministic
  FPTAS requires $q > 2.581\Delta + 1$ for general graphs. In the case
  of $\Delta=3$, the best previous result is an FPRAS for counting
  proper 5-colorings. We note that there is a barrier to go beyond
  $q = \Delta + 2$ for single-site Glauber dynamics based FPRAS and we
  overcome this by correlation decay approach. Moreover, we develop a
  number of new techniques for the correlation decay approach which
  can find applications in other approximate counting problems.
\end{abstract}

\let\svthefootnote\thefootnote \let\thefootnote\relax\footnote{This
  work was done in part while some of the authors were visiting the
  Simons Institute for the Theory of Computing.}
\addtocounter{footnote}{-1}\let\thefootnote\svthefootnote




\section{Introduction}
The problem of counting proper $q$-colorings has been extensively
studied in computer science and statistical physics. It is known to be
$\sharpP$-hard for $q\ge 3$ even on graphs with bounded maximum degree
$\Delta\geq 3$~\cite{bubley1999approximately}. A number of literature
has been devoted to the design of approximation
algorithms~\cite{bubley1997path,molloy2004glauber,hayes2003non,hayes2003randomly,jerrum1995very,bubley1999approximately,vigoda2000improved,dyer2003randomly,dyer2013randomly,hayes2006coupling,dyer2006randomly}. The
main algorithmic tool used in these works is the method of Markov
chain Monte Carlo (MCMC), which is based on the simulation of a Markov
chain on all proper $q$-colorings of a graph $G$ whose stationary
distribution is the uniform distribution. Although the Markov chains
themselves are usually quite simple, it is challenging to prove the
rapid mixing property of the chains and the interplay between the
number $q$ of colors and the maximum degree $\Delta$ of the graph $G$
turns out to be a key measure for such property to hold.

The \emph{Glauber dynamics} is a natural Markov chain to sample
colorings and it converges to the uniform distribution as long as
$q\ge \Delta+2$.  Jerrum~\cite{jerrum1995very} and Salas and
Sokal~\cite{salas1997absence} independently showed that the Glauber
dynamics mixes rapidly if $q>2\Delta$. The bound of $2\Delta$ was
considered as a barrier for the analysis of the Glauber dynamics and
was even conjectured as a threshold for the rapid mixing property to
hold for a period of time. Later, the conjecture was refuted by Bubley
et al.~\cite{bubley1999approximately} by showing that the Glauber
dynamics indeed rapidly mixes when $\Delta=3$ and $q=5$. It is worth
to note that this result attains the ergodicity threshold for Glauber
dynamics ($q\ge \Delta+2$) and thus it is the best one can achieve via
this method.  For general $\Delta$, the state-of-the-art requires that
$q> \frac{11}{6}\Delta$~\cite{vigoda2000improved}.

All the above algorithms based on MCMC provide randomized
algorithms. Can we get deterministic approximation algorithms?  A
deterministic FPTAS was obtained in~\cite{gamarnik2012correlation}
when $q \ge 2.8432 \Delta+ \beta$ for some sufficiently large $\beta$
on triangle-free graphs. The bound was improved to
$q\ge 2.581\Delta+1$ on general graphs~\cite{lu2013improved}. These
new deterministic FPTASes are based on the \emph{correlation decay}
techniques.

Correlation decay approach is a relatively new approach to design
approximate counting algorithm comparing to the MCMC method. One
advantage of correlation decay approach is that the resulting
algorithms are deterministic. Moreover, there are quite a few
problems, for which an FPTAS based on correlation decay approach was
provided while no MCMC based FPRAS is known. Among which, the most
successful example is the problem of computing the partition function
of anti-ferromagnetic two-spin
systems~\cite{li2012approximate,sinclair2014approximation,li2013correlation},
including counting independent sets~\cite{weitz2006counting}. The
correlation decay based FPTAS is beyond the best known MCMC based
FPRAS and achieves the boundary of
approximability~\cite{sly2012computational,galanis2012inapproximability},
which is the uniqueness condition of the system. It is an important
and challenging open question to extend this result to
anti-ferromagnetic multi-spin systems. Coloring problem (or
anti-ferromagnetic Potts model at zero temperature in the statistical
physics terminology) is the most important and canonical example for
anti-ferromagnetic multi-spin systems. It was proved that the
uniqueness bound for this system on infinite regular trees is exactly
$q = \Delta + 1$~\cite{jonasson2002uniqueness}. This fact supports the
conjecture that $q = \Delta + 1$ is the optimal bound for approximate
counting in general graphs.

\subsection{Our Results}
Our main result is to introduce new techniques to the correlation
decay based algorithm and provide an FPTAS all the way up to the
optimal bound of $q=\Delta+1$ in the case of $\Delta=3$.
\begin{theorem}\label{thm:main}
  There exists an \FPTAS to compute the number of proper four
  colorings on graphs with maximum degree three.
\end{theorem}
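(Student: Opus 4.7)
The plan is to execute a correlation decay argument that, unlike all previous ones, pushes all the way to the uniqueness threshold $q = \Delta + 1 = 4$. By standard self-reducibility, $Z(G)$ can be written as a telescoping product
\[
  Z(G) \;=\; q \cdot \prod_{i=1}^{n-1} \frac{1}{p_{v_i}^{c_i}(G, \sigma_{<i})}
\]
along any fixed vertex ordering $v_1, \ldots, v_n$, where $p_v^c(G,\sigma)$ denotes the probability that vertex $v$ receives color $c$ in a uniform proper four-coloring given the partial coloring $\sigma$. An FPTAS for $Z(G)$ therefore reduces to approximating each conditional marginal $p_v^c$ within relative error $O(\epsilon/n)$.

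To approximate $p_v^c(G, \sigma)$, I plan to unfold a bounded-depth recursion around $v$: express $p_v^c$ in terms of conditional marginals at the unpinned neighbors of $v$, then recursively in terms of marginals at their neighbors, and so on. Truncating the recursion at depth $L$ and substituting arbitrary boundary values at the leaves gives an estimator that can be evaluated in time $\exp(O(L))$, which is polynomial once $L = O(\log(n/\epsilon))$: the branching is bounded by $\Delta - 1 = 2$ (the vertex we came from is already pinned) and there are only $q = 4$ color branches per node. The decisive remaining task is to prove \emph{correlation decay}: the error of the truncated estimator must shrink geometrically in $L$, uniformly over the boundary values at the leaves.

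The main obstacle is proving correlation decay at the critical value $q = \Delta + 1$, which is exactly the uniqueness threshold for the Gibbs measure on the infinite $3$-regular tree. Standard one-step contraction arguments, such as those used in~\cite{lu2013improved} to obtain FPTASes for $q > 2.581\Delta + 1$, fail here because the worst-case one-step contraction factor tends to $1$ as $q$ approaches $\Delta + 1$. I plan to attack this with several new ingredients: (i) a carefully designed \emph{potential function} on marginal distributions over four colors, tuned to exploit the rigidity of having exactly $q = \Delta + 1$ colors (notably, a vertex whose three neighbors take three distinct colors is deterministically forced); (ii) an \emph{amortized multi-step contraction analysis} in which two or more levels of the recursion are composed and analyzed as a single step, so that even if some individual level is expanding, the composition is strictly contractive; and (iii) a bounded case analysis over the finitely many local neighborhood configurations that can arise under $\Delta = 3$ and $q = 4$, likely completed with computer assistance to handle the worst-case bookkeeping. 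Once amortized contraction in the potential norm is in hand, combining it with the truncated recursion and the telescoping product produces the claimed FPTAS by a now-standard argument.
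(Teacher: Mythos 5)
Your proposal correctly captures the outer architecture of the paper's argument: self-reducibility via a telescoping product of conditional marginals, a truncated recursion for the marginals (branching factor $\le 2$ once you move away from the root, so polynomial time at depth $O(\log(n/\eps))$), and a correlation-decay analysis in a potential norm with computer-assisted case checking. All of that matches. However, as stated the proposal is missing the two specific technical ideas that make the paper's contraction argument actually close at $q=\Delta+1$, and without them your plan (ii) is unlikely to go through.

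First, the paper's crucial efficiency gain does not come from a generic ``compose two levels and hope the product is contractive'' move, but from a \emph{partial} two-level expansion that exploits a concrete algebraic redundancy in the Gamarnik--Katz recursion. In that recursion the marginal $\Pr[G,L]{c(v)=i}$ is written in terms of $d q$ quantities $\Pr[G_v,L_{k,j}]{c(v_k)=j}$, and previous analyses treat all $dq$ of them as free variables, giving a computation tree of branching $dq$ even though the graph only branches with degree $d$. The paper observes that for $k=1$ the modified color list $L_{1,j}$ is the same for every color $j$, so the $q$ values attached to the first neighbor $v_1$ are simply the marginals of a \emph{single} instance and, when expanded one more level, depend on only $d q$ further variables (not $d q^2$). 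This ``partial two-layer'' step is what reduces the effective branching and is the engine of the contraction bound. Your phrase ``compose two or more levels so that even if some level is expanding the composition is contractive'' gestures at a different (and weaker) phenomenon and would not by itself recover this saving; you need to identify and exploit the redundancy among the $q$ variables for $v_1$.

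Second, your plan is silent about the boundary behaviour of the marginals, which is where a naive contraction argument dies at $q=4$, $\Delta=3$. In this regime the recursion inputs genuinely range over $[0,\tfrac12]$, and both endpoints are attained; the contraction ratio in the potential metric tends to $1$ as a variable approaches $0$ or $\tfrac12$, so no uniform one-step or two-step contraction over the full box $[0,\tfrac12]^d$ exists. The paper resolves this with an ``alternative argument'': it proves (their boundary-structure theorem) that the computed value equals $0$ or $\tfrac12$ only in a small, locally checkable family of configurations, and that in those configurations the true marginal is exactly $0$ or $\tfrac12$ as well, so the algorithm returns the exact value with no error to propagate; otherwise the marginals are confined to an interior interval such as $[\tfrac{1}{13},\tfrac{13}{27}]$, and contraction is proved only on that restricted domain, with the boundary variables treated as fixed parameters rather than sources of error. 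This detect-and-freeze step is not an optional optimisation; it is what makes the decay rate $\lambda<1$ achievable at all. Your observation that ``a vertex whose three neighbors take three distinct colors is deterministically forced'' is in the right spirit, but you need to turn it into the full dichotomy (exactly computable versus strictly interior) and run the contraction analysis on the restricted domain. Adding these two ingredients — the partial two-level recursion exploiting the shared list $L_{1,j}$, and the boundary-dichotomy argument — would bring your plan in line with a workable proof.
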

As the first algorithm achieving the optimal bound, we view it as a
substantial step towards the optimal counting algorithms for general
graphs. The contribution is three folds
\begin{itemize}
\item It overcomes an intrinsic barrier of MCMC (Glauber dynamics)
  based algorithms. For the case of $q=\Delta+1$, the Glauber dynamics
  Markov chain is not ergodic and thus its stationary distribution is
  not unique. Nevertheless, we obtained FPTAS based on correlation
  decay technique.
\item We provide a number of new design and analysis technique for
  correlation decay based algorithms, which can be used for general
  graph colorings or even other approximate counting problems.
\item Our analysis is simpler than previous analysis of MCMC
  algorithms in similar settings. Even when the maximum degree $\Delta=3$,
  it is already a very challenging problem to analyze the MCMC
  algorithms. In order to improve from $q=6$ to $q=5$,
  \cite{bubley1999approximately} did a very detailed case by case
  analysis and even require computer to verify the proof. We obtain
  the optimal bound of $q=4$.
\end{itemize}

\subsection{Our Techniques}
The key step in all the proofs of correlation decay analysis is to
prove that a recursive function is contractive. For most of current
known correlation decay based FPTASes for coloring problem, the
following recursion, introduced in~\cite{gamarnik2012correlation}, is
used
\[
  \Pr[G,L]{c(v)=i}
  =\frac{\prod_{k=1}^{d}\tuple{1-\Pr[G_v,L_{k,i}]{c(v_k)=i}}}{\sum_{j\in
      L(v)}\prod_{k=1}^{d}\tuple{1-\Pr[G_v,L_{k,j}]{c(v_k)=j}}}.
\]

The notation $\Pr[G,L]{c(v)=i}$ denotes the marginal probability of
the vertex $v$ to be colored $i$ in an instance $(G,L)$ where $G$ is a
graph and $L$ is a color list that associates each vertex a set of
\emph{feasible} colors. $\Pr[G_v,L_{k,j}]{c(v_k)=j}$ denotes a similar
marginal probability in a modified instance: $G_v$ is the graph
obtained from $G$ by removing $v$ and $L_{k,j}$ is obtained from $L$
by removing color $j$ from the color list of the vertex $v_{w}$ where
$w<k$ and $v_w$ is the $w$-th neighbor of $v$ in some canonical order.
In this recursion, $\Pr[G,L]{c(v)=i}$ can be computed from $dq$
different variables of $\Pr[G_v,L_{k,j}]{c(v_k)=j}$ with
$k=1,2,\cdots,d$ and $j=1,2,\cdots,q$. In all previous analyses, one
view them as $dq$ free and independent variables and then bound the
contraction in the worst case. For each single variable, one use the
same recursion to expand to a set of $dq$ new free and independent
variables. This yields a computation tree of degree $dq$. However, the
expansion of the underlying graph is of degree $d$ and we usually call
this gap the information loss or inefficiency of the
recursion. However, these $dq$ variables are not completely free and
independent. The key new idea of this work is to make use of the
relations among these variables to reduce redundancy and improve the
efficiency of the recursion.  Here are two key observations:
\begin{itemize}
\item For different colors $i$ and $j$, the recursions for
  $\Pr[G,L]{c(v)=i}$ and $\Pr[G,L]{c(v)=j}$ involve exactly the same
  set of $dq$ different variables.
\item For $k=1$, $L_{k,j}$ is identical for different color $j$.
\end{itemize}

Using these two observations, we can further expand the $q$ different
variables $\Pr[G_v,L_{1,j}]{c(v_1)=j}$ with $j=1,2,\cdots,q$ into a
set of $d q$ different variables simultaneously. The expansion here is
$d$ (from $q$ variables to $d q$ different variables) rather than
$dq$. In previous analyses, each single variable of these $q$
different variables will further expand to $dq$ free and independent
variables. The total number becomes $dq^2$.

This can be viewed as a partial two-layer recursion: for a subset of
variables in the one layer recursion, we use the same recursive
function to further expand them. We note that the similar information
loss or inefficiency for recursion appears in many correlation decay
based approximation counting algorithms, and it is the main cause of
the sub-optimality of the analysis. The approach introduced here can
also be applied to improve their analyses and the key is to observe
some relations among the redundant variables and make use of it. In
the current case, the improvement becomes substantial when the number
of variables is small.

Another crucial idea in our proof is to get better bounds for
variables $\Pr[G_v,L_{k,j}]{c(v_k)=j}$ and then we only need to prove
contraction in these bounded range. This idea was used in previous
analyses for counting colorings and many other problems.  However, in
our setting of $q=4$ and $\Delta=3$, these values could be as large
as $1$ and as small as $0$. These are trivial bounds for a probability
in general. Here, we use two observation to refine the bounds.

First, we notice that bound of $1$ can only be achieved at the root of
the recursion tree and for all other variable the value is between $0$
and $\frac{1}{2}$.  The boundaries of $0$ and $\frac{1}{2}$ are both
achievable and thus cannot be improved in general. To overcome this,
we use the following \emph{alternative argument}: When the two bounds
of $0$ and $\frac{1}{2}$ are achieved, we can easily detect it and
thus compute the accurate values without error; otherwise, we can get
better bounds. In the later case, we view the variables achieving $0$
and $\frac{1}{2}$ as parameters rather than variables of the recursion
function as we are sure that there is no errors for them, and just
prove that the degenerated recursion function is contractive with
respect to remaining variables. This is plausible since we have
better bounds for remaining variables.

Last but not the least, as in most of the correlation decay approach,
we use a potential function to amortize the decay rate. It remains
the most important and magic ingredient of the proof. There is no
general method to design potential function. Based on some numerical
computation, we propose a new potential function in the
paper. Comparing to the previous potential functions for coloring
problem, the main new feature of our new function is its
non-monotonicity, which captures the property of the problem. We
remark that a potential function with a similar shape can be used for
general graph coloring problem for similar set of recursions.



\section{Preliminaries and the (New) Recursion}\label{sec:prelim}
\paragraph{List coloring and Gibbs measure}
Although we start with a standard graph coloring instance, where each vertex can choose from the same set of 4 different colors,
we need to modify the color list during our algorithms to get a
list-coloring instance. Therefore we work on list-coloring problem in general.
A list-coloring instance is specified by a graph-list pair $(G,L)$,
where $G=(V,E)$ is an undirected graph and $L:V \rightarrow 2^{[q]}$ associates each vertex $v$ with a color list
$L(v) \subseteq [q]$. A proper coloring of $(G,L)$ is an assignment
$c:V\to[q]$ such that (1) $c(v)\in L(v)$ for every $v\in V$ and (2) no
two ends of an edge share the same color, i.e., $c(u)\ne c(v)$ for
every $e=(u,v)\in E$. 

The \emph{Gibbs measure} is the uniform distribution over all proper
colorings of $(G,L)$. For every vertex $v\in V$ and color $i\in [q]$,
we use $\Pr[G,L]{c(v)=i}$ to denote the marginal probability that the
vertex $v$ is colored $i$ in the Gibbs measure.

In the following, we use $\Delta$ to denote the maximum degree of the
graph. If there exists an efficient algorithm to estimate the marginal
probability $\Pr[G,L]{c(v)=i}$, then one can construct an \FPTAS to
count the number of proper colorings.

\begin{lemma}\label{lem:margtopartition}
  Suppose there exists an algorithm to compute a $(1\pm \eps)$ approximation of  $\Pr[G,L]{c(v)=i}$
  for every list-coloring
  instance $(G,L)$ with $G=(V,E)$, $q=4$, $\Delta=3$, $|L(v)|\geq d_v+1$ for every $v\in V$, and every $i\in [q]$ in time $\mathrm{poly}(\abs{V},\frac{1}{\eps})$.
Then there exists an \FPTAS to compute the number of proper
  $4$-colorings on graphs with maximum degree three.
\end{lemma}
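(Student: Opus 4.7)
The plan is to apply the standard self-reducibility telescoping trick, where the partition function is written as a product of conditional marginal probabilities, each of which is a marginal in a suitable modified list-coloring instance. The only subtleties are (i) verifying that the promised algorithmic hypothesis (namely the list-size condition $|L(v)| \geq d_v + 1$) is preserved along the reduction, and (ii) exhibiting a reference coloring so that the telescoping terms are nonzero and can be legitimately inverted.

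First I would fix an arbitrary ordering of the vertices $v_1, v_2, \ldots, v_n$ and produce a proper 4-coloring $c^* : V \to [4]$ of $G$ by the obvious greedy procedure: since $\Delta = 3$ and $q = 4 = \Delta + 1$, at every step at most three colors are forbidden so some color in $[4]$ is available, and this runs in polynomial time. Because the Gibbs measure is uniform over proper colorings, the number $Z(G)$ of proper 4-colorings satisfies
\[
  \frac{1}{Z(G)} = \Pr{c(v_1)=c^*_1,\ldots,c(v_n)=c^*_n} = \prod_{i=1}^{n} \Pr{c(v_i)=c^*_i \mid c(v_1)=c^*_1,\ldots,c(v_{i-1})=c^*_{i-1}},
\]
so $Z(G) = \prod_{i=1}^n 1/p_i$ where $p_i$ denotes the $i$-th conditional marginal. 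Because $c^*$ is a proper coloring of $G$, each $p_i$ is strictly positive.

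Next I would rewrite each $p_i$ as a marginal in a list-coloring instance $(G_i, L_i)$ obtained from $(G, L)$ with $L(v) = [4]$ by deleting the vertices $v_1, \ldots, v_{i-1}$ and, for every surviving neighbor $w$ of some deleted $v_j$ ($j < i$), removing the color $c^*_j$ from $L(w)$. I would then check that the list-size condition $|L_i(w)| \geq d^{G_i}_w + 1$ holds for every $w$ in $G_i$: it holds initially since $|L(w)| = 4 \geq 3 + 1 \geq d^G_w + 1$, and each time a neighbor $v_j$ of $w$ is removed from the graph, the degree $d_w$ drops by exactly one, while $|L(w)|$ drops by at most one (only if $c^*_j \in L(w)$). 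Thus the invariant is preserved throughout, and the hypothesis of the lemma applies to each $(G_i, L_i)$.

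Finally I would do the standard error-aggregation: set $\eps' := \eps/(2n)$, invoke the assumed algorithm to obtain $\hat p_i$ with $\hat p_i \in (1 \pm \eps') p_i$ for each $i$, and output $\widehat Z := \prod_{i=1}^n 1/\hat p_i$. Then $\widehat Z / Z(G) \in \bigl((1-\eps')^{-n}, (1+\eps')^{-n}\bigr)$, which lies in $(1 \pm \eps)$ once $n\eps'$ is sufficiently small, giving the required FPTAS in total time $n \cdot \mathrm{poly}(n, 1/\eps') = \mathrm{poly}(n, 1/\eps)$. The only conceptually non-routine point is verifying the list-size invariant; everything else is a direct chain-rule calculation, and I do not anticipate any serious obstacle.
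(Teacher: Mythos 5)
Your argument is correct and it is exactly the standard self-reducibility proof that the paper invokes by citation rather than spelling out; in particular you correctly identify that the only non-automatic step is checking the list-size invariant $|L_i(w)| \ge d_w^{G_i}+1$, which is preserved because deleting a neighbor of $w$ from the graph decreases $d_w$ by exactly one while removing that neighbor's color decreases $|L(w)|$ by at most one. (Minor slip: the endpoints of the interval $\bigl((1-\eps')^{-n},(1+\eps')^{-n}\bigr)$ should be swapped, but this does not affect the conclusion.)
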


The proof Lemma \ref{lem:margtopartition} is routine, see
e.g. \cite{gamarnik2012correlation}. Therefore, the remaining task is to approximate $\Pr[G,L]{c(v)=i}$  for instances
satisfying the conditions stated in Lemma~\ref{lem:margtopartition}.

\paragraph{Recursion}

Let $(G,L)$ be an instance of list-coloring and $v\in V$ be a
vertex. Let $N(v)=\set{v_1,\dots,v_d}$ denote the set of neighbors of
$v$ in $G$, where $d$ is the degree of $v$ and let $G_v$ be the graph
obtained from $G$ by removing vertex $v$ and all its incident
edges. For every $k\in[d]$ and $i\in[q]$, let
\begin{equation}
  \label{eq:list}
  L_{k,i}(u)=
  \begin{cases}
    L(u)\setminus\set{i}, & \mbox{if $u=v_\ell$ for some $\ell<k$}, \\
    L(u), & \mbox{otherwise}
  \end{cases}
\end{equation}
be color lists.  Then the following recursion for computing
$\Pr[G,L]{c(v)=i}$ first appeared in \cite{gamarnik2012correlation}.
\begin{lemma}
  Assuming above notations we have
  \begin{equation}
    \label{eq:recursion}
    \Pr[G,L]{c(v)=i}=\frac{\prod_{k=1}^{d}\tuple{1-\Pr[G_v,L_{k,i}]{c(v_k)=i}}}{\sum_{j\in
        L(v)}\prod_{k=1}^{d}\tuple{1-\Pr[G_v,L_{k,j}]{c(v_k)=j}}}
  \end{equation}
\end{lemma}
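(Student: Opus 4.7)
The plan is to derive \eqref{eq:recursion} from a short telescoping argument on partition functions. Let $Z(G,L)$ denote the number of proper colorings of $(G,L)$ and $Z_i(G,L)$ the number of those with $c(v)=i$; since the Gibbs measure is uniform, $\Pr[G,L]{c(v)=i}=Z_i(G,L)/Z(G,L)$ and $\sum_{i\in L(v)}Z_i(G,L)=Z(G,L)$. It therefore suffices to express $Z_i(G,L)$ as $Z(G_v,L)\prod_{k=1}^d\tuple{1-\Pr[G_v,L_{k,i}]{c(v_k)=i}}$; dividing by the sum of the same expression over $i\in L(v)$ then reproduces the right-hand side of \eqref{eq:recursion} directly.

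First I would pin $c(v)$ to $i$. Extending \eqref{eq:list} to the value $k=d+1$ yields a list system $L_{d+1,i}$ that removes color $i$ from the list of every neighbor of $v$, and forgetting the color at $v$ is a bijection between proper colorings of $(G,L)$ with $c(v)=i$ and proper colorings of $(G_v,L_{d+1,i})$: properness at the deleted edges $vv_k$ is exactly what the list restriction encodes, while properness at all remaining edges is preserved. Hence $Z_i(G,L)=Z(G_v,L_{d+1,i})$. Since $L_{1,i}=L$, I would then telescope
\[
  Z(G_v,L_{d+1,i})=Z(G_v,L)\prod_{k=1}^{d}\frac{Z(G_v,L_{k+1,i})}{Z(G_v,L_{k,i})}.
\]
The list $L_{k+1,i}$ is obtained from $L_{k,i}$ by deleting color $i$ from the list of the single vertex $v_k$, so the $k$-th ratio equals the probability that a uniformly random proper coloring of $(G_v,L_{k,i})$ does not place color $i$ at $v_k$, namely $1-\Pr[G_v,L_{k,i}]{c(v_k)=i}$. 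Substituting back and dividing by $Z(G,L)$ yields \eqref{eq:recursion}.

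I do not anticipate a substantive obstacle: the statement is a purely combinatorial self-reducibility identity, and this kind of telescoping ratio-of-partition-functions trick is standard in the correlation-decay literature. The only mild point of care is that each intermediate marginal $\Pr[G_v,L_{k,i}]{c(v_k)=i}$ must be well-defined, i.e.\ $Z(G_v,L_{k,i})>0$; this is automatic in the regime of Lemma~\ref{lem:margtopartition}, since the hypothesis $|L(v)|\ge d_v+1$ is preserved under deleting $v$ (which lowers neighboring degrees by one) and stripping at most one color from each neighbor's list, and any such instance admits a proper list-coloring by the standard degeneracy argument.
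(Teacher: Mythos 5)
Your proof is correct and is essentially the same argument as the paper's: your telescoping of partition-function ratios $Z(G_v,L_{k+1,i})/Z(G_v,L_{k,i})$ is exactly the chain-rule factorization $\Pr[G_v,L]{\bigwedge_k c(v_k)\neq i}=\prod_k \Pr[G_v,L]{c(v_k)\neq i\mid \bigwedge_{\ell<k}c(v_\ell)\neq i}$ that the paper uses, with the bijection $Z_i(G,L)=Z(G_v,L_{d+1,i})$ matching the paper's observation $Z_{G,L}(c(v)=i)=Z_{G_v,L}(\bigwedge_k c(v_k)\neq i)$. The only difference is cosmetic (partition-function ratios vs.\ conditional probabilities).
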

\begin{proof}
  Let $Z_{G,L}$ denote the number of proper colorings on $(G,L)$ and
  let $Z_{G,L}(c(u)=i)$ denote the number of proper colorings on
  $(G,L)$ that assigns vertex $u$ with color $i$ for every vertex
  $u\in V$ and every color $i\in [q]$. Then
  \begin{align*}
    \Pr[G,L]{c(v)=i}
    &=\frac{Z_{G,L}(c(v)=i)}{\sum_{j\in L(v)}Z_{G,L}(c(v)=j)}
      =\frac{Z_{G_v,L}\tuple{\bigwedge_{k\in[d]}c(v_k)\ne i}}{\sum_{j\in L(v)}Z_{G_v,L}\tuple{\bigwedge_{k\in[d]}c(v_k)\ne j}}\\
    &=\frac{\Pr[G_v,L]{\bigwedge_{k\in[d]}c(v_k)\ne i}}{\sum_{j\in L(v)}\Pr[G_v,L]{\bigwedge_{k\in [d]}c(v_k)\ne j}}
      =\frac{\prod_{k=1}^d\Pr[G_v,L]{c(v_k)\ne i\mid \bigwedge_{\ell<k}c(v_\ell)\ne i}}{\sum_{j\in L(v)}\prod_{k=1}^d\Pr[G_v,L]{c(v_k)\ne j\mid \bigwedge_{\ell<k}c(v_\ell)\ne j}}\\
    &=\frac{\prod_{k=1}^{d}\tuple{1-\Pr[G_v,L_{k,i}]{c(v_k)=i}}}{\sum_{j\in
      L(v)}\prod_{k=1}^{d}\tuple{1-\Pr[G_v,L_{k,j}]{c(v_k)=j}}}.
  \end{align*}
\end{proof}

Then we can apply the same recursion to further expand
$\Pr[G_v,L_{k,j}]{c(v_k)=j}$ so on and so forth. It gives a
computation tree to compute the value of the root
$\Pr[G,L]{c(v)=i}$. The condition that $q=4$, $\Delta=3$, and
$|L(v)|\geq d_v+1$ for every $v\in V$, holds for all the list-coloring
instances appearing in this computation tree. In the definition of new
color lists (\ref{eq:list}), the list size is decreased by one only
for the neighbors of $v$, but the degrees of its neighbors are also
decreased by one in the new modified graph $G_v$ since we have removed
vertex $v$ and all its incident edges. Therefore the condition
$|L(v)|\geq d_v+1$ remains satisfied for every $v\in V$ in the new
instance. For every probability $\Pr[G',L']{c(u)=j}$ in the
computation tree except the root, the degree $d_u\leq \Delta-1=2$
since we come to this instance by removing a neighbor of $u$ and thus
the degree is decreased by at least one. All these observations are
used in previous analyses.  A more subtle and crucial new observation
is that for every probability $\Pr[G',L']{c(u)=j}$ in the computation
tree except the root, one have $|L(u)|\geq d_u+2$ (which is stronger
than $|L(u)|\geq d_u+1$ ) since the degree of $u$ is decreased by one
while color list for $u$ remains in the definition of (\ref{eq:list}).

We do not analyze this computation tree directly but turn to a more
efficient one by taking the relation between variables into
account. In the definition (\ref{eq:list}) of $L_{k,i}$ , if $k=1$ the
new color lists remain the same for all the remaining vertexes and
thus is independent from the color $i$. Therefore, the $|L(v_1)|$
variables $\Pr[G_v,L_{1,j}]{c(v_1)=j}$ are simply the marginal
probabilities of vertex $v_1$ for different colors in the same
instance. Therefore, when we further expand these variables, they
involve same set of variables. We make use of this property and
further expand these variables as follows.  Let $d_1$ be the degree of
$v_1$ in the graph $G_v$ and $u_1,u_2,\dots,u_{d_1}$ be the neighbors
of $v_k$ in the graph $G_{v}$.  We use $G_{v,v_1}$ to denote the graph
obtained from $G_v$ by removing the vertex $v_1$ and all its incident
edges. For every $k\in[d_1]$ and $i\in[q]$, we use $L'_{k,i}$ to
denote the color list such that
\[
  L'_{k,i}(u)=
  \begin{cases}
    L(u)\setminus\set{i}, & \mbox{if $u=u_\ell$ for some $\ell<k$},\\
    L(u), & \mbox{otherwise}.
  \end{cases}
\]
Applying recursion \eqref{eq:recursion}, we obtain for every
$j \in L(v_1)$, it holds that
\begin{equation}
  \label{eq:recursion2}
  \Pr[G_v,L_{1,j}]{c(v_1)=j}=\frac{\prod_{k=1}^{d_1}\tp{1-\Pr[G_{v,v_1},L'_{k,j}]{c(u_k)=j}}}{\sum_{l\in L(v_1)}\prod_{k=1}^{d_1}\tp{1-\Pr[G_{v,v_1},L'_{k,l}]{c(u_k)=l}}}.
\end{equation}

Then we substitute these into recursion \eqref{eq:recursion} and get a
new recursion for $\Pr[G,L]{c(v)=i}$. We view this new recursion as
one step in the computation tree and analyze its correlation decay
property.  From the algorithmic point of view, this does not make much
difference but it do impact the analysis a lot. A similar situation
appeared in~\cite{liu2015fptas}, where one use the same algorithm to
compute the number of independent sets in bipartite graphs as in
general graphs. However, in that analysis, one combined two step of
the recursion, and viewed it as one single step in the computation
tree, and then analyze the contractive rate directly. Here, we analyze
the partial two-step recursion, where one only further expand the
variables for its first neighbor.

\section{Algorithm}
In this section, we describe our algorithm to estimate marginals.

The main idea of our algorithm to estimate $\Pr[G,L]{c(v)=i}$ is to
recursively apply recursions \eqref{eq:recursion} and
\eqref{eq:recursion2} up to some depth $D$. For the convenience of
analysis, we distinguish between cases, depending on the degree of $v$
and its neighbors.

\begin{itemize}
\item Our algorithm terminates in one of the following three boundary
  cases. (1) the color $i$ is not in the color list $L(v)$, i.e.,
  $i\not\in L(v)$, in which case we return $0$; (2) the recursion
  depth is zero, in which case we return $\frac{1}{\abs{L(v)}}$ and
  (3) the degree of $v$ in $G$ is zero, i.e. $v$ is an isolated
  vertex, in which case we return $\frac{1}{\abs{L(v)}}$.
\item If the degree of $v$ in $G$ is one, the algorithm branches into
  three cases according to the size of $L(v)$. In the case of
  $\abs{L(v)}=2$, we directly apply recursion \eqref{eq:recursion}. In
  the case of $\abs{L(v)}=4$, note that the sum of the marginal
  probabilities of colors $j\in L(v)$ on $v_1$ in $G_v$ is $1$, the
  denominator of the recursion \eqref{eq:recursion} becomes a constant
  $3$. For the same reason, in the case of $\abs{L(v)}=3$, we can
  denote the denominator of the recursion \eqref{eq:recursion} by
  $2+y$, where $y$ is the marginal probability of color
  $j\in[4]\setminus L(v)$ (the absent color) on $v_1$ in $G_v$.
\item If the degree of $v$ in $G$ is two or three, we faithfully apply
  recursion \eqref{eq:recursion} and \eqref{eq:recursion2} to estimate
  the marginals. In order to simplify the analysis, we use the
  following convention in the case of $\deg[G]{v}=2$: Let the
  neighbors of $v$ be $v_1,v_2$, then we always assume
  $\deg[G]{v_1}\ge\deg[G]{v_2}$ and if $\deg[G]{v_1}=\deg[G]{v_2}=1$,
  then $i\not\in L(v_1)$ implies $i\not\in L(v_2)$.
\end{itemize}

The whole algorithm is described below. We use procedure $P(G, L, v, i, D)$ to estimate $\Pr[G,L]{c(v)=i}$ up
to depth $D$.

\bigskip
\begin{algorithm}[H]\label{algo:marg-main}
  \caption{Estimate $\mathbf{Pr}_{G,L}{\left[c(v)=i\right]}$}
  \SetKwInOut{Input}{Input}\SetKwInOut{Output}{Output} \Input{Graph
    $G$; color lists $L$; vertex $v$; color $i$; recursion depth $D$;
  }
  \Output{$P \in [0,1]$: Estimate of $\Pr[G,L]{c(v)=i}$ up to depth
    $D$.}%
  \BlankLine
  \textbf{Function} $P(G,L,v,i,D)$\\
  \Begin{
    \If{$i \notin L(v)$}{ \Return{
        $0$}\;}%
    \If{$D \leq0$}{ \Return{ $\frac{1}{|L(v)|}$}\;}%
    \If{$\deg[G]{v}=0$}{ \Return{ $\frac{1}{\abs{L(v)}}$}\;}%
    \If{$\deg[G]{v}=1$}{ \Return{$P1(G,L,v,i,D)$}\;}
    \If{$\deg[G]{v}=2$}{ \Return{$P2(G,L,v,i,D)$}\;}
    \If{$\deg[G]{v}=3$}{\Return{$P3(G,L,v,i,D)$}\;}
  }

\end{algorithm}
The procedures $P1(G,L,v,i,D)$, $P2(G,L,v,i,D)$ and $P3(G,L,v,i,D)$
deal with the case of $\deg[G]{v}=1$, $\deg[G]{v}=2$ and
$\deg[G]{v}=3$ respectively.

\paragraph{Case $\deg[G]{v}=1$:}~

\begin{algorithm}[H]\label{algo:marg-deg1}
  \caption{Estimate $\mathbf{Pr}_{G,L}{\left[c(v)=i\right]}$ when $\mathrm{deg}_G(v)=1$}
  \textbf{Function} $P1(G, L, v, i, D)$\\
  \Begin{ \tcc{the vertex $v$ has only one neighbor
      $v_1$.}  \BlankLine \tcc{If the $L(v)=\set{i,j}$.}
    \If{$\abs{L(v)}=2$}{ $x\gets P(G_v,L_{1,i},v_1,i,D-1)$\;
      $y\gets P(G_v,L_{1,j},v_1,j,D-1)$\;
      \Return{$\frac{1-x}{2-x-y}$}\; } \If{$\abs{L(v)}=4$}{
      $x\gets P(G_v,L_{1,i},v_1,i,D-1)$\; \Return{$\frac{1-x}{3}$}\; }
    \tcc{in the following case, $\abs{L(v)}=3$.}
    \If{$i\in L(v_1)$}{ Let $j$ be the color in the singleton set
      $[4]\setminus L(v)$\; $x\gets P(G_v,L_{1,i},v_1,i,D-1)$\;
      $y\gets P(G_v,L_{1,j},v_1,j,D-1)$\; \Return{$\frac{1-x}{2+y}$}\;
    } }
\end{algorithm}

\paragraph{Case $\deg[G]{v}=2$:}~

\begin{algorithm}[H]\label{algo:marg-deg2}
  \textbf{Function} $P2(G, L, v, i, D)$\\
  \Begin{
    \tcc{the vertex $v$ has two neighbors $\set{v_1,v_2}$ with
      $\deg[G]{v_1}\ge\deg[G]{v_2}$; the vertex $v_1$ has neighbors
      $\set{u_1,\dots,u_{d_1}}$ in the graph $G_v$. We also assume
      that if $\deg[G]{v_1}=\deg[G]{v_2}=1$, then $i\not\in L(v_1)$
      implies $i\not\in L(v_2)$.}%
    \For{$j\in L(v)$}{ \If{$j\not\in L(v_1)$}{ $f_j\gets 0$\; } \Else{
        \For{$k\in[d_1]$}{ \For{$w\in L(v_1)$}{
            $x_{k,w}\gets P(G_{v,v_1},L'_{k,w},u_k,w,D-1)$\; } }
        $f_j\gets\frac{\prod_{k=1}^{d_1}\tp{1-x_{k,j}}}{\sum_{w\in
            L(v_1)}\prod_{k=1}^{d_1}\tp{1-x_{k,w}}}$\; }
      $y_j\gets P(G_v,L_{2,j},v_2,j,D-1)$\; }
    \Return{$\frac{(1-f_i)(1-y_i)}{\sum_{j\in L(v)}(1-f_j)(1-y_j)}$\;}
    
  }
  \caption{Estimate $\mathbf{Pr}_{G,L}{\left[c(v)=k\right]}$ when $\mathrm{deg}_{G}\tp{v}=2$}
\end{algorithm}

\paragraph{Case $\deg[G]{v}=3$:}~

\begin{algorithm}[H]\label{algo:marg-deg3}
  \textbf{Function} $P3(G, L, v, i, D)$\\
  \Begin{
    \tcc{the vertex $v$ has three neighbors $\set{v_1,v_2,v_3}$.}%
    \For{$j\in L(v)$}{ $x_j\gets P(G_v,L_{1,j},j,D)$\;
      $y_j\gets P(G_v,L_{2,j},j,D)$\; $z_j\gets P(G_v,L_{3,j},j,D)$\;
    }
    \Return{$\frac{(1-x_i)(1-y_i)(1-z_i)}{\sum_{j\in
          L(v)}(1-x_j)(1-y_j)(1-z_j)}$}\; }
  \caption{Estimate $\mathbf{Pr}_{G,L}{\left[c(v)=i\right]}$ when $\mathrm{deg}_G\tp{v}=3$}
\end{algorithm}

\begin{proposition}\label{prop:sumone}
  Let $q=4$. Given a list-coloring instance $(G=(V,E),L)$ with maximum
  degree 3, a vertex $v \in V$ satisfying $\deg[G]{v} \le 2$ and
  $|L(v)| \ge \deg[G]{v}+2$, a nonnegative integer $D$, we have
  \[ \sum_{i=1}^{4} P(G,L,v,i,D) = 1 \]
\end{proposition}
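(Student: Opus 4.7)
The plan is induction on $D$, with case analysis driven by $\deg_G(v)$. I would first handle the boundary cases: for $D = 0$ and for $\deg_G(v) = 0$, the function returns $\frac{1}{|L(v)|}$ for each $i \in L(v)$ and $0$ otherwise, so the sum is trivially $1$. Next, the case $\deg_G(v) = 2$ is also immediate and requires no appeal to induction: procedure $P2$ returns values of the form $\frac{(1-f_i)(1-y_i)}{\sum_{j \in L(v)}(1-f_j)(1-y_j)}$ for $i \in L(v)$, which sum to $1$ by construction of the normalization, while $i \notin L(v)$ contributes $0$ at the top level.

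The substantive case is $\deg_G(v) = 1$, where the hypothesis $|L(v)| \ge 3$ leaves two sub-cases $|L(v)| \in \{3,4\}$. A key observation I would exploit is that $L_{1,j}$ equals $L$ for every color $j$, because the definition \eqref{eq:list} only modifies lists of neighbors $v_\ell$ with $\ell < k = 1$. Hence in both sub-cases all recursive calls on $v_1$ in $G_v$ are of the form $P(G_v, L, v_1, j, D-1)$. When $|L(v)| = 4$, $P1$ returns $\frac{1 - x_i}{3}$, so the sum becomes $\frac{4 - \sum_{j=1}^{4} x_j}{3}$; I would invoke the inductive hypothesis on $v_1$ in $G_v$ to conclude $\sum_j x_j = 1$, yielding $1$. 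When $|L(v)| = 3$ with absent color $j$, the returned values are $\frac{1 - x_i}{2 + y}$ for $i \in L(v)$ and $0$ for $i = j$; the inductive hypothesis gives $x_1 + x_2 + x_3 + y = 1$, whence the sum equals $\frac{3 - (1-y)}{2 + y} = 1$.

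The main obstacle, and the point I would want to treat carefully, is verifying that the recursive call on $v_1$ in $G_v$ actually meets the hypotheses of the proposition, so that the inductive hypothesis is applicable. Concretely, one needs $\deg_{G_v}(v_1) \le 2$ and $|L(v_1)| \ge \deg_{G_v}(v_1) + 2$. The first is clear since $\deg_{G_v}(v_1) = \deg_G(v_1) - 1 \le 2$. The second uses the stronger local list-size condition $|L(u)| \ge d_u + 1$ highlighted in the preliminaries as a standing property that is preserved throughout the computation tree; combined with the fact that $L(v_1)$ is unchanged under removal of $v$, this gives $|L(v_1)| \ge \deg_G(v_1) + 1 = \deg_{G_v}(v_1) + 2$. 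Once this is in hand, the induction closes without any further case analysis.
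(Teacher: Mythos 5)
Your proof is correct and follows essentially the same route as the paper: induction on $D$, handling $\deg_G(v)\in\{0,1,2\}$ separately, with the $\deg_G(v)=1$ case hinging on the observation that $L_{1,j}=L$ for every color $j$ so that the induction hypothesis at $v_1$ gives $\sum_j x_j = 1$. The one genuine addition is your explicit check that $(G_v, L, v_1)$ satisfies the proposition's hypotheses (degree and list-size) so that the induction is actually applicable; the paper takes this for granted, so your version is slightly more careful but not substantively different.
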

\begin{proof}
  We will prove by induction on $D$. When $D=0$ we have
  $\sum_{i=1}^{4}P(G,L,v,i,0)=\sum_{i \in
    L(v)}\frac{1}{|L(v)|}=1$. Suppose the proposition holds for
  $D-1$. To obtain the proof for $D$, we will discuss on degree of
  $v$.
  \begin{enumerate}
  \item $\deg[G]{v}=0$.

    Clearly
    $\sum_{i=1}^{4}P(G,L,v,i,D)=\sum_{i \in L(v)}\frac{1}{|L(v)|}=1$.
	
  \item $\deg[G]{v}=1$.

    Let $x_i=P(G_v,L_{1,i},v_1,i,D-1)$. By definition we have
    $L_{1,i}=L$ for all $i \in [4]$. Therefore
    $\sum_{i=1}^{4}x_i=\sum_{i=1}^{4}P(G_v,L,v_1,i,D-1)=1$ by
    induction hypothesis.
	
    If $|L(v)|=4$,
    $\sum_{i=1}^{4}P(G,L,v,i,D)=\sum_{i=1}^{4}\frac{1-x_i}{3}=\frac{4-1}{3}=1$.
	
    If $|L(v)|=3$, assume $j \notin L(v)$. Then
    $\sum_{i=1}^{4}P(G,L,v,i,D)=\sum_{i \in
      L(v)}\frac{1-x_i}{2+x_j}=\frac{3-(1-x_j)}{2+x_j}=1$.
	
  \item $\deg[G]{v}=2$.

    In this case $|L(v)|=4$. So
    $\sum_{i=1}^{4}P(G,L,v,i,D)=\sum_{i=1}^{4}\frac{(1-f_i)(1-y_i)}{\sum_{j
        \in
        L(v)}(1-f_j)(1-y_j)}=\frac{\sum_{i=1}^{4}(1-f_i)(1-y_i)}{\sum_{j=1}^{4}(1-f_j)(1-y_j)}=1$.
  \end{enumerate}
\end{proof}

Using the same proof, we can also have $\sum_{j=1}^{4}f_j=1$, where
$f_j$ is defined in Algorithm \ref{algo:marg-deg2}.

\bigskip
We conclude this section with the following lemma, whose proof is postponed to Section~\ref{sec:proofmain}.
\begin{lemma}\label{lem:approx}
  Let $q=4$. There exists an algorithm such that for every
  list-coloring instance $(G,L)$ with $G=(V,E)$ and maximum degree at
  most three, every vertex $v\in V$, every coloring $i\in L(v)$ and
  every $0<\eps<1$, it computes a number $\hat p$ in time
  $\mathrm{poly}\!\tp{\abs{V},\frac{1}{\eps}}$ satisfying
  \[
    (1-\eps)\hat p\le\Pr[G,L]{c(v)=i}\le (1+\eps)\hat p.
  \]
\end{lemma}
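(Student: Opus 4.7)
The plan is to run the recursive procedure $P(G,L,v,i,D)$ from Section~3 with recursion depth $D=\Theta(\log(\abs{V}/\eps))$ and to argue that, thanks to an exponential correlation decay property of the two-layer recursion, the returned value $\hat p$ lies within a $(1\pm\eps)$-multiplicative factor of $\Pr[G,L]{c(v)=i}$. Since each recursive call branches to a bounded number of subcalls (at most $dq^2$ with $d\le 3$, $q=4$), the computation tree has size $O(C^D)=\mathrm{poly}(\abs{V},1/\eps)$ for some constant $C$, so the running time is polynomial.

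For correctness, I would introduce a smooth, possibly non-monotone, potential function $\Phi$ on the interval of feasible marginals and measure the per-step approximation error in the transformed scale $\Phi(p)$. Let $F$ denote the map obtained by substituting \eqref{eq:recursion2} into \eqref{eq:recursion}, so that it expresses the target marginal at $v$ as a function of the grandchildren marginals at the $u_k$'s together with the ordinary children marginals at $v_2,v_3$. The key analytic statement I aim to prove is an amortized contraction inequality of the form
\[
  \sum_{t}\abs{\pderiv{\Phi\tp{F}}{x_t}}\cdot\frac{1}{\Phi'(x_t)}\le\alpha
\]
for some absolute constant $\alpha<1$ and all feasible inputs $x_t$ arising in the combined recursion. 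Iterating this bound along the computation tree gives
\[
  \abs{\Phi(\hat p)-\Phi\tp{\Pr[G,L]{c(v)=i}}}\le\alpha^D\cdot C_0,
\]
and translating back via the Lipschitz constant of $\Phi^{-1}$ yields the desired $(1\pm\eps)$ multiplicative guarantee once $D=\Theta(\log(\abs{V}/\eps))$.

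The main obstacle will be verifying the amortized contraction. Here I would exploit the structural features emphasized in Sections~1.1 and~2: (i) for every non-root instance $\Pr[G',L']{c(u)=j}$ in the computation tree one has $\abs{L'(u)}\ge d_u+2$, which restricts each child marginal to $[0,1/2]$ rather than the trivial $[0,1]$; (ii) the \emph{alternative argument} isolates the boundary values $0$ and $1/2$ — the algorithm detects them and computes exact answers in those cases, so such degenerate variables enter the contraction only as fixed parameters, allowing the remaining free variables to be bounded in strictly better ranges; (iii) the partial two-layer expansion groups the $\abs{L(v_1)}$ marginals on the first neighbor into a single joint recursion over only $d_1 q$ variables instead of the naive $d_1 q^2$, sharply reducing the effective branching. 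The contraction must then be established case by case, corresponding to the branches of $P1$, $P2$, and $P3$ and to the allowed degree sequences around $v$. Each case reduces to a bounded-dimensional analytic optimization over the refined ranges; ultimately I expect this to boil down to a finite collection of polynomial inequalities, checkable in closed form or by interval arithmetic. Choosing a suitable $\Phi$ — numerically guided and, as the authors stress, crucially non-monotone — so that all the case inequalities hold simultaneously with a single $\alpha<1$ is the technical heart of the argument.

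Once the contraction is in hand, the remainder of the proof is routine. The base cases at depth $D=0$, at isolated vertices, and at $i\notin L(v)$ return either an exact value or the uniform guess $1/\abs{L(v)}$, whose $\Phi$-scale error is bounded by an absolute constant $C_0$. Choosing $D=\lceil\log_{1/\alpha}(C_0/\eps')\rceil$ with $\eps'$ small enough that the Lipschitz translation back to multiplicative error on the original scale produces $\eps$, completes the approximation guarantee. Since the branching is bounded and $D=O(\log(\abs{V}/\eps))$, the total running time is $\mathrm{poly}(\abs{V},1/\eps)$, as required.
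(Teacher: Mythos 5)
Your proposal runs the recursion $P(G,L,v,i,D)$ to depth $D = \Theta(\log(|V|/\eps))$, invokes a potential-function contraction to bound the error in $\varphi$-scale, translates back to the original scale, and argues polynomial time from bounded branching. That is indeed the paper's approach, and the bulk of your proposal is devoted to sketching the contraction argument, which corresponds to Theorem~\ref{thm:cd} and Lemma~\ref{lem:meanvalue}, proved in a separate section.

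The one genuine gap in your write-up is the passage from an additive error bound to the required multiplicative guarantee $(1-\eps)\hat p \le \Pr[G,L]{c(v)=i} \le (1+\eps)\hat p$. Translating the $\varphi$-scale contraction back via the mean value theorem (using $\Phi(x) = \frac{1}{x(\frac12 - x)} \ge 16$, so $\varphi^{-1}$ is $\frac{1}{16}$-Lipschitz) only yields an \emph{additive} bound $\abs{\hat p - \Pr[G,L]{c(v)=i}} \le C\lambda^{D-3}$. To turn this into a multiplicative bound, you must rule out the possibility that $\Pr[G,L]{c(v)=i}$ is arbitrarily close to $0$ while being nonzero. The paper's proof of this lemma devotes its first paragraph to precisely this: by Theorem~\ref{thm:consistency} the algorithm returns $0$ exactly when the true marginal is $0$, and otherwise one shows via Proposition~\ref{prop:bd1} (for reachable instances) together with a direct estimate on the recursion for the root instance that $\Pr[G,L]{c(v)=i} \ge \frac{1}{25}$. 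Only with this floor can the choice of $t$ (smallest integer with $C \lambda^{t-3} \le \frac{\eps}{25}$) produce the multiplicative guarantee. Without an explicit lower bound on the true marginal, your phrase "translating back via the Lipschitz constant of $\Phi^{-1}$ yields the desired $(1\pm\eps)$ multiplicative guarantee" does not follow. Everything else — the choice of $D$, the branching-factor argument for polynomial time — matches the paper's Lemmas~\ref{lem:time-pre} and~\ref{lem:time} in substance.
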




\section{Bounds}
In this section, we introduce upper and lower bounds for values
computed in the algorithm. These bounds will play a crucial role in our
proof.

\begin{definition}
  We call a a triple $(G=(V,E),L, v\in V)$ (a list-coloring instance
  together with a vertex in the graph) \textbf{reachable} if the following
  condition is satisfied: $\deg[G]{u}\le 3$ and
  $|L(u)|\geq \deg[G]{u}+1$ for every $u\in V$, $\deg[G]{v}\le 2$ and
  $\abs{L(v)}\ge \deg[G]{v}+2$.
\end{definition}

It follows from the discussion in Section \ref{sec:prelim} that for all the
probability $\Pr[G,L]{c(v)=i}$ appeared in the computation tree except
the root, $(G,L,v)$ is reachable.

\begin{proposition}\label{prop:bd0}
  Let $(G,L,v)$ be reachable, $i\in[4]$ be a color , and $D$ be a
  nonnegative integer. Then it holds that
  \[ 0 \le P(G,L,v,i,D) \le \frac{1}{2}. \]
\end{proposition}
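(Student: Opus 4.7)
The plan is induction on $D$ with a case split on $\deg[G]{v}$. Nonnegativity of $P(G,L,v,i,D)$ is immediate from every branch of the algorithm, so the substance of the proposition is the upper bound $\tfrac{1}{2}$. The base case $D=0$ and the $\deg[G]{v}=0$ branch both return $1/|L(v)|\le\tfrac12$ because reachability forces $|L(v)|\ge\deg[G]{v}+2\ge 2$. In the $\deg[G]{v}=1$ case reachability further pins $|L(v)|$ to $3$ or $4$ (so the $|L(v)|=2$ branch of $P1$ is unreachable at the root), and then the returned value is either $(1-x)/3\le\tfrac13$ or $(1-x)/(2+y)$ with numerator $\le 1$ and denominator $\ge 2$; only the trivial bounds $x,y\in[0,1]$ are needed here.

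The only nontrivial case is $\deg[G]{v}=2$, where reachability forces $|L(v)|=4$ and the return value is
\[
  p_i \;=\; \frac{(1-f_i)(1-y_i)}{\sum_{j\in L(v)}(1-f_j)(1-y_j)}.
\]
I would combine two facts. First, the remark immediately after Proposition~\ref{prop:sumone} gives $\sum_{j=1}^{4}f_j=1$, so $\sum_{j\ne i}(1-f_j)=2+f_i$. Second, each recursive call $P(G_v,L_{2,j},v_2,j,D-1)$ is on a reachable triple: in $G_v$ the vertex $v_2$ loses its edge to $v$ while $L_{2,j}(v_2)=L(v_2)$, so $|L_{2,j}(v_2)|\ge\deg[G]{v_2}+1\ge\deg[G_v]{v_2}+2$. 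The inductive hypothesis therefore yields $y_j\in[0,\tfrac12]$ and $1-y_j\ge\tfrac12$. Chaining,
\[
  \sum_{j\ne i}(1-f_j)(1-y_j) \;\ge\; \tfrac12\sum_{j\ne i}(1-f_j) \;=\; \tfrac12(2+f_i) \;\ge\; 1 \;\ge\; (1-f_i)(1-y_i),
\]
which rearranges to $2p_i\le 1$; the same chain also shows that the denominator is strictly positive, so the ratio is well defined.

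The step I expect to be most worth highlighting is the observation that \emph{no} nontrivial upper bound on $f_j$ is used in the argument. Since $f_j$ is computed by a direct formula (namely recursion \eqref{eq:recursion2} with the $x_{k,w}$'s plugged in) rather than by a recursive call to $P$, the inductive hypothesis does not apply to it; nevertheless, the identity $\sum_j f_j=1$ combined with $y_j\le\tfrac12$ already forces the denominator to dominate twice the numerator. The remaining reachability-preservation checks for the inner calls $P(G_{v,v_1},L'_{k,w},u_k,w,D-1)$ defining $f_j$ follow the same template as the check for $v_2$: the vertex $u_k$ loses its edge to $v_1$ in $G_{v,v_1}$ while $L'_{k,w}(u_k)=L(u_k)$, so $|L'_{k,w}(u_k)|\ge\deg[G]{u_k}+1\ge\deg[G_{v,v_1}]{u_k}+2$. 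These checks are purely routine and are what make the induction actually go through.
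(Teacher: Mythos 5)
Your proof is correct and follows the paper's own structure: induction on $D$ with a case split on $\deg[G]{v}$, the base cases of $D=0$ and $\deg[G]{v}=0$ handled by $1/|L(v)| \le 1/2$, the $\deg[G]{v}=1$ case by the trivial bounds $x,y\ge 0$, and the $\deg[G]{v}=2$ case by combining $\sum_j f_j = 1$ with the inductive bound $y_j \le 1/2$ on the recursive calls at $v_2$. The reachability-preservation checks you carry out (list sizes unchanged for $v_2$ and $u_k$ while their degrees drop by one) are exactly the ones the induction needs and match the discussion at the end of Section~\ref{sec:prelim}.

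The one genuine departure from the paper is worth flagging. The paper's proof of the $\deg[G]{v}=2$ case first establishes $f_j \le \tfrac{1}{2}$ by a sub-case split on $d_1\in\{0,1,2\}$ (the $d_1=2$ sub-case needing the identity $\sum_{w\in L(v_1)}x_{1,w}=1$ together with $x_{2,w}\le\tfrac12$), and only then bounds $P$ ``similarly.'' You correctly observe that this intermediate bound on $f_j$ is never actually used in the final inequality: once you know $\sum_j f_j=1$ and $1-y_j\ge\tfrac12$ for $j\ne i$, the chain
$\sum_{j\ne i}(1-f_j)(1-y_j) \ge \tfrac12 (2+f_i) \ge 1 \ge (1-f_i)(1-y_i)$
already closes the argument, and $f_j$ could in principle be anything in $[0,1]$. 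Your proof therefore avoids the $d_1$ sub-case split entirely. Two caveats to keep in mind when reading the paper against this: (i) your observation that the induction hypothesis ``does not apply'' to $f_j$ is exactly right and is precisely why the paper felt the need to argue for $f_j\le\tfrac12$ separately; (ii) the bound $f_j\le\tfrac12$ is not dead weight in the paper as a whole --- it is cited explicitly in the proof of Proposition~\ref{prop:bd1} (``The upper bound $\tfrac12$ for $f_j$ and $y_j$ is guaranteed by Proposition~\ref{prop:bd0}''), so the authors establish it here precisely because they will lean on it downstream, even though, as you show, it is superfluous for Proposition~\ref{prop:bd0} itself.
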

\begin{proof}
  We prove by induction on $D$. For base case, $P(G,L,v,i,D)$ will
  return $\frac{1}{|L(v)|}$ if $D=0$, so the proposition holds since
  $|L(v)| \ge 2$.
	
  Suppose the proposition holds for $D-1$. We discuss on degree of
  $v$.
  \begin{enumerate}
  \item $\deg[G]{v}=0$.
		
    In this case $P(G,L,v,i,D)$ will return $\frac{1}{|L(v)|}$ where
    $|L(v)| \ge \deg[G]{v}+2 = 2$, hence we have
    $0 \le P(G,L,v,i,D) \le \frac{1}{2}$.
		
  \item $\deg[G]{v}=1$.
		
    Let $x=P(G_v,L_{1,i},v_1,i,D-1)$ and $y=P(G_v,L_{1,j},j,D-1)$, as
    defined in Algorithm~\ref{algo:marg-deg1}. Then
    $0 \le x,y \le \frac{1}{2}$ by induction hypothesis.
		
    According to algorithm, $P(G,L,v,i,D)$ will return $\frac{1-x}{3}$
    or $\frac{1-x}{2+y}$, and in both cases this return value is
    bounded by $\frac{1}{2}$ given $x,y \ge 0$.
		
  \item $\deg[G]{v}=2$.
		
    Let $f_j$, $x_{k,w}$ and $y_j$ be the variables defined in
    Algorithm~\ref{algo:marg-deg2}. By induction hypothesis we have
    $0 \le x_{k,w},y_j \le \frac{1}{2}$. As for $f_j$, we need to
    further discuss on $d_1$.
		
    If $d_1 \in \set{0,1}$ then $f_j \le \frac{1}{2}$ immediately
    follows, as we have already seen in previous two cases. If
    $d_1=2$, we also have
    \begin{align*}
      f_j &= \frac{\prod_{k=1}^{2}(1-x_{k,j})}{\sum_{w \in L(v_1)}\prod_{k=1}^{2}(1-x_{k,w})} \\
          &\le \frac{1-x_{1,j}}{(1-x_{1,j})+\frac{1}{2}\sum_{w \in L(v_1) \setminus \set{j}}(1-x_{1,w})} \\
          &= \frac{1-x_{1,j}}{(1-x_{1,j})+\frac{1}{2}\tuple{2+x_{1,j}}} \\
          &\le \frac{1}{2}.
    \end{align*}
    Here we used the fact that $\sum_{w \in L(v_1)}x_{1,w}=1$, since
    $|L(v_1)|=4$ when $d_1=2$. Similarly we have
    \[ P(G,L,v,i,D) = \frac{(1-f_i)(1-y_i)}{\sum_{j \in
          L(v)}(1-f_j)(1-y_j)} \le \frac{1}{2}. \]
		
  \end{enumerate}
\end{proof}

\begin{proposition}\label{prop:bd1} Let $(G,L,v)$ be reachable,
  $i\in L(v)$ be a color , and $D$ be a nonnegative integer. Then it
  holds that
  \begin{enumerate}[(1)]
  \item if $\deg[G]{v}=2$, then
    \[ P(G,L,v,i,D) \ge \frac{1}{13}; \]
  \item if $\deg[G]{v}\le1$, then
    \[ P(G,L,v,i,D) \ge \frac{1}{6}. \]
  \end{enumerate}
\end{proposition}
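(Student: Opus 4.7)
The plan is to prove both lower bounds by induction on the recursion depth $D$, in parallel with and making use of the upper bound $P \le 1/2$ from Proposition~\ref{prop:bd0}.

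For the base case $D=0$, the algorithm returns $1/|L(v)|$; since $|L(v)| \le q = 4$, this is at least $1/4$, which exceeds both $1/6$ and $1/13$. For the inductive step I would split on $\deg_G(v)$. The $\deg_G(v)=0$ case is immediate since the algorithm again returns $1/|L(v)| \ge 1/4$. For $\deg_G(v)=1$, reachability enforces $|L(v)| \ge 3$, ruling out the $|L(v)|=2$ branch of Algorithm~\ref{algo:marg-deg1}; the return value is either $(1-x)/3$ (when $|L(v)|=4$) or $(1-x)/(2+y)$ (when $|L(v)|=3$). Since $x,y \in [0,1/2]$ by Proposition~\ref{prop:bd0}, these evaluate to at least $1/6$ and $1/5$ respectively.

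The interesting case is $\deg_G(v)=2$, where $|L(v)|=4$ and the return value is $\frac{(1-f_i)(1-y_i)}{\sum_{j\in L(v)}(1-f_j)(1-y_j)}$. Rewriting this as
\[ P(G,L,v,i,D) \;=\; \left(1 + \sum_{j \in L(v) \setminus \{i\}}\frac{(1-f_j)(1-y_j)}{(1-f_i)(1-y_i)}\right)^{-1}, \]
I would apply Proposition~\ref{prop:bd0} to each $f_j$ and $y_j$, giving $(1-f_j)(1-y_j) \le 1$ for every $j$ and $(1-f_i)(1-y_i) \ge 1/4$. Each of the three summands is therefore at most $4$, so the denominator is at most $1 + 12 = 13$ and $P \ge 1/13$.

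The only non-trivial check is routine bookkeeping: one must verify that every recursive sub-call made inside Algorithms~\ref{algo:marg-deg1} and \ref{algo:marg-deg2} lands on a reachable triple, so that both Proposition~\ref{prop:bd0} and the inductive hypothesis of the present proposition apply. This was already observed in Section~\ref{sec:prelim}: the transformation $(G,L) \mapsto (G_v, L_{k,i})$ decreases each neighbor's degree by one without shrinking its list, preserving the reachability conditions at the new root vertex. Note that the identity $\sum_j f_j = 1$ recorded after Proposition~\ref{prop:sumone} is not needed here; the crude worst-case product bound already suffices for the claimed constants $1/6$ and $1/13$.
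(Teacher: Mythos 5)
Your proposal is correct and follows the paper's proof almost verbatim: in every branch it applies the $[0,\tfrac{1}{2}]$ bounds from Proposition~\ref{prop:bd0} to the recursive quantities and then does the elementary worst-case estimate (the degree-two case is the same fraction bound rewritten as a reciprocal). The only cosmetic difference is that you frame it as an induction on $D$, but in fact the inductive hypothesis for the present proposition is never invoked — Proposition~\ref{prop:bd0} already carries all the needed information about sub-calls, which is why the paper can do without any self-referential induction.
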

\begin{proof}
  If $\deg[G]{v}=0$ or $D=0$, we have
  $P(G,L,v,i,D)=\frac{1}{|L(v)|} \geq \frac{1}{4}$. In the following,
  we assume $D\geq 1$ and $\deg[G]{v}\geq 1$.  We discuss on degree of
  $v$.
  \begin{enumerate}[(1)]
  \item $\deg[G]{v}=2$

    It must be the case that $|L(v)|=4$. Therefore we have
    \begin{align*}
      P(G,L,v,i,D) &= \frac{(1-f_i)(1-y_i)}{(1-f_i)(1-y_i)+\sum_{j \in L(v) \setminus \set{i}}(1-f_j)(1-y_j)}  \\
                   &\ge \frac{\tp{1-\frac{1}{2}}^2}{\tp{1-\frac{1}{2}}^2+\sum_{j \in L(v) \setminus \set{i}}(1-0)} 
                   = \frac{\frac{1}{4}}{\frac{1}{4}+3} = \frac{1}{13}.
    \end{align*}
    The upper bound $\frac{1}{2}$ for $f_j$ and $y_j$ is guaranteed by
    Proposition \ref{prop:bd0}.
      
  \item $\deg[G]{v}=1$

    If $|L(v)|=4$,
    $P(G,L,v,i,D)=\frac{1-x}{3}\ge\frac{1-\frac{1}{2}}{3}=\frac{1}{6}$.
	
    If $|L(v)|=3$,
    $P(G,L,v,i,D)=\frac{1-x}{2+y}\ge\frac{1-\frac{1}{2}}{2+\frac{1}{2}}=\frac{1}{5}>\frac{1}{6}$.

  \end{enumerate}
\end{proof}

Note that overall we have lower bounds $\frac{1}{13}$ for
$P(G,L,v,i,D)$, regardless of the degree of $v$. Furthermore, for
$f_j$'s defined in Algorithm~\ref{algo:marg-deg2} we can draw a
similar conclusion: if $j \in L(v_1)$ then $f_j \ge \frac{1}{13}$.

\begin{proposition}\label{prop:bd2}
  Let $(G,L,v)$ be reachable and $D$ be a nonnegative integer. Then
  for every color $i\in[4]$ such that $i \in L(u)$ for some neighbor
  $u$ of $v$, we have
  \begin{enumerate}[(1)]
  \item if $\deg[G]{v}=2$, then
    \[ P(G,L,v,i,D) \le \frac{12}{25}; \]
  \item if $\deg[G]{v}=1$, then
    \[ P(G,L,v,i,D) \le \frac{6}{13}. \]
  \end{enumerate}
\end{proposition}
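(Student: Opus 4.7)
The plan is to split on $\deg[G]{v} \in \{1, 2\}$ (the hypothesis rules out $\deg[G]{v} = 0$) and, within each branch, invoke Propositions \ref{prop:bd0} and \ref{prop:bd1} together with the identity $\sum_{j=1}^{4} f_j = 1$ recorded after Proposition \ref{prop:sumone}. The base cases ($D = 0$) are immediate: $P$ collapses to $1/|L(v)|$, which is at most $1/3$ when $\deg[G]{v} = 1$ and equal to $1/4$ when $\deg[G]{v} = 2$, both comfortably within the required bounds.

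When $\deg[G]{v} = 1$, the sole neighbor is $v_1$, so the hypothesis forces $i \in L(v_1)$. Writing $x = P(G_v, L_{1,i}, v_1, i, D-1)$, the algorithm returns $(1-x)/3$ or $(1-x)/(2+y)$ depending on $|L(v)|$. One checks that $(G_v, L_{1,i}, v_1)$ is reachable and that $i \in L_{1,i}(v_1) = L(v_1)$, so Proposition \ref{prop:bd1} yields $x \ge 1/13$; combined with $y \ge 0$, this gives $P \le (1-x)/2 \le 6/13$.

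The substantive case is $\deg[G]{v} = 2$, where $|L(v)| = 4$ and
\[
P \;=\; \frac{(1-f_i)(1-y_i)}{\sum_{j \in L(v)}(1-f_j)(1-y_j)}.
\]
The hypothesis splits into two subcases. If $i \in L(v_1)$, then the explicit formula defining $f_i$, analyzed exactly as in the $\deg[G]{v}=2$ step of Proposition \ref{prop:bd1}, gives $f_i \ge 1/13$. If $i \in L(v_2)$, then Proposition \ref{prop:bd1} applied directly to the reachable triple $(G_v, L_{2,i}, v_2)$ (noting $i \in L_{2,i}(v_2) = L(v_2)$) gives $y_i \ge 1/13$. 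Either way, the numerator is at most $12/13$.

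The main obstacle is lower-bounding the denominator. The natural temptation is to mimic $\sum_j f_j = 1$ with an analogous identity for $y_j$, but that identity is \emph{false}: the $y_j$'s are marginals in distinct instances $(G_v, L_{2,j})$, so there is no shared-denominator cancellation and the argument cannot be symmetric in $f$ and $y$. Instead I will use only $\sum_{j=1}^{4} f_j = 1$ together with the universal bound $y_j \le 1/2$ from Proposition \ref{prop:bd0}:
\[
\sum_{j \ne i}(1-f_j)(1-y_j) \;\ge\; \tfrac{1}{2}\sum_{j \ne i}(1-f_j) \;=\; \tfrac{1}{2}(2+f_i) \;\ge\; 1.
\]
Thus the denominator exceeds $(1-f_i)(1-y_i) + 1$, and monotonicity of $t \mapsto t/(t+1)$ applied with the numerator bound $12/13$ delivers $P \le (12/13)/(12/13+1) = 12/25$. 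The subcase $i \in L(v_2) \setminus L(v_1)$, in which $f_i$ may be zero while the numerator saturates at $12/13$, shows that the bound $12/25$ is essentially tight and explains why a cruder symmetric argument would not suffice.
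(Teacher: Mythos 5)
Your proof is correct and follows essentially the same route as the paper: both invoke Proposition~\ref{prop:bd0} for the $1/2$ upper bounds, Proposition~\ref{prop:bd1} (and its $f_j$ corollary) to force $f_i \ge 1/13$ or $y_i \ge 1/13$, and the identity $\sum_j f_j = 1$ to control the denominator. The only difference is cosmetic: where the paper derives case-specific expressions $\tfrac{2(1-f_i)}{4-f_i}$ and $\tfrac{24(1-f_i)}{50-11f_i}$ and optimizes each separately, you discard the $f_i/2$ term to get the uniform denominator bound $\sum_{j\ne i}(1-f_j)(1-y_j)\ge 1$, which handles both subcases at once via monotonicity of $t\mapsto t/(t+1)$; this is a mild simplification, not a different argument. (One small omission: you should note that if $i\notin L(v)$ the algorithm returns $0$ and the bound is trivial.)
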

\begin{proof}
  If $i \notin L(v)$ then $P(G,L,v,i,D)$ returns 0 and we are done. If
  $D=0$ and $i\in L(v)$, then
  $P(G,L,v,i,0)= \frac{1}{|L(v)|}\leq \frac{1}{3}$ since $v$ have at
  least one neighbor.  In the following, we assume $i \in L(v)$ and
  $D\ge 1$.

  \begin{enumerate}[(1)]
  \item $\deg[G]{v}=2$

    If $i \in L(v_1)$, by Proposition~\ref{prop:bd1} we know that
    $f_i \ge \frac{1}{13}$.
    \begin{align*}
      P(G,L,v,i,D)
      &= \frac{(1-f_i)(1-y_i)}{(1-f_i)(1-y_i)+\sum_{j \in L(v) \setminus \set{i}}(1-f_j)(1-y_j)} \\
      &\le \frac{(1-f_i)(1-0)}{(1-f_i)(1-0)+\sum_{j \in L(v) \setminus \set{i}}(1-f_j)\tp{1-\frac{1}{2}}} \\
      &= \frac{2(1-f_i)}{2(1-f_i)+(3-\sum_{j \in L(v) \setminus \set{i}}f_j)} \\
      &= \frac{2(1-f_i)}{4-f_i} \le \frac{24}{51} < \frac{12}{25}.
    \end{align*}
    Here we used the fact that $\sum_{j \in L(v)}f_j=1$. On the other
    hand, if $i \in L(v_2)$ then $y_i \ge \frac{1}{13}$. So
    \begin{align*}
      P(G,L,v,i,D)
      &= \frac{(1-f_i)(1-y_i)}{(1-f_i)(1-y_i)+\sum_{j \in L(v) \setminus \set{i}}(1-f_j)(1-y_j)} \\
      &\le \frac{(1-f_i)\tp{1-\frac{1}{13}}}{(1-f_i)\tp{1-\frac{1}{13}}+\sum_{j \in L(v) \setminus \set{i}}(1-f_j)\tp{1-\frac{1}{2}}} \\
      &=
        \frac{\frac{12}{13}(1-f_i)}{\frac{12}{13}(1-f_i)+\frac{1}{2}\sum_{j
        \in L(v) \setminus \set{i}}(1-f_j)}
        = \frac{24(1-f_i)}{50-11f_i} \le \frac{12}{25}.
    \end{align*}

  \item $\deg[G]{v}=1$
	
    Clearly $i \in L(u)$ where $u$ is the only neighbor of $v$. So
    $x=P(G_v,L,u,i,D-1)\ge\frac{1}{13}$.
	
    If $|L(v)|=4$ then
    $P(G,L,v,i,D) = \frac{1-x}{3} < \frac{1}{3} < \frac{6}{13}$.
	
    If $|L(v)|=3$ then
    $P(G,L,v,i,D) = \frac{1-x}{2+y} \le \frac{1-\frac{1}{13}}{2} =
    \frac{6}{13}$.

  \end{enumerate}

\end{proof}

\begin{proposition}\label{prop:bd3} Let $(G,L,v)$ be reachable,
  $i\in[4]$ be a color. Assume $\deg[G]{v}=2$, then one of the following
  holds:
  \begin{enumerate}[(1)]
  \item the vertex $v$ and its two neighbors form a triangle in $G$;
  \item $\P(G,L,v,i,D) \le \frac{13}{27}$ for all integer $D\ge 2$.
  \end{enumerate}
\end{proposition}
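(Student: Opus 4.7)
The plan is a case split on whether $i$ lies in $L(v)$ and in the lists of $v$'s neighbors. If $i \notin L(v)$ the algorithm returns $0$ and (2) holds trivially. If $i \in L(v_1) \cup L(v_2)$ then Proposition \ref{prop:bd2} yields $P(G,L,v,i,D) \le 12/25$, and since $12 \cdot 27 = 324 < 325 = 13 \cdot 25$ we have $12/25 < 13/27$, so (2) holds in this subcase as well. The remaining and interesting case is $i \in L(v) \setminus (L(v_1) \cup L(v_2))$, in which the algorithm produces $f_i = y_i = 0$, whence
\[
  P(G,L,v,i,D) = \frac{1}{1 + B}, \qquad B := \sum_{j \in L(v) \setminus \{i\}} (1-f_j)(1-y_j),
\]
and the target $P \le 13/27$ becomes $B \ge 14/13$.

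To analyze $B$, I would expand the product and exploit the identity $\sum_{j \in [4]} f_j = 1$ noted right after Proposition \ref{prop:sumone}; combined with $f_i = 0$ this gives $\sum_{j \ne i} f_j = 1$, so
\[
  B = 2 - T + F, \qquad T := \sum_{j \ne i} y_j, \qquad F := \sum_{j \ne i} f_j y_j,
\]
and the goal simplifies to showing $T - F \le 12/13$. This is where the non-triangle assumption enters: since $v_1$ and $v_2$ are non-adjacent in $G$, every neighbor $w$ of $v_2$ in $G_v$ satisfies $L_{2,j}(w) = L(w)$, so the triple $(G_v, L_{2,j}, v_2)$ qualifies for Proposition \ref{prop:bd2} whenever some neighbor $w$ of $v_2$ in $G_v$ has $j \in L(w)$. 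When this holds, $y_j$ is bounded by $12/25$ or $6/13$ rather than the trivial $1/2$ from Proposition \ref{prop:bd0}.

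I would conclude by case analysis on the degree pair $(\deg[G]{v_1}, \deg[G]{v_2})$ and on which colors $j \in L(v) \setminus \{i\}$ appear in $L(v_1)$, $L(v_2)$, and in the neighbor lists of $v_2$ in $G_v$. In each subcase one combines the sharp $y_j$ bound from Proposition \ref{prop:bd2} with $\sum_j f_j = 1$ and the coarse bounds $f_j, y_j \in [0,1/2]$ of Proposition \ref{prop:bd0} to verify $T - F \le 12/13$. The main obstacle is the residual subcase in which a color $j \ne i$ lies in $L(v_2)$ but is absent from every neighbor list of $v_2$ in $G_v$; there Proposition \ref{prop:bd2} gives nothing better than $1/2$, and one has to unfold the recursion for $y_j$ one more level, noting that the numerator $\prod_k(1 - x_k^{(j)}) = 1$ since $j$ does not conflict with any neighbor of $v_2$, and then use the list-size constraints $|L(w)| \ge \deg[G]{w}+1$ together with the non-triangle hypothesis to pin $y_j$ down tightly enough for the inequality $T - F \le 12/13$ to survive.
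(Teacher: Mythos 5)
Your high-level plan is the same as the paper's: split off $i\notin L(v)$ and $i\in L(v_1)\cup L(v_2)$ via Proposition~\ref{prop:bd2}, then treat the remaining case $f_i=y_i=0$ where the non-triangle hypothesis lets $(G_v,L_{2,j},v_2)$ be a reachable instance for Proposition~\ref{prop:bd2}. Your algebraic reduction to $T-F=\sum_{j\ne i}(1-f_j)y_j\le 12/13$ (using $\sum_j f_j=1$) is correct and is in fact a clean way to see why $y_j\le 6/13$ for every $j\in L(v)\setminus\{i\}$ suffices: it then gives $T-F\le(6/13)\cdot 2=12/13$ exactly. Up to this point your proposal matches the paper.

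However, the last paragraph is where the actual proof lives, and your sketch there has real gaps. You do not treat $\deg[G_v]{v_2}=0$ at all, where $y_j=1/|L(v_2)|$ can equal $1/2$ with no recursion to unfold; the paper handles this by a direct two-case computation on $|L(v_2)|\in\{2,3\}$. In the $\deg[G_v]{v_2}=1$ subcase where some $j\in L(v_2)\setminus L(u_2)$, you claim one \emph{pins $y_j$ down} by noting the numerator becomes $1$, but that direction of reasoning is backwards: the numerator being $1$ makes $y_j$ \emph{large}, not small. What rescues the bound, when it does, is the denominator $2+t_1$ with $t_1\ge 1/6$ coming from Proposition~\ref{prop:bd1} applied to the absent color $1$ on $u_2$ — and that in turn requires the non-obvious fact, hidden in the list-size constraint, that $\deg[G_{v,v_2}]{u_2}\le 1$ is forced once some color is missing from $L(u_2)$. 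Moreover, when $1\notin L(u_2)$ the quantity $y_j$ does genuinely equal $1/2$ and cannot be pinned down at all; the inequality $T-F\le 12/13$ then survives only because the \emph{other} two $y$'s drop to $1/4$, which your sketch neither notices nor exploits. So while your reformulation is a viable scaffold, the case analysis needed to close it out is essentially the whole of the paper's proof and is not supplied by the proposal.
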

\begin{proof}
  Without loss of generality we assume $i=1$.  Denote by $v_1$ and
  $v_2$ the two neighbors of $v$ in $G$. We only need to consider the
  case when $1 \notin L(v_1)$ and $1 \notin L(v_2)$, i.e. $f_1=y_1=0$,
  since otherwise by Proposition~\ref{prop:bd2} we immediately have
  $P(G,L,v,1,D) \le \frac{12}{25} < \frac{13}{27}$. In this case $v_1$
  and $v_2$ each only have up to one neighbor in $G_v$, which we will
  denote by $u_1$ and $u_2$ respectively. We now continue to discuss
  in two cases.
  \begin{enumerate}[(1)]
  \item $\deg[G_v]{v_2}=0$

    According to the algorithm $y_j=\frac{1}{|L(v)|}$ for
    $j \in L(v_2)$ and $y_j=0$ for $j \notin L(v_2)$.
	
    If $|L(v_2)|=3$ then $y_1=0$ and $y_2=y_3=y_4=\frac{1}{3}$. We
    have
    \begin{align*}
      P(G,L,v,i,D) &= \frac{1}{1+\sum_{j \in L(v) \setminus \set{1}}(1-f_j)(1-y_j)} \\
                   &= \frac{1}{1+\frac{2}{3}\tp{3-\sum_{j \in L(v) \setminus \set{1}}f_j}} \\
                   &= \frac{1}{1+\frac{4}{3}} 
                   = \frac{3}{7} < \frac{13}{27}.	
    \end{align*}
	
    If $|L(v_2)|=2$ we can assume $2 \notin L(v_2)$, thus $y_1=y_2=0$
    and $y_3=y_4=\frac{1}{2}$. We have
    \begin{align*}
      P(G,L,v,i,D) &= \frac{1}{1+\sum_{j \in L(v) \setminus \set{i}}(1-f_j)(1-y_j)} \\
                   &\le \frac{1}{1+(1-f_2)(1-0)+(1-f_3)(1-\frac{1}{2})+(1-f_4)(1-\frac{1}{2})} \\
                   &= \frac{1}{1+\frac{1}{2}(1-f_2)+\frac{1}{2}\sum_{j \in L(v) \setminus \set{1}}(1-f_j)} \\
                   &= \frac{1}{2+\frac{1}{2}(1-f_2)}
                     \le \frac{4}{9} < \frac{13}{27}.
    \end{align*}
	
  \item $\deg[G_v]{v_2}=1$.

    Since $v$, $v_1$ and $v_2$ do not form a triangle,
    $L_{2,k}(u_2)=L(u_2)$ for all color $k$, thus it follows from
    Proposition~\ref{prop:bd2} that for every
    $j\in L_{2,j}(u_2)=L(u_2)$, we have
    \[
      y_j = P(G_v,L_{2,j},v_2,j,D-1) \le \frac{6}{13}.
    \]
    So if $L(v_2) \subseteq L(u_2)$ we have
    \begin{align*}
      P(G,L,v,1,D)
      &=  \frac{1}{1+\sum_{j \in L(v) \setminus \set{1}}(1-f_j)(1-y_j)} \\
      &\le  \frac{1}{1+\tp{1-\frac{6}{13}}\sum_{j \in L(v) \setminus \set{1}}(1-f_j)} \\
      &=  \frac{1}{1+\frac{7}{13} \cdot 2}
        =  \frac{13}{27}.
    \end{align*}
    On the other hand, consider $L(v_2)\not\subseteq L(u_2)$. Notice
    that $1 \notin L(v_2)$ so there should be some other color, say
    color 2, satisfying $2 \in L(v_2) \setminus L(u_2)$. This also
    forces $\deg[G_{v,v_2}]{u_2} \le 1$. Let
    \[
      t_{kj} \defeq P(G_{v,v_2},L_{2,k},u_2,j,D-2),
    \]
    where $G_{v,v_2}\defeq \tuple{G_v}_{v_2}$, i.e., the graph
    obtained from $G$ by removing $v$ and $v_2$ and all edges incident
    to them. Since $2 \notin L(u_2)$ we have $t_{k2}=0$ for all
    $k$. We need to further distinguish between two cases.
    \begin{enumerate}[(i)]
    \item $1 \in L(u_2)$.

      Recall that $L_{2,k}(u_2)=L(u_2)$ so $\forall k \in L(v)$,
      $1 \in L_{2,k}(u_2)$. Combining $\deg[G_{v,v_2}]{u_2} \le 1$, by
      Proposition~\ref{prop:bd1} we have $\forall k \in L(v)$,
      $t_{k1} \ge \frac{1}{6}$. Specifically we have
      $t_{21} \ge \frac{1}{6}$. Now $1$ is the color in singleton set
      $[4] \setminus L(v_2)$, so according to Algorithm~\ref{algo:marg-deg1}
      \[ y_2 = \frac{1-t_{22}}{2+t_{21}} \le \frac{6}{13}. \] As a
      consequence, we again have $y_j \le \frac{6}{13}$ for all
      $j \in L(v_2)$ and the theorem follows.

    \item $1 \notin L(u_2)$.  In this case $u_2$ is isolated in
      $G_{v,v_2}$ with color list $\set{3,4}$.  So it is clear that
      $t_{k1}=t_{k2}=0$ and $t_{k3}=t_{k4}=\frac{1}{2}$ for every $k$.
      Further we have $y_2=\frac{1}{2}$ and $y_3=y_4=\frac{1}{4}$. Now
      \begin{align*}
        P(G,L,v,1,D)
        &= \frac{1}{1+\sum_{j \in L(v) \setminus \set{1}}(1-f_j)(1-y_j)} \\
        &= \frac{1}{1+(1-\frac{1}{2})(1-f_2)+(1-\frac{1}{4})(1-f_3)+(1-\frac{1}{4})(1-f_4)} \\
        &= \frac{1}{1+\frac{3}{4}\sum_{j \in L(v) \setminus \set{1}}(1-f_j)-\frac{1}{4}(1-f_2)} \\
        &= \frac{4}{9+f_2}
          \le \frac{4}{9} < \frac{13}{27}.
      \end{align*}
    \end{enumerate}
  \end{enumerate}
\end{proof}

Combining above propositions with the $\deg[G]{v}\le 1$ case, we have
the following theorem for bounds on marginal probabilities computed:

\begin{theorem}\label{thm:bounds}
  Let $(G,L,v)$ be reachable, $i\in[4]$ be a color. Then one of the
  following propositions holds:
  \begin{enumerate}[(1)]
  \item $P(G,L,v,i,D) = \frac{1}{2}$ for all integer $D\ge 2$;
  \item $P(G,L,v,i,D) \le \frac{13}{27}$ for all integer $D\ge
    2$. Specifically $P(G,L,v,i,D) \le \frac{6}{13}$ when
    $\deg[G]{v}\le 1$.
  \end{enumerate}
  Furthermore, when $P(G,L,v,i,D) = \frac{1}{2}$ for some integer
  $D\ge 2$ the local structure of $G$ around $v$ falls into one of the
  following three cases (see Figure \ref{fig:bound1},\ref{fig:bound2}
  and \ref{fig:bound3}):
  \begin{enumerate}[(1)]
  \item $\deg[G]{v}=0$.  Then $j,w \notin L(v)$ for two distinct
    colors $j, w$ other than $i$ (Figure~\ref{fig:bound1}).
  \item $\deg[G]{v}=1$.  Denote by $u$ neighbor of $v$.  Then
    $i \notin L(u)$ and $j \notin L(u) \cup L(v)$ for some color
    $j \neq i$ (Figure~\ref{fig:bound2}).
  \item $\deg[G]{v}=2$.  Denote by $u_1$ and $u_2$ two neighbors of
    $v$.  Then $v, u_1, u_2$ form a triangle, and
    $i \notin L(u_1) \cup L(u_2)$ (Figure~\ref{fig:bound3}).
  \end{enumerate}
\end{theorem}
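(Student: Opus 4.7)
The plan is to case-split on $\deg[G]{v} \in \set{0,1,2}$, and in each case use Propositions~\ref{prop:bd0}--\ref{prop:bd3} together with the refined reachability $|L(u)| \ge \deg[G']{u} + 2$ (valid at every non-root vertex $u$ in the computation tree, as remarked in Section~\ref{sec:prelim}) to dichotomize: either we match one of the three structural figures (in which case $P = \frac{1}{2}$ exactly), or the generic bound $P \le \frac{1}{2}$ sharpens to $P \le \frac{13}{27}$, further refined to $P \le \frac{6}{13}$ when $\deg[G]{v} \le 1$.

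\textbf{Low degree.} For $\deg[G]{v} = 0$ the algorithm returns $1/|L(v)|$; reachability gives $|L(v)| \ge 2$, so $P = \frac{1}{2}$ iff $|L(v)| = 2$ (matching Figure~\ref{fig:bound1}), and $P \le \frac{1}{3}$ otherwise. For $\deg[G]{v} = 1$ with unique neighbor $u$, reachability gives $|L(v)| \ge 3$. The sub-case $|L(v)| = 4$ yields $P = (1-x)/3 \le \frac{1}{3}$. The sub-case $|L(v)| = 3$ with absent color $j$ yields $P = (1-x)/(2+y)$; since $P$ vanishes on an input color $c$ iff $c \notin L(u)$, the equation $P = \frac{1}{2}$ is equivalent to $\set{i,j} \cap L(u) = \emptyset$, forcing $L(u) = [4] \setminus \set{i,j}$ and matching Figure~\ref{fig:bound2}. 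Otherwise I split: if $i \in L(u)$, Proposition~\ref{prop:bd1} gives $x \ge \frac{1}{13}$, hence $P \le \frac{6}{13}$; if $i \notin L(u)$ but $j \in L(u)$, the refined reachability together with $|L(u)| \le 3$ forces $\deg[G_v]{u} \le 1$, so Proposition~\ref{prop:bd1} yields $y \ge \frac{1}{6}$ and again $P \le \frac{6}{13}$.

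\textbf{Degree 2.} Reachability forces $|L(v)| = 4$. Proposition~\ref{prop:bd3} already reduces to the triangle case $v, v_1, v_2$; otherwise $P \le \frac{13}{27}$. In the triangle case, if $i \in L(v_1) \cup L(v_2)$, Proposition~\ref{prop:bd2} gives $P \le \frac{12}{25} < \frac{13}{27}$. Otherwise $i \notin L(v_1) \cup L(v_2)$; combined with $|L(v_k)| \ge 3$ this forces $L(v_1) = L(v_2) = [4] \setminus \set{i}$, matching Figure~\ref{fig:bound3}.

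\textbf{The hard part.} The technical crux is verifying $P = \frac{1}{2}$ \emph{exactly} in the triangle case of Figure~\ref{fig:bound3}. Since $i \notin L(v_1) \cup L(v_2)$ we get $f_i = y_i = 0$, so the output collapses to $P = 1/\bigl(1 + \sum_{j \ne i}(1-f_j)(1-y_j)\bigr)$; using $\sum_j f_j = 1$ (the remark following Proposition~\ref{prop:sumone}) the equality $P = \frac{1}{2}$ reduces to the identity $\sum_{j \ne i}(1-f_j)\,y_j = 1$. Its Gibbs-measure analog $\sum_{j \ne i} \Pr[G_v,L]{c(v_1)\ne j}\Pr[G_v,L_{2,j}]{c(v_2)=j} = \Pr[G_v,L]{c(v_2)\ne i} = 1$ is immediate from $i \notin L(v_2)$. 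Transferring this identity to the algorithm's finite-depth approximations requires splitting on $\deg[G_v]{v_2}$: when $\deg[G_v]{v_2} = 1$, the $P1$ branch forces $y_j = \frac{1}{2}$ for every $j \ne i$ (both $j$ and the absent color $i$ lie outside $L_{2,j}(v_1) = \set{2,3,4}\setminus\set{j}$), and then $\sum_{j \ne i}(1-f_j)\cdot\frac{1}{2} = \frac{1}{2}(3-(1-f_i)) = 1$; when $\deg[G_v]{v_2} = 2$ the individual $y_j$'s need not be $\frac{1}{2}$, and the identity must instead be recovered by unfolding the inner $P2$ call defining each $y_j$ and matching its subterms with those in recursion~\eqref{eq:recursion2} for $f_j$, exploiting that both recursions share sub-instances rooted at $v_1$'s remaining neighborhood in $G_{v,v_1}$. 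This last reassembly, across two levels of recursion, is where I expect the main difficulty.
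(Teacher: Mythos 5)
Your overall strategy matches the paper's: case-split on $\deg[G]{v}$, invoke Propositions~\ref{prop:bd0}--\ref{prop:bd3}, and cleanly characterize the $P=\frac12$ configurations so that the dichotomy is $D$-independent. Your $\deg[G]{v}\in\set{0,1}$ analysis is correct and in fact slightly more careful than the paper's own sketch (you correctly split on whether $i\in L(u)$ or $j\in L(u)$ and cite the appropriate clause of Proposition~\ref{prop:bd1}, rather than asserting a single uniform $\ge\frac16$ bound). The reduction of $\deg[G]{v}=2$ to the triangle case with $i\notin L(v_1)\cup L(v_2)$ via Propositions~\ref{prop:bd2} and~\ref{prop:bd3} is also exactly the paper's route.

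Where you go astray is the ``hard part.'' You divide the triangle case by $\deg[G_v]{v_2}\in\set{1,2}$ and flag the $\deg[G_v]{v_2}=2$ branch as a genuinely difficult two-level reassembly that you do not carry out. But that branch is vacuous, for the same reachability reason you already used two paragraphs earlier in the $\deg[G]{v}=1$ case: since $i\notin L(v_2)$ we have $\abs{L(v_2)}\le 3$, and reachability gives $\abs{L(v_2)}\ge\deg[G]{v_2}+1$, hence $\deg[G]{v_2}\le 2$ and so $\deg[G_v]{v_2}\le 1$. Because $v_1v_2$ is an edge of the triangle, $\deg[G_v]{v_2}=1$ exactly. (The same count shows $\deg[G_v]{v_1}=1$, i.e.\ $v_1$ has no third neighbor either.) So the $\deg[G_v]{v_2}=1$ computation you already wrote out --- $y_j=\frac12$ for every $j\ne i$, hence $\sum_{j\ne i}(1-f_j)\cdot\frac12=1$ --- is the entire verification, for every $D\ge 2$. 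With that one observation your proof closes; the ``main difficulty'' you anticipated does not exist. The paper itself elides the $P=\frac12$ verification entirely and just asserts the structural characterization from Proposition~\ref{prop:bd3}; your attempt to make it explicit is the right instinct, you just needed to notice the case is degenerate.
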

\begin{proof}
  Assume w.l.o.g. $i=1$, and we will assume $1 \in L(v)$, otherwise
  the statement is trivial. From Proposition~\ref{prop:bd3} we know if
  $P(G,L,v,i,D)=\frac{1}{2}$ then $v$ and its two neighbors $v_1$,
  $v_2$ must form a triangle, and $1 \notin L(v_1) \cup L(v_2)$, as
  depicted in Figure~\ref{fig:bound3}. If this is not the case then
  $P(G,L,v,i,D) \le \frac{13}{27}$. Now we focus on those
  $\deg[G]{v} \le 1$ cases.

  \begin{enumerate}[(1)]
  \item $\deg[G]{v}=0$.

    We know $\abs{L(v)} \ge 2$.  If $\abs{L(v)} \ge 3$ then apparently
    $P(G,L,v,1,D) \le \frac{1}{3} < \frac{13}{27}$.  If $|L(v)|=2$
    then $P(G,L,v,i,D)=\frac{1}{2}$ and this is just the case depicted
    in Figure~\ref{fig:bound1}

  \item $\deg[G]{v}=1$.

    By Algorithm \ref{algo:marg-deg1}, if $|L(v)|=4$ then
    $P(G,L,v,1,D)$ will return $\frac{1-x}{3} < \frac{13}{27}$. So we
    focus on $|L(v)|=3$. Assume $2$ is the color in singleton set
    $[4] \setminus L(v)$.

    According to the algorithm, $P(G,L,v,1,D)$ now returns
    $\frac{1-x}{2+y}$ where
    \begin{align*}
      x &= P(G_v,L_{1,1},u,1,D-1) \\
      y &= P(G_v,L_{1,2},u,2,D-1)
    \end{align*}
    Notice that $\frac{1-x}{2+y}$ could reach $\frac{1}{2}$ if and
    only if $x=y=0$. Otherwise at least one of $x$ and $y$ is bounded
    by $\frac{1}{6}$ from below, thus $\frac{1-x}{2+y}$ is bounded by
    $\max\set{\frac{1-0}{2+1/6},\frac{1-1/6}{2+0}}=\frac{6}{13}$.
	  
    Moreover, $x=y=0$ indicates that $1 \notin L_{1,1}(u)$ and
    $2 \notin L_{1,2}(u)$. Recall $L_{1,1} = L_{1,2} = L$, it
    immediately follows that $\deg[G_v]{u}=0$ and $1 \notin L(u)$ and
    $2 \notin L(u)$.  Together with $2 \notin L(v)$ we have
    $2 \notin L(u) \cup L(v)$ which completes the proof (This case is
    depicted in Figure \ref{fig:bound2}).
  \end{enumerate}

\end{proof}



\begin{figure}[H]
  \centering {\begin{tikzpicture}
\usetikzlibrary{positioning}
\tikzstyle{onode}=[draw,circle,scale=1]{};
\tikzstyle{cross}=[path
    picture={\draw[black](path picture bounding box.south east) --
      (path picture bounding box.north west) (path picture bounding
      box.south west) -- (path picture bounding box.north east);}];    
\node [cross,onode] (v1) at (-1,2) {};
\node [onode,fill=red] (v2) at (-1.5,1) {};
\node [onode,fill=green] (v3) at (-0.5,1) {};

\draw  (v1) edge (v2);
\draw  (v1) edge (v3);
\node [font=\footnotesize] at (1.18,2.4) {
	$\Pr{\tikz[baseline=-1.5pt,scale=0.2]{\node [draw, circle,inner sep=1.5pt,cross] (v1) at
        (0,0) {}; }= \tikz[baseline=-1.5pt,scale=0.2] {\node
        [circle,fill=blue,inner sep=1.5pt] at (0,0) {};}}=\frac{1}{2};
    $
};
\node [font=\footnotesize] at (1.2,2) {$\Pr{\tikz[baseline=-1.5pt,scale=0.2]{\node [cross,draw, circle,inner sep=1.5pt] (v1) at
        (0,0) {}; }= \tikz[baseline=-1.5pt,scale=0.2] { \node
        [circle,fill=yellow,inner sep=1.5pt] at (0,0) {}; }}
        =\frac{1}{2}.$};
\end{tikzpicture}
  \caption{Boundary case one}
  \label{fig:bound1}
\end{figure}
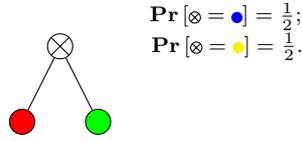
\begin{figure}[H]
  \centering {\begin{tikzpicture}
\usetikzlibrary{positioning}
\tikzstyle{onode}=[draw,circle,scale=1]{};
\tikzstyle{cross}=[path
    picture={\draw[black](path picture bounding box.south east) --
      (path picture bounding box.north west) (path picture bounding
      box.south west) -- (path picture bounding box.north east);}];    
\node [onode] (v1) at (-0.5,2.5) {};
\node [onode,fill=red] (v2) at (-1,1.5) {};
\node [onode,fill=green] (v3) at (0,1.5) {};

\draw  (v1) edge (v2);
\draw  (v1) edge (v3);
\node [font=\footnotesize] at (1.5,4) {
	$\Pr{\tikz[baseline=-1.5pt,scale=0.2]{\node [draw, circle,inner sep=1.5pt,cross] at
        (0,0) {}; }= \tikz[baseline=-1.5pt,scale=0.2] {\node
        [circle,fill=green,inner sep=1.5pt] at (0,0) {};}} 
    =\frac{1}{2};$
};
\node [font=\footnotesize] at (2.35,3.5) {$\Pr{\tikz[baseline=-1.5pt,scale=0.2]{\node [cross,draw, circle,inner sep=1.5pt]  at
        (0,0) {}; }= \tikz[baseline=-1.5pt,scale=0.2] { \node
        [circle,fill=blue,inner sep=1.5pt] at (0,0) {}; }}=\Pr{\tikz[baseline=-1.5pt,scale=0.2]{\node [cross,draw, circle,inner sep=1.5pt]  at
        (0,0) {}; }= \tikz[baseline=-1.5pt,scale=0.2] { \node
        [circle,fill=yellow,inner sep=1.5pt] at (0,0) {}; }}
        =\frac{1}{4}.   $};
\node [cross,onode] (v4) at (-1,3.5) {};
\draw  (v1) edge (v4);
\node [onode,fill=red] (v5) at (-1.5,2.5) {};
\draw  (v4) edge (v5);
\end{tikzpicture}
  \caption{Boundary case two}
  \label{fig:bound2}
\end{figure}
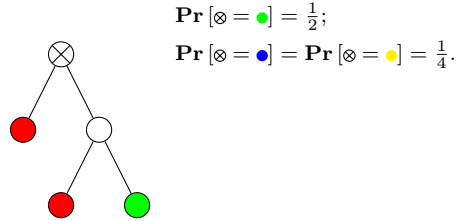
\begin{figure}[H]
  \centering {\begin{tikzpicture}
\usetikzlibrary{positioning}
\tikzstyle{onode}=[draw,circle,scale=1]{};
\tikzstyle{cross}=[path
    picture={\draw[black](path picture bounding box.south east) --
      (path picture bounding box.north west) (path picture bounding
      box.south west) -- (path picture bounding box.north east);}];    
\node [onode] (v1) at (-2,2) {};
\node [onode,fill=red] (v2) at (-2.5,1) {};
\node [onode,fill=red] (v3) at (-0.5,1) {};

\node [font=\small] at (0.5,3.5) {
	$\Pr{\tikz[baseline=-1.5pt,scale=0.2]{\node [draw, circle,inner sep=1.5pt,cross] at
        (0,0) {}; }= \tikz[baseline=-1.5pt,scale=0.2] {\node
        [circle,fill=red,inner sep=1.5pt] at (0,0) {};}}
    =\frac{1}{2};$
};
\node [font=\small] at (1.41,3.1) {$\Pr{\tikz[baseline=-1.5pt,scale=0.2]{\node [cross,draw, circle,inner sep=1.5pt]  at
        (0,0) {}; }= \tikz[baseline=-1.5pt,scale=0.2] { \node
        [circle,fill=green,inner sep=1.5pt] at (0,0) {}; }}=\Pr{\tikz[baseline=-1.5pt,scale=0.2]{\node [cross,draw, circle,inner sep=1.5pt]  at
        (0,0) {}; }= \tikz[baseline=-1.5pt,scale=0.2] { \node
        [circle,fill=blue,inner sep=1.5pt] at (0,0) {}; }}=\frac{1}{6};
           $};
        
 \node [font=\small] at (0.5,2.7) {$\Pr{\tikz[baseline=-1.5pt,scale=0.2]{\node [cross,draw, circle,inner sep=1.5pt]  at
        (0,0) {}; }= \tikz[baseline=-1.5pt,scale=0.2] { \node
        [circle,fill=yellow,inner sep=1.5pt] at (0,0) {}; }}=\frac{1}{6}.   $};       
\node [cross,onode] (v4) at (-1.5,3) {};

\node [onode] (v5) at (-1,2) {};
\draw  (v4) edge (v1);
\draw  (v4) edge (v5);
\draw  (v1) edge (v5);
\draw  (v1) edge (v2);
\draw  (v5) edge (v3);
\end{tikzpicture}
  \caption{Boundary case three}
  \label{fig:bound3}
\end{figure}
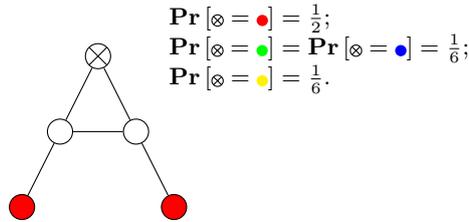

Consider a depth $D$ that is large enough(larger than the size of
$G$), then clearly $P(G,L,v,i,D)$ should return
$\Pr[G,L]{c(v)=i}$. Therefore we can actually draw the same
conclusions for true value $\Pr[G,L]{c(v)=i}$. To make things clearer,
we present the following theorem.

\begin{theorem} \label{thm:consistency} Let $(G,L,v)$ be reachable,
  $i\in[4]$ be a color and $D\geq 2$ be an integer. Then
  \begin{enumerate}
  \item $P(G,L,v,i,D)=0$ if and only if $\Pr[G,L]{c(v)=i}=0$;
  \item $P(G,L,v,i,D)=\frac{1}{2}$ if and only if
    $\Pr[G,L]{c(v)=i}=\frac{1}{2}$;
  \item $P(G,L,v,i,D) \in \left[\frac{1}{13},\frac{13}{27}\right]$ if
    and only if
    $\Pr[G,L]{c(v)=i} \in \left[\frac{1}{13},\frac{13}{27}\right]$.
  \end{enumerate}
\end{theorem}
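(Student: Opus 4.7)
The plan is to pin down which of the three pairwise disjoint buckets $\set{0}$, $\set{1/2}$, and $[1/13,13/27]$ both $P(G,L,v,i,D)$ and $\Pr[G,L]{c(v)=i}$ fall into, in terms of properties of the quadruple $(G,L,v,i)$ alone---independently of $D$. The key device is to compare the algorithm at depth $D$ against its behavior at a very large depth $D^{\ast}\ge|V|$: at such depths every recursive call strictly shrinks the underlying graph (each of $P1$, $P2$, $P3$ passes to $G_v$ or $G_{v,v_1}$), so the recursion exhausts the graph and returns the exact marginal, giving $P(G,L,v,i,D^{\ast})=\Pr[G,L]{c(v)=i}$. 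Applying Theorem~\ref{thm:bounds} and Proposition~\ref{prop:bd1} at both $D$ and $D^{\ast}$ then shows that each of $P(G,L,v,i,D)$ and $\Pr[G,L]{c(v)=i}$ lies in exactly one of the three buckets, so it suffices to verify that the chosen bucket is the same for both.

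Next I would characterize each bucket by a structural predicate on $(G,L,v,i)$. The bucket $\set{0}$ is captured by $i\notin L(v)$: the algorithm returns $0$ in this case and is bounded away from $0$ by $1/13$ otherwise via Proposition~\ref{prop:bd1}, while on the probability side reachability guarantees a proper completion of any $c(v)=i$ with $i\in L(v)$, so $\Pr[G,L]{c(v)=i}>0$ precisely when $i\in L(v)$. For the bucket $\set{1/2}$, the ``furthermore'' in Theorem~\ref{thm:bounds} already yields the forward implication $P(G,L,v,i,D)=1/2\Rightarrow$ one of three structural configurations. For the converse I would trace the algorithm in each configuration and exploit the rigidity forced by reachability: in Case~2, $|L(u)|\ge\deg[G]{u}+1$ together with $i,j\notin L(u)$ forces $\deg[G]{u}=1$, so $\set{u,v}$ is an isolated edge with prescribed lists; in Case~3, the same bound with $i\notin L(v_1)$ forces $\deg[G]{v_1}=\deg[G]{v_2}=2$ and $L(v_k)=[4]\setminus\set{i}$, so the triangle is an isolated connected component. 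In both cases the recursion sees only the isolated component and evaluates to $1/2$, and a direct counting argument shows that the exact marginal on the same component is also $1/2$. Applying this structural characterization also at depth $D^{\ast}$ yields iff~(2), and iff~(3) follows by elimination on the trichotomy.

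The main obstacle I foresee is the Case~3 verification: even though the triangle is isolated, the algorithm computes $P(G,L,v,i,D)$ via $P2$, which in turn invokes the two-layer recursion~\eqref{eq:recursion2} for each $f_j$, so one must trace both layers on $G_{v,v_1}$, where $v_2$ is isolated with 3-element list $[4]\setminus\set{i}$. The computation yields $x_{1,w}=1/3$ for each $w\in L(v_1)$, hence $f_j=1/3$ and $y_j=1/2$ for $j\ne i$ together with $f_i=y_i=0$, so that $\sum_{j\in L(v)}(1-f_j)(1-y_j)=1+3\cdot(2/3)(1/2)=2$ and $P(G,L,v,i,D)=1/2$ on the nose. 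The bookkeeping is routine once the isolation of the triangle is exploited, but it is the one place in the proof where the two-layer structure of the recursion actively matters.
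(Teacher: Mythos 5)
Your proposal is correct and matches the paper's (essentially one-paragraph) argument: compare the algorithm at depth $D$ with a depth $D^{\ast}\ge\abs{V}$, where the recursion exhausts the graph and returns the exact marginal, and invoke the $D$-independence of the trichotomy $\set{0}$, $\set{\tfrac12}$, $[\tfrac{1}{13},\tfrac{13}{27}]$ established by Proposition~\ref{prop:bd1} and Theorem~\ref{thm:bounds}. Your careful re-derivation of the converse of the ``furthermore'' clause (tracing the two-layer recursion on the isolated triangle, etc.) is correct but not actually needed, since part~(1) of Theorem~\ref{thm:bounds} already states that $P(G,L,v,i,D)=\tfrac12$ for some $D\ge2$ forces equality for all $D\ge2$, which is all the $D$-independence required.
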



\section{Correlation Decay}
In this section, we discuss the correlation decay property of our
recursion. First we present the main theorem.

\begin{theorem}\label{thm:cd} Suppose $D \ge 3$ and $q=4$. Let
  $\lambda =\frac{9996}{10000}$ be a constant, then for any list-coloring instance $(G=(V,E),L)$ satisfying
  $\abs{L(v)}\ge \deg[G]{v}+1$ for every $v\in V$, we have
  \[ \abs{P(G,L,v,i,D) - \Pr[G,L]{c(v)=i}} \le
    C\cdot\lambda^{D-3}, \]
  where $C>0$ is some constant.
\end{theorem}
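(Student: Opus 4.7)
The plan is to follow the standard amortized correlation decay template. I fix a potential function $\varphi:[0,\tfrac{1}{2}]\to\mathbb{R}$ (whose construction is part of the proof) and, for each pair consisting of a computation tree value $\hat{x}=P(G',L',u,j,D')$ and the corresponding true marginal $x^*=\Pr[G',L']{c(u)=j}$, track the amortized error $\eps(\hat{x},x^*)=|\varphi(\hat{x})-\varphi(x^*)|$. The argument then proceeds by induction on the recursion depth $D$: for the base case, the trivial bound $|P-\Pr|\le 1$ together with bi-Lipschitzness of $\varphi$ on the admissible range absorbs into the constant $C$; for the inductive step, one invocation of the (partial two-layer) recursion contracts $\eps$ by a factor of $\lambda=9996/10000$, and composing $D-3$ such contractions yields the stated bound.

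The main technical claim is a single-step contraction: if $y=F(x_1,\dots,x_m)$ denotes the output of one invocation of the partial two-layer recursion (i.e.\ \eqref{eq:recursion} combined with \eqref{eq:recursion2} through Algorithms~\ref{algo:marg-deg1}--\ref{algo:marg-deg3}), $\hat{y}=F(\hat{x}_1,\dots,\hat{x}_m)$ the corresponding approximate value, and $x^*_i$ the true marginals at the inputs, then
\[
\eps(\hat{y},y)\;\le\;\lambda\cdot\max_{i}\eps(\hat{x}_i,x^*_i).
\]
The standard route is to establish the weighted gradient-norm bound
\[
\sum_{i=1}^{m}\left|\frac{\partial F}{\partial x_i}(x_1,\dots,x_m)\right|\cdot\frac{\varphi'(x_i)}{\varphi'(y)}\;\le\;\lambda
\]
for every tuple $(x_1,\dots,x_m)$ in the admissible range, and then appeal to the mean value theorem.

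The admissible range is where the bounds proved in the previous section become essential. Proposition~\ref{prop:bd0} confines every variable to $[0,\tfrac{1}{2}]$, and Theorem~\ref{thm:bounds} sharpens this substantially by showing that each variable either equals exactly $0$ or exactly $\tfrac{1}{2}$, or else lies in a tight interval bounded away from both endpoints (e.g.\ $[\tfrac{1}{13},\tfrac{13}{27}]$ in the $\deg[G]{v}=2$ case and $[\tfrac{1}{6},\tfrac{6}{13}]$ when $\deg[G]{v}\le 1$). Crucially, by the consistency statement in Theorem~\ref{thm:consistency}, the extremal values $0$ and $\tfrac{1}{2}$ are attained only when the true marginal equals the estimate exactly, so the amortized error at such inputs is zero. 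I would therefore treat these inputs as fixed parameters rather than free variables in the gradient bound above. This is the ``alternative argument'' flagged in the introduction and is precisely what makes contraction achievable at the optimal $q=\Delta+1$, where the trivial bounds $0$ and $\tfrac{1}{2}$ are genuinely attainable and would otherwise block any contraction.

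With those reductions in place, I would verify the gradient inequality case by case according to $\deg[G]{v}$ and the degree pattern of the neighbors (following the splitting already present in Algorithms~\ref{algo:marg-main}--\ref{algo:marg-deg3}), and within each case according to which inputs are pinned to boundary values. The main obstacle, and the bulk of the remaining work, is constructing the non-monotonic potential function $\varphi$ alluded to in the introduction and then actually checking the resulting multi-variable rational inequalities over the tightened rectangular domains; I would expect to carry this out by a combination of symbolic derivative computations and interval/numerical analysis, with the sharp constant $9996/10000$ emerging from the worst case (presumably $\deg[G]{v}=3$ with all inputs strictly in the interior and no boundary simplifications available). Once the per-step contraction has been established in every case, unrolling the induction $D-3$ times and converting back from amortized to absolute error completes the proof.
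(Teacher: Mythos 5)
Your high-level scheme is the right one, and you correctly identify most of the key ingredients: a potential-function amortization, a weighted gradient-norm bound established via the mean value theorem, the use of the sharpened domain bounds from Propositions~\ref{prop:bd0}--\ref{prop:bd3} and Theorem~\ref{thm:bounds}, and the ``alternative argument'' for inputs pinned at $0$ or $\tfrac{1}{2}$. But there is a structural misconception that would derail the execution: you plan to prove the per-step contraction ``case by case according to $\deg[G]{v}$ \dots following the splitting already present in Algorithms~\ref{algo:marg-main}--\ref{algo:marg-deg3}'' and you expect the tight constant $\lambda=9996/10000$ to emerge ``presumably'' from $\deg[G]{v}=3$. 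The paper never proves, and does not need, contraction at degree-$3$ vertices. A degree-$3$ vertex can only occur at the root of the computation tree: every non-root instance $(G',L',v')$ reached by the recursion satisfies $\deg[G']{v'}\le \Delta-1=2$ and moreover $|L'(v')|\ge \deg[G']{v'}+2$ (this is the refined reachability observation from Section~\ref{sec:prelim}). Consequently Lemma~\ref{lem:meanvalue} is established only for $\deg[G]{v}\in\{1,2\}$, and in the final proof of Theorem~\ref{thm:cd} the root step (possibly degree~$3$) is handled with nothing more than a one-step Lipschitz bound on $F$, absorbed into the constant $C$. Attempting to force a strict contraction at degree~$3$ is not only unnecessary but is exactly what one should expect to be impossible at the boundary $q=\Delta+1$; recognizing that the root is a single exception that contributes a constant factor, while all internal instances live in a lower-degree regime with a larger color-list slack, is the central structural idea enabling the result. (As it happens, the binding case is $\deg[G]{v}=1$ with $1\in L(v_1)$, not degree~$2$ or~$3$.)

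You also underestimate the remaining technical work. The phrase ``a combination of symbolic derivative computations and interval/numerical analysis'' glosses over a genuine obstacle: the degree-$2$ contraction rate $\alpha(\*x,\*y)$, after the partial two-layer expansion, depends on roughly a dozen variables, which is far beyond what one can verify directly. The paper resolves this with a dedicated symmetrization machinery: a concavity/majorization lemma for the auxiliary function $G_{\xi}(w,f)$ (Lemmas~\ref{lem:jensen0}, \ref{lem:jensen}, \ref{lem:jensen_mod}) is used to show that $\alpha$ is upper-bounded, up to a controlled factor $\kappa$, by a symmetrized $\hat\alpha$ of only three or four variables (Lemmas~\ref{lem:sym3+}, \ref{lem:sym2+1-}), which is then checked by a quantifier-elimination routine. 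There is also a careful sign analysis of the quantities $D_j$ entering the absolute values in $P_1(\*f,\*y)$, which both shapes the case split and supplies extra linear constraints on $\*f$ that make the symmetrized inequalities hold. Without some substitute for this reduction, the per-step contraction claim is not verifiable, so as written the proposal has a gap both in where contraction must hold and in how one would ever check it.
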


We can view the one step recursion $P(G,L,v,i,D)$ as a function $F_i$
where each input of $F_i$ is obtained by calling a depth-$(D-1)$
recursion on some list-coloring instance $(G_k,L_k)$. Therefore $F_i$
has 2 main variations, depending on whether $P1$ or $P2$ is called.

It is natural to conceive of a sufficient condition that probably
looks like: the error of our estimation decays by a constant factor in
every iteration. However, this is not generally true even for systems
exhibiting correlation decay. This issue has already been addressed in
~\cite{li2012approximate,li2013correlation}, and in these works a
potential-based analysis is adopted. We will once more utilize this
method in our proof.

We choose
\[ \varphi(x)=2\ln x - 2\ln\tuple{\frac{1}{2}-x} \] whose derivative
(potential function) is
\[ \Phi(x) = \frac{1}{x(\frac{1}{2}-x)} \] and take
\[ M=\frac{3}{2}-\sqrt{2}=\sup_{0\le x\le
    \frac{1}{2}}\frac{1}{(1-x)\Phi(x)}. \]

Pick a list-coloring instance $(G=(V,E),L)$ with maximum
degree $3$, a vertex $v$ in $G$ with neighbor(s) $v_1$ and $v_2$ if
exist satisfying $\abs{L(v)}\ge \deg[G]{v}+2$ and a color $i$. To prove
Theorem~\ref{thm:cd}, the idea is to apply induction on $D$, which
can be formalized by the following lemma.
\begin{lemma}\label{lem:ind} Let $\lambda =\frac{9996}{10000}$ be a
  constant, then one of the following statements holds:
  \begin{enumerate}
  \item $F_i(\*x) = F_i(\*x^*) = 0$;
  \item $F_i(\*x) = F_i(\*x^*) = \frac{1}{2}$;
  \item
    $\abs{\varphi(F_i(\*x))-\varphi(F_i(\*x^*))} \le \lambda \cdot
    \max_{j:x_j \in
      \tuple{0,\frac{1}{2}}}\abs{\varphi(x_j)-\varphi(x^*_j)}$,
  \end{enumerate}
  where $\*x$ are the return values of subroutines called by
  $P(G,L,v,i,D)$ and $\*x^*$ are true values of those called
  instances.
\end{lemma}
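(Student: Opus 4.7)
The plan is to proceed by cases on $\deg_G(v)$ and, within each case, on the local structure (list sizes, absent colors, triangle at $v$). By Theorem~\ref{thm:bounds} the estimate $F_i(\mathbf{x}) = P(G,L,v,i,D)$ is forced into $\{0\} \cup \{1/2\} \cup [1/13,\,13/27]$, and Theorem~\ref{thm:consistency} guarantees that whenever $F_i(\mathbf{x}) \in \{0,1/2\}$ the true value $F_i(\mathbf{x}^*)$ equals the same quantity, placing us in conclusions (1) or (2) of the lemma. So we may restrict attention to the regime $F_i(\mathbf{x}),\,F_i(\mathbf{x}^*) \in [1/13,\,13/27]$ and aim for conclusion (3).

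Next I would apply the standard potential reduction: viewing $\varphi \circ F_i$ as a function of $(\varphi(x_1), \ldots, \varphi(x_n))$ and applying the mean value theorem yields
\[
  |\varphi(F_i(\mathbf{x})) - \varphi(F_i(\mathbf{x}^*))|
  \le \left(\sum_{j} A_j(\tilde{\mathbf{x}})\right) \cdot \max_j |\varphi(x_j) - \varphi(x_j^*)|,
\]
where $A_j(\mathbf{y}) := \Phi(F_i(\mathbf{y})) \cdot |\partial F_i/\partial y_j(\mathbf{y})|/\Phi(y_j)$ and $\tilde{\mathbf{x}}$ lies on the segment between $\mathbf{x}$ and $\mathbf{x}^*$ in $\varphi$-coordinates. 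The alternative argument enters here: by Theorem~\ref{thm:consistency} any index $j$ with $x_j \in \{0, 1/2\}$ has $x_j^* = x_j$, so the $j$-th term vanishes and $x_j$ may be frozen as a constant parameter when computing the remaining $A_k$'s. It therefore suffices to prove $\sum_j A_j(\mathbf{y}) \le \lambda$ uniformly over the reduced domain in which the free $y_j$'s lie in $[1/13,\,13/27]$, or in the tighter $[1/13,\,6/13]$ when the corresponding vertex has degree at most $1$.

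With the algorithm's explicit form of $F_i$ in each branch --- $(1-x)/3$ or $(1-x)/(2+y)$ for $\deg_G(v)=1$, the product-over-sum-of-products of \eqref{eq:recursion} for $\deg_G(v)=3$, and the partial two-layer composition of \eqref{eq:recursion2} into \eqref{eq:recursion} for $\deg_G(v)=2$ --- one differentiates symbolically and reduces the verification to maximizing a small collection of rational functions on closed boxes. The sum-to-one identities $\sum_w x_{1,w}=1$ and $\sum_j f_j=1$ from Proposition~\ref{prop:sumone}, as well as the triangle dichotomy of Proposition~\ref{prop:bd3}, cut the effective dimension of each optimization and appear essential for pushing the bound below $1$.

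The main obstacle will be the degree-two case with the two-layer recursion. There $F_i$ depends on up to $d_1 \cdot |L(v_1)| + |L(v)|$ free inputs, and the rate $\lambda = 9996/10000$ is so close to $1$ that any slack --- by failing to freeze the boundary values $0$ and $1/2$ via the alternative argument, or by not invoking the refined interval $[1/13,\,13/27]$ --- destroys contraction. The non-monotone potential $\varphi(x) = 2\ln x - 2\ln(1/2-x)$, symmetric about $1/4$, is engineered precisely to equalize the worst-case sensitivities across these sub-configurations; once the frozen-variable bookkeeping and the refined ranges are in place, verification in each remaining subcase reduces to a constrained maximization over a moderate-dimensional box, most likely discharged with computer assistance on the handful of configurations that attain the extremum.
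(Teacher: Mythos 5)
Your proposal matches the paper's approach: you handle conclusions (1) and (2) via Theorem~\ref{thm:consistency}, apply the mean value theorem in $\varphi$-coordinates to reduce to bounding the amortized contraction rate $\sum_j A_j(\tilde{\*x}) \le \lambda$ (the paper's Lemma~\ref{lem:meanvalue}), freeze the $\{0,\tfrac{1}{2}\}$ variables (which the paper notes contribute zero since $1/\Phi$ vanishes there), and discharge the remaining casework by degree with the refined bounds from Theorem~\ref{thm:bounds} and computer assistance. This is exactly the paper's two-stage argument, with Lemma~\ref{lem:ind} reduced to the pointwise derivative bound.
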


We shall point out here if the first two cases do not occur then
$\varphi(F_i(\*x))$ and $\varphi(F_i(\*x^*))$ are always
well-defined. This is a simple corollary of Lemma~\ref{thm:consistency}. Instead of proving this lemma, we will
introduce Lemma~\ref{lem:meanvalue} which can directly imply Lemma~\ref{lem:ind}.

To ease the notation we first define the following. Let
$\varphi(\*x) = \tuple{\varphi(x_1), \varphi(x_2), \cdots,
  \varphi(x_d)}$ for any $d$-dimensional vector $\*x$,
$d \in \mathbb{N}$, and similarly define $\varphi^{-1}(\*x)$.

\begin{lemma} \label{lem:meanvalue} Suppose $d$ is the arity of
  $F_i$. Define the contraction rate
  \[ \alpha(\*x) =
    \sum_{j=1}^{d}\frac{\Phi(F_i(\*x))}{\Phi(x_j)}\abs{\pderiv{F_i(\*x)}{x_j}}.\]
  Then for all $\*x \in \text{Dom}(F_i) \subseteq [0,\frac{1}{2}]^d$, we have
  \[ \alpha(\*x) \le \lambda \] where $\lambda =\frac{9996}{10000}$.
\end{lemma}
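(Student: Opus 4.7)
The plan is to establish the bound $\alpha(\*x) \le \lambda$ by an exhaustive case analysis on the functional form of $F_i$, which is dictated by $\deg_G(v) \in \{1,2,3\}$, by the size of $L(v)$, by whether we enter the partial two-step expansion at $v_1$ when $\deg_G(v)=2$, and by a small number of structural sub-flags (which colors appear in which adjacent list, whether $v,v_1,v_2$ form a triangle, etc.). Algorithms~\ref{algo:marg-deg1}, \ref{algo:marg-deg2}, \ref{algo:marg-deg3} enumerate these cases, each giving an explicit rational $F_i$ whose partial derivatives with respect to the inputs are straightforward to compute.

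First I would dispatch the easier cases $\deg_G(v)=1$ (at most two active variables) and $\deg_G(v)=3$ (a product-over-sum-of-products form with up to twelve variables but no nested recursion). In these cases the bounds of Theorem~\ref{thm:bounds} apply coordinate-wise: each $x_j$ sits either in $\{0,1/2\}$ or in the restricted interval $[1/13,13/27]$ (or the tighter $[1/13,6/13]$ when the subinstance has $\deg \le 1$), so the problem reduces to bounding a rational function over a low-dimensional box. The constant $M=\tfrac32-\sqrt 2=\sup_{0\le x\le 1/2}\tfrac{1}{(1-x)\Phi(x)}$ is the tool that cleans up the recurring term $\tfrac{\Phi(F_i)|\partial F_i/\partial x_j|}{\Phi(x_j)}$ whenever $F_i$ is a fraction of products of $(1-x_k)$ factors, since differentiating $(1-x_j)$ produces a $\tfrac{1}{1-x_j}$ that pairs with $\tfrac{1}{\Phi(x_j)}$ to give at most $M$.

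Second I would invoke the alternative argument to shrink the optimization domain. By Theorem~\ref{thm:consistency}, whenever a recursive subcall $P(G',L',u,j,D-1)$ equals $0$ or $\tfrac12$, the algorithm agrees exactly with the true marginal on that subinstance; hence in the inductive step such coordinates carry zero error and may be frozen as constant parameters of $F_i$. The contraction sum is therefore restricted to the active coordinates, which all lie in $[1/13,13/27]$ (resp.\ $[1/13,6/13]$). This is precisely what makes the approach succeed at the optimal threshold: the trivial bounds $0$ and $1/2$ never need to be substituted into the contraction inequality, and for the remaining coordinates both $x_j$ and $F_i(\*x)$ are bounded away from the singularities of $\varphi$ and $\Phi$.

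Third, and most delicately, I would handle $\deg_G(v)=2$ with the partial two-step expansion, where
\[
  F_i(\*x)=\frac{(1-f_i)(1-y_i)}{\sum_{j\in L(v)}(1-f_j)(1-y_j)},
\]
each $y_j$ is a direct recursive call on $v_2$, but each $f_j$ is the rational function of the $x_{k,w}$'s coming from \eqref{eq:recursion2}. The chain rule must be pushed through $f_j$, and the decisive point is that the $f_j$'s share the same underlying $x_{k,w}$'s and satisfy $\sum_j f_j=1$ (Proposition~\ref{prop:sumone}); using these identities instead of treating the $x_{k,w}$'s as $dq$ free variables, as in prior analyses, is what brings $\alpha$ strictly below $1$ at $q=\Delta+1=4$. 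The final step in every case is a bounded-domain optimization of an explicit algebraic expression over a product of restricted intervals cut by linear sum-to-one constraints, amenable to symbolic/numerical certification. The main obstacle will be this partial two-step case: the parameter space has up to roughly eight coupled active variables, $F_i$ is not a simple product-quotient but a nested rational function, and the target inequality $\alpha\le 9996/10000$ is extraordinarily tight, leaving essentially no slack and forcing simultaneous use of the refined interval bounds of Section~4, the identity $\sum_j f_j=1$, and the $M$-estimate above.
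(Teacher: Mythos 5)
There are two genuine gaps.

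First, the inclusion of $\deg_G(v)=3$ as an ``easier case'' that can be dispatched is wrong: the contraction inequality $\alpha\le\lambda<1$ simply fails there. Take $F_1(\*x,\*y,\*z)=\frac{(1-x_1)(1-y_1)(1-z_1)}{\sum_{j\in[4]}(1-x_j)(1-y_j)(1-z_j)}$ and evaluate at $x_j=y_j=z_j=\frac14$ for all $j$ (achievable, e.g.\ when all three neighbors are leaves that become isolated full-list vertices in $G_v$). Then $F_1=\frac14$, so $\Phi(F_1)/\Phi(x_j)=1$, and a direct computation gives $\partial F_1/\partial x_1=-\frac14$, $\partial F_1/\partial x_j=\frac1{12}$ for $j\ne1$, hence each of the three families contributes $\frac14+3\cdot\frac1{12}=\frac12$ and $\alpha=\frac32>1$. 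This is exactly why the paper restricts the lemma to reachable triples (which force $\deg_G(v)\le2$ and $|L(v)|\ge\deg_G(v)+2$), and handles the one degree-3 vertex at the root of the computation tree by a separate one-time argument in the proof of Theorem~\ref{thm:cd}, losing only a fixed constant via the bounded gradient of $F$ rather than asserting a contraction factor $<1$ at that node.

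Second, the proposal's final step---``bounded-domain optimization \dots\ amenable to symbolic/numerical certification''---severely understates what is needed in the $\deg_G(v)=2$ case and omits the key technical device. After the chain rule through $f_j$ and the substitution of $M$ for $\frac1{(1-x)\Phi(x)}$, the natural domain is still a $7$-dimensional surface (four $f_j$'s summing to one plus four $y_j$'s), far beyond what a symbolic quantifier-elimination or direct optimization can certify against a target as tight as $\frac{9996}{10000}$. The paper's escape route---and the part the proposal misses entirely---is the introduction of $G_\xi(w,f)=\frac{1-f}{\Phi(1-\frac{w}{1-f})}+4M\xi\cdot\frac{w}{1-f}$ together with the near-concavity (Jensen-type) Lemmas~\ref{lem:jensen0}, \ref{lem:jensen}, \ref{lem:jensen_mod}, which let one symmetrize $\alpha$ at a multiplicative cost $\kappa\approx1.04$ and collapse the coordinates $(f_2,f_3,f_4)$ and $(y_2,y_3,y_4)$ to one or two averaged variables; only then does the problem become a three- or four-variable optimization that Mathematica can decide. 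Relatedly, your plan to ``push the chain rule through $f_j$ and use $\sum_j f_j=1$'' is correct as far as it goes, but you have also not addressed the sign analysis of the terms $D_j=\frac1{1-f_1}-\sum_{k\ne j}\frac{F_k}{1-f_k}$ hiding inside the absolute value: showing that at most one $D_j$ can be negative, and deriving the linear constraints $6f_1+f_2<1$ or $4f_1+f_2<\frac34$ in that case, is what splits Section~\ref{sec:d1=2} into its two sub-cases and supplies the extra slack that keeps $\alpha$ below one. Without both the symmetrization and the sign analysis, the proposal's endgame is not a proof but an aspiration.
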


Before delving into the proof, we first show that how to prove Lemma~\ref{lem:ind} by Lemma~\ref{lem:meanvalue}.

\begin{proof}[Proof of Lemma~\ref{lem:ind}]
  Let $\mathcal{I}$ be the index set of variables of $F_i$. Let
  $\*x_0 = \set{x_i \mid i \in \mathcal{I}, x_i \in \set{0,
      \frac{1}{2}}}$ and
  $\*x_1 = \set{x_i \mid i \in \mathcal{I}, x_i \in
    (0,\frac{1}{2})}$. Let $\mathcal{I}_0$ and $\mathcal{I}_1$ be the
  corresponding index set of $\*x_0$ and $\*x_1$. Define $\*x^*_0$ and
  $\*x^*_1$ similarly.
	
  Let $\*u_1 = \varphi(\*x_1)$, $\*u^*_1 = \varphi(\*x^*_1)$, and
  since $\varphi$ is strictly increasing we have
  $\*x_1=\varphi^{-1}(\*u_1)$ and
  $\*x^*_1=\varphi^{-1}(\*u^*_1)$. Notice $\*u^*_1$ is well-defined
  because we know $x_i \in (0,\frac{1}{2})$ if and only if
  $x^*_i \in (0,\frac{1}{2})$ by Lemma~\ref{thm:consistency}. In
  other words, $\*x_0$ and $\*x_1$ shares the same index set with
  $\*x^*_0$ and $\*x^*_1$, respectively.
	
  Introduce
  \begin{align*}
    g(t) = \varphi(F_i(\*x_0,\varphi^{-1}(t\*u_1+(1-t)\*u^*_1))
  \end{align*}
  so that
  $\varphi(F_i(\*x))-\varphi(F_i(\*x^*)) =
  \varphi(F_i(\*x_0,\*x_1))-\varphi(F_i(\*x^*_0,\*x^*_1)) =
  g(1)-g(0)$. By \emph{Mean Value Theorem} there exists
  $\tilde{t} \in (0,1)$ such that
  \begin{align*}
    \frac{g(1)-g(0)}{1-0}=g'(\tilde{t}).
  \end{align*}
  For convenience we denote
  $\tilde{\*u}_1=\tilde{t}\*u_1+(1-\tilde{t})\*u^*_1$ and
  $\tilde{\*x}_1=\varphi^{-1}(\tilde{\*u}_1)$. Clearly each component
  of $\tilde{\*x}_1$ lies between $0$ and $\frac{1}{2}$ since
  $\varphi$ is a monotone function. Simple derivative calculation
  yields
  \begin{align*}
    \abs{\varphi(F_i(\*x))-\varphi(F_i(\*x^*))} &= \abs{\sum_{j \in \mathcal{I}_1}\frac{\Phi(F_i(\*x_0,\tilde{\*x}_1))}{\Phi(\tilde{x}_j)}\pderiv{F_i(\*x_0,\tilde{\*x}_1)}{x_j}\cdot(u_j-u^*_j)} \\
                                                &\le \sum_{j \in \mathcal{I}_1}\frac{\Phi(F_i(\*x_0,\tilde{\*x}_1))}{\Phi(\tilde{x}_j)}\abs{\pderiv{F_i(\*x_0,\tilde{\*x}_1)}{x_j}}\cdot\abs{u_j-u^*_j} \\
                                                &\le \tuple{\sum_{j \in \mathcal{I}_1}\frac{\Phi(F_i(\*x_0,\tilde{\*x}_1))}{\Phi(\tilde{x}_j)}\abs{\pderiv{F_i(\*x_0,\tilde{\*x}_1)}{x_j}}} \cdot \max_{j \in \mathcal{I}_1}\abs{u_j-u^*_j}.
  \end{align*}
  Finally notice that if $x_j \in \set{0,\frac{1}{2}}$ then
  $\frac{1}{\Phi(x_j)} = 0$,
  \begin{align*}
    \sum_{j \in \mathcal{I}_1}\frac{\Phi(F_i(\*x_0,\tilde{\*x}_1))}{\Phi(\tilde{x}_j)}\abs{\pderiv{F_i(\*x_0,\tilde{\*x}_1)}{x_j}} &= \sum_{j \in \mathcal{I}_0}\frac{\Phi(F_i(\*x_0,\tilde{\*x}_1))}{\Phi(x_j)}\abs{\pderiv{F_i(\*x_0,\tilde{\*x}_1)}{x_j}} + \sum_{j \in \mathcal{I}_1}\frac{\Phi(F_i(\*x_0,\tilde{\*x}_1))}{\Phi(\tilde{x}_j)}\abs{\pderiv{F_i(\*x_0,\tilde{\*x}_1)}{x_j}} \\
                                                                                                                                   &\le \sup_{\*x \in [0,\frac{1}{2}]^d}\alpha(\*x) \\
                                                                                                                                   &\le \lambda.
  \end{align*}
  This completes the proof.
\end{proof}

We make some remarks. Here $F_i$ is just a general concept
representing the function of our algorithm. We use different
recursions to compute the marginal probability as the degrees of $v$
and its neighbors changes. As a consequence, the specific form,
including arity of $F_i$ has several variations, and depends on actual
situations. Moreover, in our analysis we will frequently refine the domain of $F_i$ because in some cases both true value and computed value never exceed a certain bound. Nevertheless, we can always obtain the expression of this
contraction rate $\alpha(\*x)$, and it turns out that we can bound
this rate for all variations of $F_i$.

The rest of this section is dedicated to prove Lemma~\ref{lem:meanvalue}. Our proof is based on the discussion on the
degree of $v$. Thanks to the symmetry between colors, we will only
need to prove for $i=1$. The proofs for other colors are identical.
	
\subsection{$\deg[G]{v} = 1$}\label{sec:deg=1}

Denote by $v_1$ the only neighbor of $v$. In this case $F_1$ has
three variations.
\begin{align*}
  F_1 = \begin{cases}
    \frac{1-x}{3} & |L(v)|=4 \\
    \frac{1-x}{2+y} & 1 \in L(v), j \notin L(v) \\
    0  & 1 \notin L(v)
  \end{cases}
\end{align*}

where $x=P(G_v,L_{1,1},v_1,1,D-1)$ and
$y=P(G_v,L_{1,j},v_1,j,D-1)$. We shall prove Lemma~\ref{lem:ind} for
the first two variations since the last one is trivial.

\begin{enumerate}
\item $|L(v)|=4$.
	
  The contraction rate writes as
  \[ \alpha(\*x) =
    \frac{\Phi(F_1(x))}{\Phi(x)}\abs{\pderiv{F_1(x)}{x}}. \] Moreover
  we have the following upper bound
  \[
    \frac{\Phi(F_1(x))}{\Phi(x)}\abs{\pderiv{F_1(x)}{x}} =
    \frac{1}{3}\cdot\frac{x\tuple{\frac{1}{2}-x}}{\frac{1-x}{3}\tuple{\frac{1}{2}-\frac{1-x}{3}}}
    = \frac{3x(1-2x)}{(1-x)(1+2x)} \le \frac{3x}{1+2x} \le \frac{3}{4}
    < 1.
  \]

\item $1 \in L(v), j \notin L(v)$.
	
  In this case $F_1$ is a binary function. The contraction rate writes
  as
  \begin{align*}
    \alpha(x,y) = \frac{\Phi(F_1(x,y))}{\Phi(x)}\abs{\pderiv{F_1(x,y)}{x}} + \frac{\Phi(F_1(x,y))}{\Phi(y)}\abs{\pderiv{F_1(x,y)}{y}}.
  \end{align*}
  We further discuss on three cases.
  \begin{enumerate}
  \item $1 \notin L(v_1)$ and $j \notin L(v_1)$.
		
    In this case $x$ and $y$ are accurately computed, hence no error occurs in our computation.
		 
  \item $1 \notin L(v_1)$ and $j \in L(v_1)$.
		
    Denote by $d_1$ degree of $v_1$ in graph $G_v$. Then
    $y=\frac{\prod_{k=1}^{d_1}(1-z_{jk})}{\sum_{l \in
        L(v_1)}\prod_{k=1}^{d_1}(1-z_{lk})} \ge
    \frac{\tuple{1-\frac{1}{2}}^2}{\tuple{1-\frac{1}{2}}^2+(1-0)
      \times 3}=\frac{1}{13}$. This lower bound also holds for $y^*$.
		
    If $1 \notin L(v_1)$, then $x=x^*=0$. Let $F_0=F_1(0,\cdot)$ be
    the function obtained by fixing $x=0$ in $F_1$. The contraction
    rate of $F_0$ is
    \begin{align*}
      \alpha(y) &= \frac{\Phi(F_0(y))}{\Phi(y)}\abs{\pderiv{F_0(y)}{y}} \\
                &=  \frac{y\tuple{\frac{1}{2}-y}}{\frac{1}{2+y}\tuple{\frac{1}{2}-\frac{1}{2+y}}} \cdot \frac{1}{(2+y)^2} \\
                &= (1-2y) \le \frac{11}{13}.
    \end{align*}
		
  \item $1 \in L(v_1)$.
		
    Similarly we could have $x,x^* \ge \frac{1}{13}$. Then
    \begin{align*}
      \alpha(x,y) &= \frac{\Phi(F_1(x,y))}{\Phi(x)}\abs{\pderiv{F_1(x,y)}{x}} + \frac{\Phi(F_1(x,y))}{\Phi(y)}\abs{\pderiv{F_1(x,y)}{y}} \\
      =& \frac{1}{\frac{1-x}{2+y}\tuple{\frac{1}{2}-\frac{1-x}{2+y}}}\tuple{\frac{x\tuple{\frac{1}{2}-x}}{2+y}+\frac{(1-x)y\tuple{\frac{1}{2}-y}}{(2+y)^2}} \\
      =& \frac{x\tuple{\frac{1}{2}-x}(2+y)+y\tuple{\frac{1}{2}-y}(1-x)}{(1-x)\tuple{x+\frac{y}{2}}}.
    \end{align*} 		
    We show that
    $\frac{x\tuple{\frac{1}{2}-x}(2+y)+y\tuple{\frac{1}{2}-y}(1-x)}{(1-x)\tuple{x+\frac{y}{2}}}\le\lambda$
    for $\lambda=\frac{9996}{10000}$, which is equivalent to
    \begin{equation}
      \label{eq:d1t}
      x\tuple{\frac{1}{2}-x}(2+y)+y\tuple{\frac{1}{2}-y}(1-x)\le \lambda\cdot (1-x)\tuple{x+\frac{y}{2}}.
    \end{equation}
    Inequality \eqref{eq:d1t} can be simplified to
    \begin{equation}
      \label{eq:d1t2}
      (1-x)y^2+\tp{x^2-\frac{\lambda}{2}x-\frac{1-\lambda}{2}}y+x^2-(1-\lambda)x\ge 0.
    \end{equation}
    Using the fact that $\frac{1}{13}\le x\le \frac{1}{2}$, we know
    the LHS of \eqref{eq:d1t2} is minimized at
    $y=\frac{2x^2-\lambda x-1+\lambda}{4x-4}$. Plugging this into
    \eqref{eq:d1t2} and it can be simplified to
    \[
      1 - 2\lambda + \lambda^2 + (16 - 14 \lambda - 2 \lambda^2) x +
      (-52 + 36 \lambda + \lambda^2) x^2 + (32 - 20\lambda) x^3 + 4
      x^4\le 0,
    \]
    which holds for $\frac{1}{13}\le x\le \frac{1}{2}$.
  \end{enumerate}
	
\end{enumerate}

To summarize the analysis in Section~\ref{sec:deg=1}, we have
\[ \alpha(\*x) \le \lambda = \frac{9996}{10000}. \]

\subsection{$\deg[G]{v} = 2$}\label{sec:deg=2}
Denote by $v_1, v_2$ two neighbors of $v$. Let $d_i = \deg[G_v]{v_i}$
and we have $d_1 \ge d_2$. In this case
\[ F_i = F_i(\*x,\*y) = \frac{(1-f_i)(1-y_i)}{\sum_{j \in
      L(v)}(1-f_j)(1-y_j)} \] where
\begin{align*}
  f_i = 
  \begin{cases}
    \frac{\prod_{k=1}^{d_1}(1-x_{k,i})}{\sum_{j \in L(v_1)}\prod_{k=1}^{d_1}(1-x_{k,j})} & i \in L(v_1) \\
    0 & i \notin L(v_1).
  \end{cases}
\end{align*}

\subsubsection{$d_1=2$}\label{sec:d1=2}

We first note that for $i,j\in L(v_1)$, $f_i/f_j$ is bounded by
constants.

\begin{proposition}\label{prop:ratio_bound}
  If $d_1=1$ or $2$ and for every $1\le k\le d_1$, $j\in L(v_1)$, we
  have $0\le x_{k,j}\le \frac{1}{2}$, then for every $i,j\in L(v_1)$,
  it holds that $\frac{1}{4}\le f_i/f_j\le 4$ and
  $f_i\ge\frac{1}{13}$.
\end{proposition}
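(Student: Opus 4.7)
The plan is to bound both quantities by direct substitution into the formula
\[
  f_i=\frac{\prod_{k=1}^{d_1}(1-x_{k,i})}{\sum_{j\in L(v_1)}\prod_{k=1}^{d_1}(1-x_{k,j})},
\]
using only the hypothesis $0\le x_{k,j}\le\tfrac12$, i.e.\ $\tfrac12\le 1-x_{k,j}\le 1$.

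For the ratio bound, I would first note the telescoping identity
\[
  \frac{f_i}{f_j}=\prod_{k=1}^{d_1}\frac{1-x_{k,i}}{1-x_{k,j}}.
\]
Each factor $(1-x_{k,i})/(1-x_{k,j})$ lies in $[\tfrac12,2]$, so the product lies in $[2^{-d_1},2^{d_1}]$. Since $d_1\in\{1,2\}$, this is contained in $[\tfrac14,4]$.

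For the lower bound $f_i\ge\tfrac1{13}$, I would separate the $j=i$ term in the denominator and bound the remaining $|L(v_1)|-1$ terms by $1$ each (since each factor $1-x_{k,j}\le 1$), while bounding the numerator from below by $(1/2)^{d_1}$. This yields
\[
  f_i\ge\frac{(1/2)^{d_1}}{(1/2)^{d_1}+(|L(v_1)|-1)}.
\]
Then I would invoke the reachability assumption to control $|L(v_1)|$: when $d_1=2$ we have $\deg_G(v_1)=3$, forcing $|L(v_1)|=4$ and giving the bound $\tfrac{1/4}{1/4+3}=\tfrac1{13}$; when $d_1=1$ we have $|L(v_1)|\le 4$, giving at worst $\tfrac{1/2}{1/2+3}=\tfrac17>\tfrac1{13}$. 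In both cases $f_i\ge\tfrac1{13}$, mimicking exactly the estimation technique used in the proof of Proposition~\ref{prop:bd1}.

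There is no real obstacle here: the result is essentially an exercise in plugging the uniform bounds $\tfrac12\le 1-x_{k,j}\le 1$ into the explicit expression for $f_i$. The only point requiring care is confirming $|L(v_1)|\le 4$ (and $|L(v_1)|=4$ when $d_1=2$), which follows from reachability together with the ambient assumption $q=4$.
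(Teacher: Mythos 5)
Your proof of the ratio bound is exactly the paper's: the common denominator cancels, leaving a product of $d_1\le 2$ factors each in $[\tfrac12,2]$. For the lower bound $f_i\ge\tfrac{1}{13}$, you take a different route from the paper. The paper leverages the normalization $\sum_{j}f_j=1$ (established just after Proposition~\ref{prop:sumone}) together with the ratio bound just proved: since $f_j\le 4f_i$ for every $j$ and there are at most three other colors, $1=\sum_j f_j\le f_i+3\cdot 4f_i=13f_i$. You instead bound the explicit formula directly, using $N\ge(1/2)^{d_1}$ for the numerator and $1$ for each competing term in the denominator to get $f_i\ge (1/2)^{d_1}/\bigl((1/2)^{d_1}+|L(v_1)|-1\bigr)$, then note $|L(v_1)|\le 4$. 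Both arguments are correct and of comparable length; the paper's is slightly slicker because it reuses the ratio bound it just proved, and it never needs to unpack the formula for $f_i$ again, while yours is more self-contained and in fact gives a sharper bound ($\tfrac17$) in the $d_1=1$ subcase. One small remark: your invocation of reachability to force $|L(v_1)|=4$ when $d_1=2$ is unnecessary — your lower bound is monotone decreasing in $|L(v_1)|$, so $|L(v_1)|\le 4$ (i.e.\ $q=4$) is all you need in either case.
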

\begin{proof}
  For every $i,j\in L(v_1)$, we have
  \[
    \frac{f_i}{f_j}=\frac{\prod_{k=1}^{d_1}(1-x_{k,i})}{\prod_{k=1}^{d_1}(1-x_{k,j})}.
  \]
  Then the bound for the ratio follows from $d_1=1,2$ and
  $0\le x_{k,j}\le \frac{1}{2}$ for every $1\le k\le d_1$,
  $j\in L(v_1)$.

  To see the lower bound for $f_i$, we note that $\abs{L(v)}\le 4$ and thus $1=\sum_{j\in L(v)}f_j\le f_i+4\sum_{j\in L(v)\setminus \set{i}}f_i\le 13f_i$.
\end{proof}

To prove Lemma~\ref{lem:meanvalue} it suffices to bound the
contraction rate
\begin{align*}
  \alpha(\*x,\*y) = \sum_{i=1}^{2}\sum_{j \in L(v_1)}\frac{\Phi(F_1)}{\Phi(x_{ji})}\abs{\pderiv{F_1(\*x)}{x_{ji}}}+\sum_{j=1}^{4}\frac{\Phi(F_1)}{\Phi(y_j)}\abs{\pderiv{F_1(\*y)}{y_j}}.
\end{align*}

Simple calculation yields
\begin{align*}
  &\sum_{i=1}^{2}\sum_{j \in L(v_1)}\frac{\Phi(F_1)}{\Phi(x_{ji})}\abs{\pderiv{F_1}{x_{ji}}} \\
  =& \sum_{i=1}^{2}\tuple{\frac{\Phi(F_1)}{\Phi(x_{1i})}\cdot\frac{F_1f_1}{1-x_{1i}}\cdot\sum_{k=2}^{4}\frac{F_k}{1-f_k} + \sum_{j \in L(v_1)\setminus \{1\}}\frac{\Phi(F_1)}{\Phi(x_{ji})}\cdot\frac{F_1f_j}{1-x_{ji}}\abs{\frac{1}{1-f_1}-\sum_{\substack{k=1 \\ k \neq j}}^{4}\frac{F_k}{1-f_k}}} \\
  \le& \sum_{i=1}^{2}\tuple{M\cdot\Phi(F_1)F_1\tuple{f_1\sum_{k=2}^{4}\frac{F_k}{1-f_k}+\sum_{j \in L(v_1)\setminus \{1\}}f_j\abs{\frac{1}{1-f_1}-\sum_{\substack{k=1 \\ k \neq j}}^{4}\frac{F_k}{1-f_k}}}} \defeq 2 \cdot P_1(\*f,\*y), 
\end{align*}
\begin{align*}
  \sum_{j=1}^{4}\frac{\Phi(F_1)}{\Phi(y_j)}\abs{\pderiv{F_1}{y_j}} = \frac{\Phi(F_1)}{\Phi(y_1)}\cdot\frac{F_1(1-F_1)}{1-y_1} + \sum_{j=2}^{4}\frac{\Phi(F_1)}{\Phi(y_j)}\cdot\frac{F_1F_j}{1-y_j} \defeq P_2(\*f,\*y).
\end{align*}

Now we only need to bound
\[ \alpha(\*x, \*y) = 2P_1(\*f,\*y)+P_2(\*f,\*y). \]

Notice that after substituting $M$ for $\frac{1}{(1-x)\Phi(x)}$ we can
ignore $\*x$ and treat $P_1$ and $P_2$ as functions of $\*f$ and
$\*y$, with some constraints on $\*f$ as we will see soon.

\paragraph{Discussion on the absolute value.}

Let
$D_j\defeq\frac{1}{1-f_1}-\sum_{\substack{k=1 \\k\ne
    j}}^{4}\frac{F_k}{1-f_k}$ for $j=2,3,4$. We show that at least two
of these $D_j$'s are nonnegative. Assume for the contraction that
$D_2,D_3<0$, then we obtain
\begin{align*}
  \frac{1}{1-f_1}-\frac{F_1}{1-f_1}-\frac{F_3}{1-f_3}-\frac{F_4}{1-f_4}&<0\\
  \frac{1}{1-f_1}-\frac{F_1}{1-f_1}-\frac{F_2}{1-f_2}-\frac{F_4}{1-f_4}&<0.
\end{align*}
This is equivalent to
\begin{align}
  (1-f_2)(1-y_2)+(f_1-f_3)(1-y_3)+(f_1-f_4)(1-y_4)&<0\label{eq:abs1}\\
  (1-f_3)(1-y_3)+(f_1-f_2)(1-y_2)+(f_1-f_4)(1-y_4)&<0\label{eq:abs2}
\end{align}
\eqref{eq:abs1}+\eqref{eq:abs2} gives
\[
  (1+f_1-2f_3)(1-y_3)+(1+f_1-2f_2)(1-y_2)+2(f_1-f_4)(1-y_4)<0
\]
Since $1+f_1-2f_3,1+f_1-2f_2>0$ and $0<y_2,y_3,y_4<\frac{1}{2}$,
\[
  3f_1+1-f_2-f_3-2f_4<0.
\]
Since $d_1=2$ we have $|L(v)|=4$ so Proposition~\ref{prop:ratio_bound} holds for all pairs of $f_i$, $f_j$, $1\le i < j\le 4$. Applying $f_1+f_2+f_3+f_4=1$, we obtain $4f_1<f_4$, which is a
contradiction.

\bigskip Therefore, we have either all $D_j$ for $j=2,3,4$ are
nonnegative or at most one of it is negative. Assume $D_2$ is
negative, i.e.,
\[
  (1-f_2)(1-y_2)+(f_1-f_3)(1-y_3)+(f_1-f_4)(1-y_4)<0.
\]
Since $(1-f_2)(1-y_2)\ge 0$, we have either $f_1<f_3$ or $f_1<f_4$ or
both. W.l.o.g. assume $f_1<f_3$, now we distinguish between two cases:
\begin{itemize}
\item ($f_1<f_4$) In this case, we can let $y_2=\frac{1}{2}$ and
  $y_3=y_4=0$, this gives
  \[
    1-f_2+2(f_1-f_3)+2(f_1-f_4)<0.
  \]
  Using the identity $f_1+f_2+f_3+f_4=1$, we obtain
  \[
    6f_1+f_2-1<0.
  \]
\item ($f_1\ge f_4$) In this case, we can let $y_2=\frac{1}{2}$,
  $y_3=0$ and $f_4=f_1$, this gives
  \[
    1-f_2+2(f_1-f_3)<0.
  \]
  Using $f_3=1-f_1-f_2-f_4\le\frac{7}{8}-f_1-f_2$, we obtain
  \[
    4f_1+f_2-\frac{3}{4}<0.
  \]
\end{itemize}

Now we can continue our analysis of $\alpha(\*x,\*y)$.
\paragraph{Case $1$: All $D_j$ are nonnegative for $j = 2, 3, 4$.}

Introduce the following function of $w$ and $f$
\[
  G_\xi(w,f)=\frac{1-f}{\Phi(1-\frac{w}{1-f})}+4M\xi\cdot\frac{w}{1-f} \]
where $\xi \in [0,1]$ is some constant parameter. The following two
lemmas would be very useful in our analysis.

\begin{lemma}\label{lem:jensen0} $G_0(w,f)$ is concave when
  $f \in [0,\frac{1}{2}]$ and $\frac{1-f}{2} \le w \le 1-f$, hence for
  all $w_i, f_i$ satisfying $f_i \in [0,\frac{1}{2}]$ and
  $\frac{1-f_i}{2} \le w_i \le 1-f_i, i=1,2,\cdots,n$, we have
  \[ \frac{G_0(w_1,f_1)+G_0(w_2,f_2)+\cdots+G_0(w_n,f_n)}{n} \le
    G_0\tp{\frac{w_1+w_2+\cdots+w_n}{n},\frac{f_1+f_2+\cdots+f_n}{n}}. \]
\end{lemma}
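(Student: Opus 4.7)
The plan is to reduce $G_0$ to an elementary polynomial/rational expression, verify concavity by a direct Hessian calculation, and then invoke the standard multivariate Jensen inequality on what turns out to be an obviously convex domain.

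First I would unpack the definition of $G_0$. Since $\Phi(x)^{-1} = x(\tfrac{1}{2}-x)$, a straight computation gives
\[
G_0(w,f) \;=\; (1-f)\!\left(1 - \tfrac{w}{1-f}\right)\!\left(\tfrac{w}{1-f} - \tfrac{1}{2}\right) \;=\; -\frac{w^2}{1-f} \;+\; \frac{3w}{2} \;-\; \frac{1-f}{2}.
\]
On the specified domain $f\in[0,\tfrac12]$ and $\tfrac{1-f}{2}\le w\le 1-f$ both parenthesized factors are nonnegative, so the two expressions agree; crucially, the right-hand form is smooth on the whole half-space $\{f<1\}$, which frees us from worrying about the shape of the domain when differentiating.

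Next I would compute the Hessian. Only the term $-w^2/(1-f)$ carries curvature, and its second partials arrange into the rank-one factorization
\[
\nabla^2 G_0(w,f) \;=\; -\frac{2}{1-f}\begin{pmatrix} 1 \\[2pt] w/(1-f) \end{pmatrix}\!\begin{pmatrix} 1 & w/(1-f) \end{pmatrix},
\]
which is manifestly negative semidefinite whenever $f<1$ (equivalently, $\det \nabla^2 G_0 = 0$ while $\operatorname{tr}\nabla^2 G_0 = -\tfrac{2}{1-f}-\tfrac{2w^2}{(1-f)^3}<0$). Hence $G_0$ is concave on all of $\{f<1\}$, and in particular on our domain.

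Finally, the region $\{(w,f):f\in[0,\tfrac12],\ \tfrac{1-f}{2}\le w\le 1-f\}$ is cut out by affine inequalities, hence convex; so averaging the $(w_i,f_i)$ keeps us inside the domain. The multivariate Jensen inequality for a concave function then yields
\[
\frac{1}{n}\sum_{i=1}^{n} G_0(w_i,f_i) \;\le\; G_0\!\left(\frac{w_1+\cdots+w_n}{n},\,\frac{f_1+\cdots+f_n}{n}\right),
\]
which is exactly the stated bound. I do not anticipate a real obstacle: the only place requiring a touch of care is rewriting $G_0$ as the polynomial/rational expression above so that differentiation is painless, and noting that the averaging step preserves the affine domain constraints so the Jensen conclusion is legitimate.
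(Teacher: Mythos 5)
Your proof is correct and takes essentially the same approach as the paper's: both compute the Hessian of $G_0$ explicitly and observe it is negative semi-definite on the domain $f\in[0,\tfrac{1}{2}]$ (indeed wherever $f<1$), then invoke Jensen's inequality. Your rank-one factorization of the Hessian and your remark that the domain's affine constraints make it convex are helpful clarifications that the paper's one-line proof leaves implicit, though the aside that the two forms of $G_0$ ``agree because both factors are nonnegative'' is unnecessary — the identity $1/\Phi(x)=x(\tfrac12-x)$ is purely algebraic and sign-independent.
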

\begin{proof}
  The Hessian of $G_0(w,f)$ is $
  \begin{bmatrix}
    -\frac{2}{1-f} & -\frac{2w}{(1-f)^2}\\
    -\frac{2w}{(1-f)^2} & -\frac{2w^2}{(1-f)^3}
  \end{bmatrix}
  $, which is negative semi-definite when $f\in[0,\frac{1}{2}]$.
\end{proof}

\begin{lemma}\label{lem:jensen} For all
  $w_1, w_2, w_3 \in [0,\frac{1}{2}]$ and
  $f_1, f_2, f_3 \in [\frac{1}{13},\frac{1}{2}]$ such that
  $\frac{1-f_i}{2} \le w_i \le 1-f_i, i=1,2,3$, we have
  \begin{align*}
    \frac{1}{2}\tuple{G_\xi(w_1,f_1)+G_\xi(w_2,f_2)} &\le \kappa \cdot G_\xi\tuple{\frac{w_1+w_2}{2},\frac{f_1+f_2}{2}} \\
    \frac{1}{3}\tuple{G_\xi(w_1,f_1)+G_\xi(w_2,f_2)+G_\xi(w_3,f_3)} &\le \kappa \cdot G_\xi\tuple{\frac{w_1+w_2+w_3}{3},\frac{f_1+f_2+f_3}{3}}
  \end{align*}
  holds for any $\xi \in [0,1]$, where $\kappa=\frac{1038}{1000}$.
\end{lemma}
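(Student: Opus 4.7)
The plan is to change coordinates to $s_i = 1-f_i \in [1/2, 12/13]$ and $r_i = w_i/s_i$, in which the function takes the simple form
\[
  G_\xi(w,f) \;=\; s\cdot g(r) + 4M\xi\,r, \qquad g(r) \defeq \frac{(1-r)(2r-1)}{2},
\]
with $g$ a concave quadratic ($g''\equiv -2$). The decisive feature is that $G_\xi$ is \emph{linear} in $s$ for each fixed $r$, so when all $r_i$ coincide the inequality already holds with equality. The slack $\kappa-1>0$ is therefore only needed to absorb the spread of the $r_i$'s.

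Introduce the Jensen surplus and excess
\[
  A \defeq \bar s\, g(\bar r) - \frac{1}{n}\sum_{i} s_i g(r_i), \qquad B \defeq \frac{1}{n}\sum_{i} r_i - \bar r,
\]
where $\bar s = (1/n)\sum s_i$ and $\bar r = \bar w/(1-\bar f) = (\sum s_i r_i)/(\sum s_i)$. A short computation (the Lagrange identity, using $g''\equiv -2$) yields the closed form $A = [n\sum_k s_k]^{-1}\sum_{i<j} s_i s_j (r_i-r_j)^2$; in particular $A\ge 0$, which is the content of Lemma~\ref{lem:jensen0}. Expanding both sides of the target inequality, the $G_\xi(\bar w,\bar f)$ contributions cancel and the statement reduces to
\[
  4M\xi\, B \;\le\; (\kappa-1)\, G_\xi(\bar w, \bar f) + A.
\]
When $B\le 0$ this is immediate since the right-hand side is nonnegative; when $B>0$ the worst case is $\xi=1$. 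For $n=2$, explicit computation gives $B=(r_1-r_2)(s_2-s_1)/[2(s_1+s_2)]$ and $A=s_1 s_2(r_1-r_2)^2/[2(s_1+s_2)]$, and for $n=3$ one obtains a similar antisymmetric sum over pairs. Substituting converts the inequality into a polynomial inequality in the compactly bounded variables, with $s_i\in[1/2,12/13]$ and $r_i\in[1/2, 1/(2 s_i)]$.

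The main obstacle is that the target constant $\kappa=1038/1000$ is numerically tight. For small spread $|r_i-r_j|$ the linear term $4MB$ dominates the quadratic $A$, so the deficit must be absorbed by the slack $(\kappa-1)\,G_\xi(\bar w,\bar f)$, which stays bounded away from zero because $\bar r\ge 1/2$ keeps $4M\bar r$ bounded below; for large spread $A$ comfortably dominates and the slack is unnecessary. I would finish by a KKT/boundary analysis: pushing each $s_i$ to $\{1/2, 12/13\}$ and each $r_i$ to its corresponding extreme reduces the verification to a finite list of one- or two-variable polynomial inequalities, which can be checked directly (and are plausibly what pins down the specific constant $1038/1000$ chosen by the authors).
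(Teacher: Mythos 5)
Your $(s,r)$ change of variables $s_i=1-f_i$, $r_i=w_i/s_i$ is clean and does make the structure transparent: indeed $G_\xi(w,f)=s\,g(r)+4M\xi r$ with $g(r)=(1-r)(r-\tfrac12)$, and your reduction of the target to
\[
4M\xi B \;\le\; (\kappa-1)\,G_\xi(\bar w,\bar f)+A,
\qquad A=\frac{1}{n\sum_k s_k}\sum_{i<j}s_is_j(r_i-r_j)^2\ge 0,
\]
is correct (the $B\le 0$ dismissal and the reduction to $\xi=1$ when $B>0$ both check out). This is genuinely different from the paper's route, which never changes variables; instead it writes $G_\xi=(1-\xi)G_0+\xi G_1$ and uses concavity of $G_0$ (their Lemma~\ref{lem:jensen0}) to reduce to $\xi=1$, then further replaces the three-variable statement by a \emph{composition} of two two-variable Mathematica-verified inequalities with constants $\kappa_1,\kappa_2$ satisfying $\kappa_1\kappa_2\le\kappa$.

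The gap is in your proposed finish. You assert that a KKT/boundary analysis, ``pushing each $s_i$ to $\{1/2,12/13\}$ and each $r_i$ to its corresponding extreme,'' reduces the claim to a finite list of low-dimensional polynomial inequalities. That step is not justified, and in fact the sign structure goes the wrong way. Writing the inequality you need as $\Psi:=4M B - (\kappa-1)G_1(\bar w,\bar f)-A\le 0$ (taking $\xi=1$) and clearing the common denominator $\sum_k s_k$, the resulting polynomial is \emph{concave} in each $r_i$: the coefficient of $r_1^2$ is
\[
\frac{s_1\bigl[(\kappa-1)s_1-s_2\bigr]}{2(s_1+s_2)}<0
\]
(for $n=2$; the analogous computation for $n=3$ gives the same sign), and similarly the coefficient of $s_1^2$ in $2(\sum s_k)\Psi$ is $-(\kappa-1)g(r_1)\le 0$ for $r_1\in[\tfrac12,1]$. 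A univariate concave function attains its maximum at an interior critical point whenever that point lies in the admissible interval, so verifying the inequality at vertices of the box does not control the maximum. You would either have to carry out the full critical-point analysis (with the coupled constraint $r_i\le 1/(2s_i)$, this is messy and you have not done it), or do exactly what the paper does and hand the polynomial system to a quantifier-elimination routine. Even then, your direct route leaves a six-variable inequality for the $n=3$ case, whereas the paper's chaining trick ($\kappa_1\kappa_2\le\kappa$) only ever requires four-variable verifications. As written, the proposal stops short of a proof.
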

\begin{proof}
  
  First we shall point out that if the lemma holds for $\xi=1$, then it should hold for any other $0\le\xi<1$. \\
  
  Suppose the lemma holds for $\xi=1$. That is
  \[ \frac{1}{2}\tuple{G_1(w_1,f_1)+G_1(w_2,f_2)} \le \kappa \cdot
    G_1\tuple{\frac{w_1+w_2}{2},\frac{f_1+f_2}{2}}. \] Rewrite
  \[ G_\xi(w,f)=(1-\xi)G_0(w,f)+\xi G_1(w,f). \] Recall that $G_0$ is
  concave, thus
  \begin{align*}
    \frac{1}{2}\tuple{G_\xi(w_1,f_1)+G_\xi(w_2,f_2)} &\le (1-\xi)G_0\tuple{\frac{w_1+w_2}{2},\frac{f_1+f_2}{2}}+\xi\kappa\cdot G_1\tuple{\frac{w_1+w_2}{2},\frac{f_1+f_2}{2}} \\
                                                     &\le (1-\xi)\kappa\cdot G_0\tuple{\frac{w_1+w_2}{2},\frac{f_1+f_2}{2}}+\xi\kappa\cdot G_1\tuple{\frac{w_1+w_2}{2},\frac{f_1+f_2}{2}} \\
                                                     &= \kappa \cdot G_\xi\tuple{\frac{w_1+w_2}{2},\frac{f_1+f_2}{2}}.
  \end{align*}
  The same argument works for the 3-variable case. So it remains to prove the $\xi=1$ case. \\
  It can be rigorously proved by \emph{Mathematica} (the codes are in
  Section~\ref{sec:computer}) that for all
  $w_1, w_2, w_3 \in [0,\frac{1}{2}]$ and
  $f_1, f_2, f_3 \in [\frac{1}{13},\frac{1}{2}]$ such that
  $\frac{1-f_i}{2} \le w_i \le 1-f_i, i=1,2,3$, we have
  \[ \frac{1}{2}(G_1(w_1,f_1)+G_1(w_2,f_2)) \le \kappa_1\cdot
    G_1\tuple{\frac{w_1+w_2}{2},\frac{f_1+f_2}{2}} \]
  \[ \frac{1}{3}(G_1(w_1,f_1)+2G_1(w_2,f_2)) \le \kappa_2\cdot
    G_1\tuple{\frac{w_1+2w_2}{3},\frac{f_1+2f_2}{3}}. \] Here
  $\kappa_1=\frac{10195}{10000}, \kappa_2=\frac{10181}{10000}$ and
  $\kappa_1\kappa_2 \le \kappa$. As a consequence,
  \begin{align*}
    &\phantom{{}={}}\frac{1}{3}(G_1(w_1,f_1)+G_1(w_2,f_2)+G_1(w_3,f_3)\\
    &= \frac{1}{2}\tuple{\frac{1}{3}(2G_1(w_1,f_1)+G_1(w_2,f_2))+\frac{1}{3}(G_1(w_2,f_2)+2G_1(w_3,f_3))} \\
    &\le \kappa_2 \cdot \frac{1}{2}\tuple{G_1\tuple{\frac{2w_1+w_2}{3},\frac{2f_1+f_2}{3}}+G_1\tuple{\frac{w_2+2w_3}{3}+\frac{f_2+2f_3}{3}}} \\
    &\le \kappa_1\kappa_2 \cdot G_1\tuple{\frac{1}{2}
      \tuple{\frac{2w_1+w_2}{3}+\frac{w_2+2w_3}{3}},\frac{1}{2}
      \tuple{\frac{2f_1+f_2}{3}+\frac{f_2+2f_3}{3}}} \\
    &\le \kappa \cdot G_1\tuple{\frac{w_1+w_2+w_3}{3},\frac{f_1+f_2+f_3}{3}}.
  \end{align*}
	
\end{proof}

Recall that $f_j=0$ for $j \notin L(v_1)$, and
\begin{align*}
  &\phantom{{}={}}\sum_{j \in L(v_1) \setminus \set{1}}f_j\tuple{\frac{1}{1-f_1}-\sum_{\substack{k=1 \\ k \neq j}}^{4}\frac{F_k}{1-f_k}} \\
  &= \sum_{j=2}^{4}f_j\tuple{\frac{1}{1-f_1}-\sum_{\substack{k=1 \\ k \neq j}}^{4}\frac{F_k}{1-f_k}} 
  = 1 - \sum_{j=2}^{4}f_j\sum_{\substack{k=1 \\ k \neq j}}^{4}\frac{F_k}{1-f_k} \\
  &= 1 - \sum_{j=1}^{4}f_j\sum_{\substack{k=1 \\ k \neq j}}^{4}\frac{F_k}{1-f_k} + f_1\sum_{k=2}^{4}\frac{F_k}{1-f_k} 
  = 1 - \sum_{k=1}^{4}\frac{F_k}{1-f_k}\sum_{\substack{j=1 \\ j \neq k}}^{4}f_j + f_1\sum_{k=2}^{4}\frac{F_k}{1-f_k} \\
  &= f_1\sum_{j=2}^{4}\frac{F_j}{1-f_j}.
\end{align*}
So we have
\begin{align*}
  \alpha &= \Phi(F_1)F_1\tuple{\frac{1-F_1}{(1-y_1)\Phi(y_1)}+\sum_{j=2}^{4}\frac{F_j}{(1-y_j)\Phi(y_j)}+4Mf_1\sum_{j=2}^{4}\frac{F_j}{1-f_j}}.
\end{align*}
Define symmetric forms of $F_k$ as follows.
\[ \hat{F}_k(f_1,f_2,y_1,y_2) =
  \frac{(1-f_k)(1-y_k)}{(1-f_1)(1-y_1)+3(1-f_2)(1-y_2)}, \quad
  k=1,2. \] Then we can define the symmetric form of $\alpha$
\[ \hat{\alpha}(f_1,f_2,y_1,y_2) =
  \Phi(\hat{F}_1)\hat{F}_1\tuple{\frac{1-\hat{F}_1}{(1-y_1)\Phi(y_1)}+\frac{3\hat{F}_2}{(1-y_2)\Phi(y_2)}+12Mf_1\cdot\frac{\hat{F}_2}{1-f_2}}. \]
\begin{lemma}\label{lem:sym3+} For all
  $\*f,\*y \in [0,\frac{1}{2}]^4$ such that
  $\frac{1}{13} \le f_1,f_2,f_3,f_4 \le \frac{1}{2}$ and
  $f_1+f_2+f_3+f_4=1$, there exists
  $\hat{f}_2,\hat{y}_2 \in [0,\frac{1}{2}]$ such that
  $f_1+3\hat{f}_2=1$ and
  \[ \alpha(\*f,\*y) \le
    \kappa\cdot\hat{\alpha}(f_1,\hat{f}_2,y_1,\hat{y}_2) \] where
  $\kappa=\frac{1038}{1000}$.
\end{lemma}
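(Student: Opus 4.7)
The plan is to view the sum $\sum_{j=2}^{4}$ inside $\alpha$ as three copies of the function $G_{\xi}(w,f)$ from Lemma~\ref{lem:jensen}, taken at the common parameter $\xi=f_{1}$, and then invoke the three-variable Jensen-type inequality of that lemma. Concretely, set $w_{k}:=(1-f_{k})(1-y_{k})$ for $k=1,\dots,4$, so $S:=\sum_{k=1}^{4}w_{k}$ satisfies $F_{k}=w_{k}/S$. Since $w_{j}/(1-f_{j})=1-y_{j}$ and thus $1-w_{j}/(1-f_{j})=y_{j}$, a direct computation gives, for each $j\in\{2,3,4\}$,
\[
\frac{F_{j}}{(1-y_{j})\Phi(y_{j})}+4Mf_{1}\cdot\frac{F_{j}}{1-f_{j}}=\frac{1}{S}\left[\frac{1-f_{j}}{\Phi(y_{j})}+4Mf_{1}(1-y_{j})\right]=\frac{G_{f_{1}}(w_{j},f_{j})}{S}.
\]

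Next I would define the symmetrized parameters $\hat{f}_{2}:=(1-f_{1})/3$ and $\hat{w}:=(w_{2}+w_{3}+w_{4})/3$, and let $\hat{y}_{2}$ be determined by $(1-\hat{f}_{2})(1-\hat{y}_{2})=\hat{w}$. The constraint $y_{j}\in[0,\tfrac{1}{2}]$ yields $\tfrac{1-f_{j}}{2}\le w_{j}\le 1-f_{j}$, and averaging gives $\tfrac{1-\hat{f}_{2}}{2}\le\hat{w}\le 1-\hat{f}_{2}$, so that $\hat{y}_{2}\in[0,\tfrac{1}{2}]$ as needed. The crucial observation is that the denominator is preserved:
\[
\hat{S}:=(1-f_{1})(1-y_{1})+3(1-\hat{f}_{2})(1-\hat{y}_{2})=w_{1}+3\hat{w}=S,
\]
whence $\hat{F}_{1}=F_{1}$, and the coefficient $\Phi(F_{1})F_{1}$ together with the first-neighbor term $\tfrac{1-F_{1}}{(1-y_{1})\Phi(y_{1})}$ are identical in $\alpha$ and $\hat{\alpha}$. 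The same identity applied at $(w,f)=(\hat{w},\hat{f}_{2})$ gives
\[
\frac{3\hat{F}_{2}}{(1-\hat{y}_{2})\Phi(\hat{y}_{2})}+12Mf_{1}\cdot\frac{\hat{F}_{2}}{1-\hat{f}_{2}}=\frac{3\,G_{f_{1}}(\hat{w},\hat{f}_{2})}{S}.
\]

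Finally I would apply the three-variable inequality of Lemma~\ref{lem:jensen} with parameter $\xi=f_{1}\in[\tfrac{1}{13},\tfrac{1}{2}]\subseteq[0,1]$ to the triples $(w_{j},f_{j})$, $j=2,3,4$, whose hypotheses have just been verified, to obtain $\sum_{j=2}^{4}G_{f_{1}}(w_{j},f_{j})\le 3\kappa\,G_{f_{1}}(\hat{w},\hat{f}_{2})$. Dividing by $S$, multiplying by $\Phi(F_{1})F_{1}$, and using $\kappa\ge 1$ to absorb the unchanged term $\tfrac{1-F_{1}}{(1-y_{1})\Phi(y_{1})}$, yields $\alpha(\*f,\*y)\le\kappa\,\hat{\alpha}(f_{1},\hat{f}_{2},y_{1},\hat{y}_{2})$.

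The main obstacle is really just the bookkeeping step of recognizing the $G_{\xi}$ template: once the substitution $\xi=f_{1}$ with $w=(1-f)(1-y)$ is in place, all the analytic work has already been done in Lemma~\ref{lem:jensen}. The subtle constraint that forces the choice of $\hat{w}$ as the \emph{arithmetic} mean of the $w_{j}$ (rather than, say, making $\hat{F}_{2}$ the arithmetic mean of $F_{2},F_{3},F_{4}$ directly) is precisely the requirement $\hat{S}=S$; this both preserves $\hat{F}_{1}=F_{1}$ and places the $G_{f_{1}}$ terms on the two sides of the desired inequality over a common denominator so that Lemma~\ref{lem:jensen} applies cleanly.
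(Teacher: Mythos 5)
Your proof is correct and follows essentially the same route as the paper: you rewrite the inner sum as $\frac{1}{A}\sum_{j=2}^4 G_{f_1}(w_j,f_j)$ with $w_j=(1-f_j)(1-y_j)$, symmetrize by taking arithmetic means of the $w_j$ and $f_j$ (which preserves the denominator and hence $\hat F_1=F_1$), apply the three-variable inequality of Lemma~\ref{lem:jensen} with $\xi=f_1$, and absorb the unchanged first-neighbor term using $\kappa\ge 1$. The paper's writeup has typos in its index set when defining $\hat w_2,\hat f_2$ (writing $w_1+w_2+w_3$ instead of $w_2+w_3+w_4$); your version has the indices right and also makes the denominator-preservation argument slightly more explicit, but the argument is the same.
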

\begin{proof}
  Let $w_k=(1-f_k)(1-y_k),k=2,3,4$,
  $A(\*f,\*y)=\sum_{j=1}^{4}(1-f_j)(1-y_j)$ be the denominator of
  $F_k$, and $\hat{A}(f_1,f_2,y_1,y_2)=(1-f_1)(1-y_1)+3(1-f_2)(1-y_2)$
  be the denominator of $\hat{F}_k$. Then
  \begin{align*}
    \alpha &= \Phi(F_1)F_1\tuple{\frac{1-F_1}{(1-y_1)\Phi(y_1)}+\frac{1}{A}\sum_{j=2}^{4}\frac{1-f_j}{\Phi(1-\frac{w_j}{1-f_j})}+4Mf_1\cdot\frac{w_j}{1-f_j}} \\
           &= \Phi(F_1)F_1\tuple{\frac{1-F_1}{(1-y_1)\Phi(y_1)}+\frac{1}{A}\sum_{j=2}^{4}G_{f_1}(w_j,f_j)}.
  \end{align*}
  Take $\hat{w}_2$ and $\hat{f}_2$ such that $3\hat{w}_2=w_1+w_2+w_3$,
  $3\hat{f}_2=f_1+f_2+f_3$, and take
  $\hat{y}_2=1-\frac{\hat{w}_2}{1-\hat{f}_2}$. Therefore
  $f_1+3\hat{f}_2=f_1+f_2+f_3+f_4=1$ and
  \[ A(\*f,\*y)=\hat{A}(f_1,\hat{f}_2,y_1,\hat{y}_2) \]
  \[ F_1(\*f,\*y) = \hat{F}_1(f_1,\hat{f}_2,y_1,\hat{y}_2). \]
  Furthermore, $w_j$ and $y_j$ satisfy the condition of Lemma~\ref{lem:jensen} hence
  \begin{align*}
    \alpha &\le \Phi(F_1)F_1\tuple{\frac{1-F_1}{(1-y_1)\Phi(y_1)} + 3\kappa \cdot \frac{G_{f_1}(\hat{w}_2,\hat{f}_2)}{A}} \\
           &= \Phi(\hat{F}_1)\hat{F}_1\tuple{\frac{1-\hat{F}_1}{(1-y_1)\Phi(y_1)} + 3\kappa\cdot\tuple{\frac{\hat{F}_2}{(1-\hat{y}_2)\Phi(\hat{y}_2)}+4Mf_1\cdot\frac{\hat{F}_2}{1-\hat{f}_2}}} \\
           &\le \kappa \cdot \Phi(\hat{F}_1)\hat{F_1}\tuple{\frac{1-\hat{F}_1}{(1-y_1)\Phi(y_1)}+\frac{3\hat{F}_2}{(1-\hat{y}_2)\Phi(\hat{y}_2)}+12Mf_1\cdot\frac{\hat{F}_2}{1-\hat{f}_2}} \\
           &= \kappa 
             \cdot \hat{\alpha}(f_1,\hat{f}_2,y_1,\hat{y}_2).
  \end{align*}
\end{proof}

\begin{lemma}\label{lem:resolve3+} For all
  $f_1,f_2,y_1,y_2 \in [0,\frac{1}{2}]$ such that
  $\frac{1}{13} \le f_1 \le \frac{1}{2}$ and $f_1+3f_2=1$, we have
  \[ \hat{\alpha}(f_1,f_2,y_1,y_2) \le \frac{963}{1000}. \]
\end{lemma}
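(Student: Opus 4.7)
The plan is to exploit the constraint $f_1+3f_2=1$ to collapse $\hat{\alpha}$ into an explicit three-variable function, and then verify the numerical bound on that.

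First, I would observe that the constraint yields $(1-f_1)+3(1-f_2)=3$, from which a direct calculation gives $\hat{F}_1+3\hat{F}_2=1$, so $\hat{F}_2=(1-\hat{F}_1)/3$. Substituting this together with $1-f_2=(2+f_1)/3$ into the definition of $\hat{\alpha}$, and using the identity $\Phi(\hat{F}_1)\hat{F}_1(1-\hat{F}_1)=(1-\hat{F}_1)/(\tfrac{1}{2}-\hat{F}_1)$, one obtains
\[
  \hat{\alpha}=\frac{1-\hat{F}_1}{\tfrac{1}{2}-\hat{F}_1}\left(\frac{y_1(\tfrac{1}{2}-y_1)}{1-y_1}+\frac{y_2(\tfrac{1}{2}-y_2)}{1-y_2}+\frac{12Mf_1}{2+f_1}\right),
\]
a function of only $f_1\in[1/13,1/2]$ and $y_1,y_2\in[0,1/2]$, where
\[
  \hat{F}_1=\frac{(1-f_1)(1-y_1)}{(1-f_1)(1-y_1)+(2+f_1)(1-y_2)}.
\]
The identity for the prefactor is valid because $\hat{F}_1<1/2$ throughout the domain: maximizing $\hat{F}_1$ over $(y_1,y_2)$ at $y_1=0$, $y_2=1/2$ gives $\hat{F}_1=2(1-f_1)/(4-f_1)\le 24/51<1/2$, attained at $f_1=1/13$.

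Next, I would bound $\hat{\alpha}$ by analyzing its two factors against each other. Each function $y\mapsto y(\tfrac{1}{2}-y)/(1-y)$ is maximized at $y=1-\sqrt{2}/2$, attaining the value $M=3/2-\sqrt{2}$, which matches the paper's definition of $M$. The prefactor $(1-\hat{F}_1)/(\tfrac{1}{2}-\hat{F}_1)$ is strictly increasing in $\hat{F}_1$, but decreasing $y_1$ or increasing $y_2$ so as to enlarge $\hat{F}_1$ simultaneously shrinks the corresponding $y$-term. This built-in trade-off is the reason $963/1000$ can hold even though neither factor alone is bounded by $\lambda$. To make the trade-off quantitative I would compute $\partial\hat{\alpha}/\partial y_1$ and $\partial\hat{\alpha}/\partial y_2$ to reduce interior critical points to algebraic equations in $f_1$, and separately handle the boundary $y_j\in\{0,1/2\}$, on which $\hat{\alpha}$ collapses to a low-dimensional rational expression that is easy to bound analytically.

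The main obstacle is the rigorous global bound $963/1000$ over the three-dimensional box. Following the paper's treatment of Lemma~\ref{lem:jensen}, I would discharge this by computer-assisted verification in \emph{Mathematica}: partition the box into small subboxes and on each apply either monotonicity in one variable or interval arithmetic, confirming that the supremum stays strictly below $963/1000$. The specific constant is calibrated precisely against Lemma~\ref{lem:sym3+}: one checks $\kappa\cdot 963/1000=1038\cdot 963/10^{7}=999594/10^{6}<9996/10^{4}=\lambda$, so combining Lemmas~\ref{lem:sym3+} and~\ref{lem:resolve3+} recovers the contraction bound $\alpha(\*f,\*y)\le\lambda$ demanded by Lemma~\ref{lem:meanvalue} in Case~1 of the $d_1=2$ regime.
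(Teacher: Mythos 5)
Your proposal is correct and matches the paper's approach: both exploit the constraint $f_1 + 3f_2 = 1$ to collapse $\hat{\alpha}$ onto a three-variable domain in $(f_1, y_1, y_2)$ and then discharge the numeric bound $963/1000$ by \emph{Mathematica} (the paper via \texttt{Resolve[Exists[...]]} with the substitutions $f_2=f_3=f_4=(1-f_1)/3$, $y_3=y_4=y_2$, yours via subbox/interval verification --- the same in spirit). Your explicit simplification to $\frac{1-\hat{F}_1}{\frac{1}{2}-\hat{F}_1}\bigl(\frac{y_1(\frac{1}{2}-y_1)}{1-y_1}+\frac{y_2(\frac{1}{2}-y_2)}{1-y_2}+\frac{12Mf_1}{2+f_1}\bigr)$, and the check that $\hat{F}_1\le 24/51 < \frac{1}{2}$ over the domain, are both correct.
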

\begin{proof}
  The lemma can be rigorously proved by \emph{Mathematica}. The codes
  are in Section~\ref{sec:computer}.
\end{proof}

Theorem~\ref{thm:consistency} and Proposition~\ref{prop:ratio_bound}
provide the condition for Theorem~\ref{lem:sym3+}, and combining
Theorem~\ref{lem:resolve3+} gives
\[ \alpha(\*f,\*y) \le \kappa \cdot
  \hat{\alpha}(f_1,\hat{f}_2,y_1,\hat{y}_2) \le \frac{1038}{1000}
  \cdot \frac{963}{1000} < \frac{9996}{10000}. \]

\paragraph{Case $2$: $D_j$ is negative for some $j$.}

Without loss of generality, we can assume $j = 2$, i.e.,
$\frac{1}{1-f_1}-\sum_{\substack{k=1 \\ k\neq 2}}^{4}\frac{F_k}{1-f_k}
< 0$. Therefore,
\[ \sum_{j=2}^{4}f_j\abs{\frac{1}{1-f_1}-\sum_{\substack{k=1 \\ k\neq
        j}}^{4}\frac{F_k}{1-f_k}} =
  f_1\sum_{j=2}^{4}\frac{F_j}{1-f_j}-2f_2\tuple{\frac{1}{1-f_1}-\sum_{\substack{k=1
        \\ k\neq 2}}^{4}\frac{F_k}{1-f_k}}. \] So we have
\[ \alpha =
  \Phi(F_1)F_1\tuple{\frac{1-F_1}{(1-y_1)\Phi(y_1)}+\sum_{j=2}^{4}\frac{F_j}{(1-y_j)\Phi(y_j)}+4Mf_1\sum_{j=2}^{4}\frac{F_j}{1-f_j}-4Mf_2\tuple{\frac{1}{1-f_1}-\sum_{\substack{k=1
          \\ k\neq 2}}^{4}\frac{F_k}{1-f_k}}} \] which is a function
of $\*f,\*y \in [0,1]^4$ where $f_1+f_2+f_3+f_4=1$.

Similarly, by exploiting the symmetry of $f_3$ and $f_4$, we define
the symmetric form of $F_1$.
\[ \hat{F}_1(f_1, f_2, f_3, y_1, y_2, y_3) =
  \frac{(1-f_1)(1-y_1)}{\hat{A}} \] where
\begin{align*}
  \hat{A} = (1-f_1)(1-y_1)+(1-f_2)(1-y_2)+2(1-f_3)(1-y_3).
\end{align*}
Then we can define the symmetric form of $\alpha$
\[ \hat{\alpha} =
  \frac{\Phi(\hat{F}_1)\hat{F}_1}{\hat{A}}\tuple{\hat{A}(1-\hat{F}_1)P_1+P_2+P_3} \]
where
\begin{align*}
  P_1 &= \frac{1}{(1-y_1)\Phi(y_1)}-\frac{4Mf_2}{1-f_1}, \\
  P_2 &= \frac{1-f_2}{\Phi(y_2)} + 4Mf_1(1-y_2), \\
  P_3 &= \frac{2(1-f_3)}{\Phi(y_3)} + 8M(f_1+f_2)(1-y_3).
\end{align*}
So $\hat{\alpha}$ is a function of $\*f,\*y \in [0,\frac{1}{2}]^3$.
\begin{lemma}\label{lem:sym2+1-} For all
  $\*f,\*y \in [0,\frac{1}{2}]^4$ such that
  $\frac{1}{13} \le f_3,f_4 \le \frac{1}{2}$ and $f_1+f_2+f_3+f_4=1$,
  there exists $\hat{f}_3,\hat{y}_3 \in [0,\frac{1}{2}]$ such that
  $f_1+f_2+2\hat{f}_3=1$ and
  \[ \alpha(\*f,\*y) \le \kappa \cdot
    \hat{\alpha}(f_1,f_2,\hat{f}_3,y_1,y_2,\hat{y}_3) \] where
  $\kappa=\frac{1038}{1000}$.
\end{lemma}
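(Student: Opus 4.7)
The proof will closely mirror the approach used for Lemma~\ref{lem:sym3+}, but with the symmetrization applied to the pair $(f_3,y_3),(f_4,y_4)$ rather than the triple. The asymmetry forced by Case~2 (where $D_2<0$) is that index $2$ now plays a distinguished role alongside index $1$, while indices $3$ and $4$ remain exchangeable, so we should only merge these two.

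The plan is as follows. First, I would rewrite the expression for $\alpha$ in the ``$G$-form''. Letting $w_k=(1-f_k)(1-y_k)$ and $A=\sum_{k=1}^4 w_k$, the term $-4Mf_2\bigl(\frac{1}{1-f_1}-\sum_{k\ne 2}\frac{F_k}{1-f_k}\bigr)$ redistributes as $-4Mf_2\cdot\frac{1-F_1}{1-f_1}$ plus additional contributions $+4Mf_2\cdot\frac{F_j}{1-f_j}$ for $j=3,4$. Combining these with the pre-existing $4Mf_1\cdot\frac{F_j}{1-f_j}$ terms and the $\frac{F_j}{(1-y_j)\Phi(y_j)}$ terms gives
\[
  \alpha=\Phi(F_1)F_1\left((1-F_1)P_1+\frac{1}{A}\Bigl(G_{f_1}(w_2,f_2)+G_{f_1+f_2}(w_3,f_3)+G_{f_1+f_2}(w_4,f_4)\Bigr)\right),
\]
where $P_1=\frac{1}{(1-y_1)\Phi(y_1)}-\frac{4Mf_2}{1-f_1}$. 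The key observation is that indices $3,4$ enter through the \emph{same} parameter $\xi=f_1+f_2$ in $G_\xi$, which is precisely what makes symmetrization possible.

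Next, I would set $\hat{w}_3=\tfrac{w_3+w_4}{2}$, $\hat{f}_3=\tfrac{f_3+f_4}{2}$, and $\hat{y}_3=1-\hat{w}_3/(1-\hat{f}_3)$. The hypothesis $f_3,f_4\in[\tfrac{1}{13},\tfrac{1}{2}]$ gives $\hat{f}_3\in[\tfrac{1}{13},\tfrac{1}{2}]$, while $y_3,y_4\in[0,\tfrac{1}{2}]$ implies $\tfrac{1-f_i}{2}\le w_i\le 1-f_i$ for $i=3,4$, and averaging preserves this to give $\hat{y}_3\in[0,\tfrac{1}{2}]$. Hence the two-variable form of Lemma~\ref{lem:jensen}, applied with $\xi=f_1+f_2\in[0,1]$, yields
\[
  G_{f_1+f_2}(w_3,f_3)+G_{f_1+f_2}(w_4,f_4)\le 2\kappa\cdot G_{f_1+f_2}(\hat{w}_3,\hat{f}_3).
\]
Since averaging preserves the sum, $\hat{A}=(1-f_1)(1-y_1)+(1-f_2)(1-y_2)+2(1-\hat{f}_3)(1-\hat{y}_3)=A$, so $\hat{F}_1=F_1$; similarly $f_1+f_2+2\hat{f}_3=f_1+f_2+f_3+f_4=1$. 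Substituting the Jensen bound and recognizing $P_2=G_{f_1}(w_2,f_2)$ and $P_3=2G_{f_1+f_2}(\hat{w}_3,\hat{f}_3)$ immediately delivers $\alpha\le\kappa\cdot\hat{\alpha}(f_1,f_2,\hat{f}_3,y_1,y_2,\hat{y}_3)$.

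The main (and essentially only) obstacle I anticipate is the bookkeeping in the first step: carefully tracking the signs when the $D_2<0$ correction is absorbed, and verifying that every cross-term rearranges into the standard $G_\xi$ form with the correct $\xi$. Once that algebraic rewriting is done cleanly, the rest is a direct invocation of Lemma~\ref{lem:jensen} together with the fact that the arithmetic mean preserves the relevant intervals for $(f_i,w_i)$; no new analytic ingredients are required beyond those already developed for Lemma~\ref{lem:sym3+}.
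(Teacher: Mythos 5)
Your proposal is correct and follows essentially the same route as the paper: rewrite $\alpha$ in the $G_\xi$ form with $P_2=G_{f_1}(w_2,f_2)$ and the indices $3,4$ sharing $\xi=f_1+f_2$, merge $(w_3,f_3)$ and $(w_4,f_4)$ into their averages, check that the averaged point still satisfies $\tfrac{1-\hat f_3}{2}\le\hat w_3\le 1-\hat f_3$, and invoke the two-variable case of Lemma~\ref{lem:jensen}. The algebraic bookkeeping you flag as the only risk is exactly the rewrite the paper carries out, and it closes as you describe.
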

\begin{proof}
  Let $w_j = (1-f_j)(1-y_j)$ for $j=3,4$, and denote
  $A=A(w_1,w_2,w_3,w_4)=\sum_{j=1}^{4}w_j$ be the denominator of
  $F_k$. Then
  \begin{align*}
    \alpha &= \frac{\Phi(F_1)F_1}{A}\tuple{A(1-F_1)P_1+P_2+\sum_{j=3}^{4}\frac{1-f_j}{\Phi(y_j)}+4M(f_1+f_2)(1-y_j)} \\
           &= \frac{\Phi(F_1)F_1}{A}\tuple{A(1-F_1)P_1+P_2+\sum_{j=3}^{4}\frac{1-f_j}{\Phi(1-\frac{w_j}{1-f_j})}+4M(f_1+f_2)\cdot\frac{w_j}{1-f_j}} \\
           &= \frac{\Phi(F_1)F_1}{A}\tuple{A(1-F_1)P_1+P_2+\sum_{j=3}^{4}G_{f_1+f_2}(w_j,f_j)}.
  \end{align*}
  Take $\hat{w}_3$ and $\hat{f}_3$ such that
  $2\hat{w}_3=w_3+w_4, 2\hat{f}_3=f_3+f_4$, and take
  $\hat{y}_3=1-\frac{\hat{w}_3}{1-\hat{f}_3}$. Then we have
  $f_1+f_2+2\hat{f}_3=f_1+f_2+f_3+f_4=1$. Let
  $\hat{A}(w_1,w_2,w_3)=w_1+w_2+2w_3$, then clearly
  $A(w_1,w_2,w_3,w_4) = \hat{A}(w_1,w_2,\hat{w}_3)$. Since
  $f_1+f_2 \in [0,1]$ by Lemma~\ref{lem:jensen} we have
  \begin{align*}
    \alpha(\*f,\*y) &\le \frac{\Phi(F_1)F_1}{\hat{A}}\tuple{\hat{A}(1-\hat{F}_1)P_1+P_2+2G_{f_1+f_2}(\hat{w}_3,\hat{f}_3)} \\
                    &= \frac{\Phi(\hat{F}_1)\hat{F}_1}{\hat{A}}\tuple{\hat{A}(1-\hat{F}_1)P_1+P_2+2\kappa\cdot\tuple{\frac{1-\hat{f}_3}{\Phi(\hat{y}_3)}+4M(f_1+f_2)(1-\hat{y}_3)}} \\
                    &\le \kappa \cdot  \frac{\Phi(\hat{F}_1)\hat{F}_1}{\hat{A}}\tuple{\hat{A}(1-\hat{F}_1)P_1+P_2+\frac{2(1-\hat{f}_3)}{\Phi(\hat{y}_3)}+8M(f_1+f_2)(1-\hat{y}_3)} \\
                    &= \kappa \cdot \hat{\alpha}(f_1,f_2,\hat{f}_3,y_1,y_2,\hat{y}_3).
  \end{align*}
\end{proof}

\begin{lemma}\label{lem:resolve2+1-} For all
  $f_1,f_2,f_3,y_1,y_2,y_3 \in [0,\frac{1}{2}]$ satisfying
  \begin{align*}
    f_1+f_2+2f_3 &= 1, \\
    6f_1+f_2-1 &< 0, \\
    4f_1+f_2-\frac{3}{4} &< 0,
  \end{align*} 
  and
  \[ \frac{1}{13} \le f_1 \le \frac{1}{2}, \quad 0 \le f_2, f_3 \le
    \frac{1}{2}, \] we have
  \[ \hat{\alpha}(f_1,f_2,f_3,y_1,y_2,y_3) \le \frac{9163}{10000}. \]
\end{lemma}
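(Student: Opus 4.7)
The plan is to reduce Lemma~\ref{lem:resolve2+1-} to a polynomial inequality on a compact semi-algebraic set in five variables and then certify it by computer algebra, following the same strategy as Lemma~\ref{lem:resolve3+}. First I would use the equality $f_1 + f_2 + 2f_3 = 1$ to eliminate $f_3 = (1 - f_1 - f_2)/2$, reducing $\hat\alpha$ to a rational function of $(f_1, f_2, y_1, y_2, y_3)$ on the region cut out by $\tfrac{1}{13} \le f_1 \le \tfrac{1}{2}$, $0 \le f_2 \le \tfrac{1}{2}$, $0 \le y_1, y_2, y_3 \le \tfrac{1}{2}$, the two linear inequalities $6f_1 + f_2 \le 1$ and $4f_1 + f_2 \le \tfrac{3}{4}$, and the condition $0 \le 1 - f_1 - f_2 \le 1$ which encodes $f_3 \in [0, \tfrac{1}{2}]$.

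Next, after substituting the explicit formulas for $\hat F_1$, $\Phi$, $P_1$, $P_2$, $P_3$ into the target inequality $\hat\alpha \le \tfrac{9163}{10000}$, I would multiply through by the product of denominators $y_1(\tfrac{1}{2}-y_1)\,y_2(\tfrac{1}{2}-y_2)\,y_3(\tfrac{1}{2}-y_3)\,(1-f_1)\,\hat A$, each factor of which is strictly positive on the interior of the region and continuous on its closure. The resulting polynomial inequality has modest total degree in five variables and can be certified rigorously by Mathematica's \texttt{Resolve} or \texttt{Reduce} over the reals using cylindrical algebraic decomposition, exactly in the spirit of the verification cited in the proof of Lemma~\ref{lem:resolve3+}.

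The main obstacle is that the target $\tfrac{9163}{10000}$ is appreciably tighter than the Case~1 bound $\tfrac{963}{1000}$ from Lemma~\ref{lem:resolve3+}. This tightening is forced by the extra positive contribution $-4M f_2 D_2$ to $\hat\alpha$ (arising from $|D_2|$ when $D_2<0$) that is absent in Case~1, and it can only be recovered by genuinely exploiting the two linear inequalities $6f_1+f_2 \le 1$ and $4f_1+f_2 \le \tfrac{3}{4}$: on the full box $[\tfrac{1}{13},\tfrac{1}{2}]\times[0,\tfrac{1}{2}]^4$ the bound simply fails, since $\hat\alpha$ blows up when $f_1$ is close to $\tfrac{1}{2}$ and $f_2$ is close to $\tfrac{1}{2}$. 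Hence both linear hypotheses must be passed to \texttt{Resolve} as explicit assumptions rather than being absorbed into variable ranges. Should CAD prove too expensive in five variables, a natural fallback is to split the domain along $f_1 = \tfrac{1}{8}$, where the two linear constraints coincide (so that on each sub-region only the tighter of the two is binding), and verify each sub-polytope separately; alternatively, one can first reduce the $y_i$ variables by observing that each enters $\hat\alpha$ as a small rational expression and can be eliminated analytically by extremizing over $y_i \in [0,\tfrac{1}{2}]$, leaving a lower-dimensional CAD in the $f$-variables only.
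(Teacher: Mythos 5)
Your overall strategy is sound and matches the paper's at a high level: reduce $\hat\alpha \le \tfrac{9163}{10000}$ to a polynomial inequality on a semi-algebraic region and hand it to Mathematica's real CAD. You also correctly identify the crucial point that the two linear constraints $6f_1+f_2\le 1$ and $4f_1+f_2\le\tfrac{3}{4}$ are indispensable and must be passed to \texttt{Resolve} explicitly; the bound genuinely fails on the full box $[\tfrac{1}{13},\tfrac12]\times[0,\tfrac12]^4$.

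What you miss, however, is the paper's analytic preprocessing step that makes the computation tractable. Rather than running a five-variable CAD, the paper first \emph{relaxes} $P_1$ to
\[
  P_1' \;=\; \frac{1}{(1-y_1)\Phi(y_1)} - \frac{4Mf_2}{1-1/13} \;\ge\; P_1,
\]
replacing the sole occurrence of $\tfrac{1}{1-f_1}$ in $P_1$ by the constant $\tfrac{1}{1-1/13}$. After eliminating $f_3 = (1-f_1-f_2)/2$, every remaining occurrence of $f_1$ in the numerator $A_1 P_1' + P_2 + P_3$ and in the denominator $A_1 - (1-y_1)(1-f_1)$ is affine, so the whole upper bound for $\hat\alpha$ is a ratio of two affine functions of $f_1$ and hence monotone in $f_1$ on its feasible interval. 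The maximum over $f_1$ therefore sits at an endpoint, namely $f_1 = \tfrac{1}{13}$, $f_1 = \tfrac{1}{6}(1-f_2)$, or $f_1 = \tfrac{1}{4}\bigl(\tfrac34 - f_2\bigr)$. The paper then substitutes each of these three boundary values and runs three separate CAD checks, each in only four variables $(f_2,y_1,y_2,y_3)$. This dimension reduction is the whole point: it replaces one expensive five-variable CAD with three cheap four-variable ones. Your first fallback (splitting along $f_1 = \tfrac18$) does not achieve this --- both sub-pieces still carry $f_1$ as a free variable --- and your second fallback (extremizing over $y_i$ analytically) is not straightforward because $y_1$ enters $\hat\alpha$ both through $\tfrac{1}{(1-y_1)\Phi(y_1)}$ and through the factor $(1-y_1)(1-f_1)$ in the denominator, so it does not reduce to a simple one-sided monotonicity. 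If a direct five-variable \texttt{Resolve} times out, the missing ingredient is precisely the $P_1 \to P_1'$ relaxation that linearizes $f_1$.
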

\begin{proof}
  Recall that
  \[ \hat{\alpha} =
    \frac{\Phi(\hat{F}_1)\hat{F}_1}{\hat{A}}\tuple{\hat{A}(1-\hat{F}_1)P_1+P_2+P_3} \]
  where
  \begin{align*}
    P_1 &= \frac{1}{(1-y_1)\Phi(y_1)}-\frac{4Mf_2}{1-f_1}, \\
    P_2 &= \frac{1-f_2}{\Phi(y_2)} + 4Mf_1(1-y_2), \\
    P_3 &= \frac{2(1-f_3)}{\Phi(y_3)} + 8M(f_1+f_2)(1-y_3) \\
        &= \frac{1+f_1+f_2}{\Phi(y_3)}+8M(f_1+f_2)(1-y_3).
  \end{align*}
  Denote
  \[ A_1 = \hat{A}(1-\hat{F}_1) = (1-f_2)(1-y_2)+2(1-f_3)(1-y_3). \]
  So
  \[ \hat{\alpha} =
    \frac{2\tuple{A_1P_1+P_2+P_3}}{A_1-(1-y_1)(1-f_1)}. \] We
  substitute $P_1$ for
  $P_1^\prime = \frac{1}{(1-y_1)\Phi(y_1)}-\frac{4Mf_2}{1-1/13} \ge
  P_1$ and obtain an upper bound
  \[ \hat{\alpha} \le
    \frac{2(A_1P_1^\prime+P_2+P_3)}{A_1-(1-y_1)(1-f_1)}. \] Notice now
  both numerator and denominator are linear functions of
  $f_1$. Therefore it reaches the maximum value when $f_1$ is at its
  boundary. The next step is to let $f_1$ take its boundary values and
  simplify the formula.
  \begin{enumerate}
  \item $f_1 = \frac{1}{6}(1-f_2)$.
		
    \[ \alpha_1 =
      \frac{2(A_1P_1^\prime+P_2^\prime+P_3^\prime)}{A_1-(1-y_1)(1-\frac{1}{6}(1-f_2))} \]
    where
    \begin{align*}
      P_2^\prime &= \frac{1-f_2}{\Phi(y_2)} + \frac{2}{3}M(1-f_2)(1-y_2), \\
      P_3^\prime &= \frac{7+5f_2}{6\Phi(y_3)} + \frac{4}{3}M(5f_2+1)(1-y_3).
    \end{align*}
    
    It can be rigorously proved by \emph{Mathematica} that $\alpha_1 \le \frac{9138}{10000}$. The codes are in Section~\ref{sec:computer}.
		
  \item $f_1 = \frac{1}{4}\tuple{\frac{3}{4}-f_2}$.
		
    \[ \alpha_2 =
      \frac{2(A_1P_1^\prime+P_2^\prime+P_3^\prime)}{A_1-(1-y_1)(1-\frac{1}{4}(\frac{3}{4}-f_2))} \]
    where
    \begin{align*}
      P_2^\prime &= \frac{1-f_2}{\Phi(y_2)} + M\tuple{\frac{3}{4}-f_2}(1-y_2), \\
      P_3^\prime &= \frac{19+12f_2}{16\Phi(y_3)} + 6M\tuple{f_2+\frac{1}{4}}(1-y_3).
    \end{align*}
	
    It can be rigorously proved by \emph{Mathematica} that $\alpha_1 \le \frac{9163}{10000}$. The codes are in Section~\ref{sec:computer}.
		
  \item $f_1 = \frac{1}{13}$.
		
    \[ \alpha_3 =
      \frac{2(A_1P_1^\prime+P_2^\prime+P_3^\prime)}{A_1-(1-y_1)(1-\frac{1}{13})} \]
    where
    \begin{align*}
      P_2^\prime &= \frac{1-f_2}{\Phi(y_2)} + \frac{4}{13}M(1-y_2), \\
      P_3^\prime &= \frac{14+f_2}{13\Phi(y_3)} + 8M\tuple{f_2+\frac{1}{13}}(1-y_3).
    \end{align*}
    
    It can be rigorously proved by \emph{Mathematica} that $\alpha_3 \le \frac{9102}{10000}$.
    The codes are in Section~\ref{sec:computer}.
    
  \end{enumerate}
  
  To conclude we have
  $\hat{\alpha} \le \max\set{\frac{9138}{10000}, \frac{9163}{10000},
    \frac{9102}{10000}}=\frac{9163}{10000}$.
\end{proof}

The discussion of absolute values provides the condition for Lemma~\ref{lem:sym2+1-}, and combining Lemma~\ref{lem:resolve2+1-} gives
\[ \alpha(\*f,\*y) \le \kappa \cdot
  \hat{\alpha}(f_1,f_2,\hat{f}_3,y_1,y_2,\hat{y}_3) \le
  \frac{1038}{1000} \cdot \frac{9163}{10000} < \frac{9512}{10000}. \]

To summarize the analysis in Section~\ref{sec:d1=2}, we have
\[ \alpha(\*f,\*y) \le \max\set{\frac{9512}{10000},\frac{9996}{10000}}
  = \frac{9996}{10000}. \]

\subsubsection{$d_1=1$} \label{sec:d1=1}

When $d_1=1$ we need to bound
$\alpha(\*x,\*y)=P_1(\*f,\*y)+P_2(\*f,\*y)$. Furthermore, if
$1 \in L(v_1)$ then we still have
$\frac{1}{13} \le f_1 \le \frac{1}{2}, 0 \le f_2, f_3, f_4 \le
\frac{1}{2}$. In this case, the proof in Section~\ref{sec:d1=2} can
all go through once we obtain the symmetric form of $\alpha$ by the
following lemma. This is a modified version of Lemma~\ref{lem:jensen}
that can fit the situation of $d_1=1$.

\begin{lemma} \label{lem:jensen_mod} For all
  $w_1, w_2, w_3 \in [0,\frac{1}{2}]$ and
  $f_1, f_2, f_3 \in [0,\frac{1}{2}]$ such that
  $\frac{1-f_i}{2} \le w_i \le 1-f_i, i=1,2,3$, we have
  \begin{align*}
    \frac{1}{2}\tuple{G_\xi(w_1,f_1)+G_\xi(w_2,f_2)} &\le \kappa \cdot G_\xi\tuple{\frac{w_1+w_2}{2},\frac{f_1+f_2}{2}} \\
    \frac{1}{3}\tuple{G_\xi(w_1,f_1)+G_\xi(w_2,f_2)+G_\xi(w_3,f_3)} &\le \kappa \cdot G_\xi\tuple{\frac{w_1+w_2+w_3}{3},\frac{f_1+f_2+f_3}{3}}
  \end{align*}
  holds for any $\xi \in [0,\frac{1}{4}]$, where
  $\kappa=\frac{1019}{1000}$.
\end{lemma}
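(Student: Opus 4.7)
The plan is to mimic the proof of Lemma~\ref{lem:jensen} by reducing to an extreme value of $\xi$ via concavity of $G_0$, and then verifying the reduced inequalities numerically. The crucial observation is that $G_\xi(w,f)$ is affine in $\xi$, since $\xi$ enters only through the summand $4M\xi\cdot\frac{w}{1-f}$. Hence, for every $\xi \in [0,\tfrac{1}{4}]$,
\[ G_\xi(w,f) = (1-4\xi)\,G_0(w,f) + 4\xi \,G_{1/4}(w,f), \]
exhibiting $G_\xi$ as a convex combination of $G_0$ and $G_{1/4}$ with weights in $[0,1]$.

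First, I would reduce the lemma to the case $\xi = \tfrac{1}{4}$. By Lemma~\ref{lem:jensen0}, $G_0$ is concave on $\set{(w,f) : f \in [0,\tfrac{1}{2}],\ \tfrac{1-f}{2} \le w \le 1-f}$, so Jensen gives $\tfrac{1}{2}(G_0(w_1,f_1)+G_0(w_2,f_2)) \le G_0(\bar w,\bar f) \le \kappa\cdot G_0(\bar w,\bar f)$ for any $\kappa \ge 1$, and analogously for the three-variable case. Supposing the target inequalities have been shown for $\xi = \tfrac{1}{4}$ with constant $\kappa$, the decomposition above combined with the linearity of averaging yields the same bound for every $\xi \in [0,\tfrac{1}{4}]$, exactly as in the proof of Lemma~\ref{lem:jensen}.

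Second, to handle the base case $\xi = \tfrac{1}{4}$, I would follow the two-step chaining of Lemma~\ref{lem:jensen}: verify a balanced 2-variable inequality $\tfrac{1}{2}(G_{1/4}(w_1,f_1)+G_{1/4}(w_2,f_2)) \le \kappa_1' \,G_{1/4}(\bar w,\bar f)$ together with a weighted 2-variable inequality $\tfrac{1}{3}(G_{1/4}(w_1,f_1)+2 G_{1/4}(w_2,f_2)) \le \kappa_2' \,G_{1/4}\tuple{\tfrac{w_1+2w_2}{3}, \tfrac{f_1+2f_2}{3}}$, both by Mathematica over the semialgebraic box with $f_i \in [0,\tfrac{1}{2}]$. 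The three-variable inequality then follows by splitting the average as $\tfrac{1}{2}\bigl(\tfrac{1}{3}(2G(w_1,f_1)+G(w_2,f_2)) + \tfrac{1}{3}(G(w_2,f_2)+2G(w_3,f_3))\bigr)$ and applying the weighted inequality twice and then the balanced one, yielding $\kappa = \kappa_1' \kappa_2' \le \tfrac{1019}{1000}$.

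The main obstacle is that the $f_i$ domain now extends all the way down to $0$, whereas Lemma~\ref{lem:jensen} required $f_i \ge \tfrac{1}{13}$. Near the boundary $f_i = 0$ the ratio $w_i/(1-f_i)$ that drives the second summand of $G_\xi$ becomes maximally sensitive, which is why the large-$\xi$ inequalities of Lemma~\ref{lem:jensen} cannot survive on the enlarged domain. The saving grace is precisely the restriction $\xi \le \tfrac{1}{4}$, which scales the offending coefficient down by a factor of four and permits a Jensen-type bound with a relaxed constant $\tfrac{1019}{1000}$ in place of $\tfrac{1038}{1000}$. The quantitative heart of the argument is confirming numerically that the sub-constants $\kappa_1', \kappa_2'$ indeed multiply to at most $\tfrac{1019}{1000}$ over this enlarged box.
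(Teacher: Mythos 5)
Your proposal is correct and follows essentially the same path as the paper: reduce to $\xi=\tfrac14$ via the affine decomposition $G_\xi=(1-4\xi)G_0+4\xi\,G_{1/4}$ together with concavity of $G_0$ (Lemma~\ref{lem:jensen0}), verify a balanced and a weighted two-variable inequality for $G_{1/4}$ by \emph{Mathematica} over the domain with $f_i\in[0,\tfrac12]$, and chain them to obtain the three-variable case. The paper's sub-constants are $\kappa_1=\kappa_2=\tfrac{1009}{1000}$ with $\kappa_1\kappa_2\le\tfrac{1019}{1000}$, exactly filling your placeholders $\kappa_1',\kappa_2'$.
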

\begin{proof}
  The proof is almost the same as Lemma~\ref{lem:jensen}, except that
  here we only need to prove for $\xi=\frac{1}{4}$. This is also
  achieved by proving that for all $w_1, w_2, w_3 \in [0,\frac{1}{2}]$
  and $f_1, f_2, f_3 \in [0,\frac{1}{2}]$ such that
  $\frac{1-f_i}{2} \le w_i \le 1-f_i, i=1,2,3$, we have
  \[ \frac{1}{2}(G_\frac{1}{4}(w_1,f_1)+G_\frac{1}{4}(w_2,f_2)) \le
    \kappa_1\cdot
    G_\frac{1}{4}\tuple{\frac{w_1+w_2}{2},\frac{f_1+f_2}{2}} \]
  \[ \frac{1}{3}(G_\frac{1}{4}(w_1,f_1)+2G_\frac{1}{4}(w_2,f_2)) \le
    \kappa_2\cdot
    G_\frac{1}{4}\tuple{\frac{w_1+2w_2}{3},\frac{f_1+2f_2}{3}} \]
  where $\kappa_1=\frac{1009}{1000}, \kappa_2=\frac{1009}{1000}$ and
  $\kappa_1\kappa_2 \le \kappa$. The \emph{Mathematica} code to verify
  the lemma is in Section~\ref{sec:computer}.
\end{proof}

Now it remains to handle the case when $1 \notin L(v_1)$. So in the
rest of this section we will assume that $ f_1 = 0$.

According to the convention in Algorithm~\ref{algo:marg-deg2}, we have
either $1 \notin L(v_2)$ or $d_2=0$. We will defer the discussion of
this $d_2=0$ case to the end of this section. If $1 \notin L(v_2)$ we
have $f_1=y_1=0$ and this is true for both actual value and computed
value. So we fix $f_1, y_1$ to be zero in our recursion and discuss
the contraction rate of this partially fixed function
\[ F_1 = \frac{1}{1+\sum_{j=2}^{4}(1-f_j)(1-y_j)}. \] The contraction
rate should not involve the derivatives of $f_1$ and $y_1$, namely
\[ \alpha(\*f,\*y) = P_1(\*f,\*y)+P_2(\*f,\*y) \] where
\begin{align*}
  P_1(\*f,\*y) &= \Phi(F_1)F_1 \cdot M \cdot \sum_{j \in L(v_1)\setminus \set{1}}f_j\abs{1-\sum_{\substack{k=1 \\ k\neq j}}^{4}\frac{F_k}{1-f_k}}, \\
  P_2(\*f,\*y) &= \Phi(F_1)F_1 \cdot \sum_{j=2}^{4}\frac{F_j}{(1-y_j)\Phi(y_j)},
\end{align*}
and
\[ F_k = \frac{(1-f_k)(1-y_k)}{1+\sum_{j=2}^{4}(1-f_j)(1-y_j)} \] is
also partially fixed accordingly.

\paragraph{Discussion on the absolute values.}
Let $D_j\defeq1-\sum_{\substack{k=1 \\k\ne j}}^{4}\frac{F_k}{1-f_k}$
for $j=2,3,4$. Recall that
\begin{align*}
  \sum_{j=2}^{4}D_j = \sum_{j \in L(v_1) \setminus \set{1}}f_j\tuple{1-\sum_{\substack{k=1 \\ k \neq j}}^{4}\frac{F_k}{1-f_k}} = f_1\sum_{j=2}^{4}\frac{F_j}{1-f_j} = 0,
\end{align*}
so it cannot be the case that all $D_j$'s have the same sign. We will
always, without loss of generality, assume $D_2$ has the opposite sign
against others. Then $|D_2|+|D_3|+|D_4|$ is either $2D_2$ or $-2D_2$.

\paragraph{Case 1: $D_2$ is negative.}

In this case
\[ \alpha(\*f,\*y) =
  \Phi(F_1)F_1\tp{M\cdot(-2D_2)+\sum_{j=2}^{4}\frac{F_j}{(1-y_j)\Phi(y_j)}}. \]
Denote $A\defeq 1+\sum_{j=2}^{4}(1-f_j)(1-y_j)$ the denominator of
$F_1$.

We first consider the case when $y_j=\frac{1}{2}$ for some
$j \in \set{2,3,4}$. By Theorem~\ref{thm:bounds} we know that all
$y_j$'s should be accurately computed given the recursion depth $D$ is
at least $3$. So we can further discard all derivatives of $y_j$ and
obtain
\begin{align*}
  \alpha(\*f,\*y) &= 2Mf_2 \cdot \Phi(F_1)F_1\tp{\sum_{\substack{k=1 \\ k\neq 2}}^{4}\frac{F_k}{1-f_k}-1} \\
                  &= \frac{4Mf_2(3-y_3-y_4-A)}{A-2}.
\end{align*} 
Notice that $\alpha$ is monotonically increasing on $y_2$, so we take
$y_2=\frac{1}{2}$. After substituting $1-f_2$ for $f_3+f_4$ we get
\[ \alpha(\*f,\*y) \le 4Mf_2 \cdot
  \frac{f_3(\frac{1}{2}-y_3)+f_4(\frac{1}{2}-y_4)}{(\frac{1}{2}-y_3)(1-f_3)+(\frac{1}{2}-y_4)(1-f_4)}
  \le 2M \] where the last inequality is due to
$f_2,f_3,f_4 \le \frac{1}{2}$ and the monotonicity on $f_3$ and $f_4$.

On the other aspect, if $y_j \neq \frac{1}{2}$ for all
$j \in \set{2,3,4}$, then by Theorem~\ref{thm:bounds} we have
$y_j \le \frac{6}{13}$ for all $j \in \set{2,3,4}$ since $d_2$ is at
most 1. Let $w_j=(1-f_j)(1-y_j)$, by Lemma~\ref{lem:jensen_mod}

\begin{align*}
  \alpha(\*f,\*y) &= \Phi(F_1)F_1\tp{2Mf_2\tp{\sum_{\substack{k=1 \\ k\neq 2}}^{4}\frac{F_k}{1-f_k}-1}+\sum_{j=2}^{4}\frac{F_j}{(1-y_j)\Phi(y_j)}} \\
                  &= \frac{\Phi(F_1)F_1}{A}\tp{\frac{1-f_2}{\Phi(y_2)}+2Mf_2(1-A)+\sum_{j=3}^{4}\frac{1-f_j}{\Phi(y_j)}+2Mf_2(1-y_j)} \\
                  &= \frac{\Phi(F_1)F_1}{A}\tp{\frac{1-f_2}{\Phi(y_2)}+2Mf_2(1-A)+\sum_{j=3}^{4}\frac{1-f_j}{\Phi(1-\frac{w_j}{1-f_j})}+2Mf_2\frac{w_j}{1-f_j}} \\
                  &= \frac{\Phi(F_1)F_1}{A}\tp{\frac{1-f_2}{\Phi(y_2)}+2Mf_2(1-A)+\sum_{j=3}^{4}G_{\frac{f_2}{2}}(w_j,f_j)} \\
                  &\le \kappa \cdot \frac{\Phi(\hat{F}_1)\hat{F}_1}{\hat{A}}\tp{\frac{1-f_2}{\Phi(y_2)}+2Mf_2(1-\hat{A})+2G_{\frac{f_2}{2}}(\hat{w}_3,\hat{f}_3)}
\end{align*}
where $\hat{w}_3 = \frac{w_3+w_4}{2}$,
$\hat{f}_3 = \frac{f_3+f_4}{2}$,
$\hat{F}_1=\frac{1}{1+w_2+2\hat{w}_3}$ and
$\hat{A}=1+w_2+2\hat{w}_3$. If we take
$\hat{y}_3 = 1-\frac{\hat{w}_3}{1-\hat{f}_3}$ then we can get the
symmetric form of $\alpha$:
\[ \hat{\alpha}(\*f,\*y) =
  \frac{\Phi(\hat{F}_1)\hat{F}_1}{\hat{A}}\tp{\frac{1-f_2}{\Phi(y_2)}+2Mf_2(1-\hat{A})+\frac{2(1-\hat{f}_3)}{\Phi(\hat{y}_3)}+4Mf_2(1-\hat{y}_3)}. \]

\begin{lemma} \label{lem:resolve2+1-d1} For all
  $f_2,f_3,y_2,y_3 \in [0,\frac{1}{2}]$ satisfying
  \begin{align*}
    f_2+2f_3 = 1, \\
    \frac{1}{13} \le f_2 \le \frac{1}{2}, \\
    0 \le y_2, y_3 \le \frac{6}{13},
  \end{align*} 
  we have
  \[ \hat{\alpha}(f_2,f_3,y_2,y_3) \le \frac{9231}{10000}. \]
\end{lemma}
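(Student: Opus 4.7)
The plan is to verify this inequality in the same spirit as the earlier ``resolve'' lemmas (Lemma~\ref{lem:resolve3+}, Lemma~\ref{lem:resolve2+1-}): reduce $\hat{\alpha}$ to an explicit rational function of few variables on a compact box, then clear denominators to obtain a polynomial inequality that can be verified rigorously in Mathematica.

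First I would use the constraint $f_2 + 2f_3 = 1$ to eliminate $f_3 = (1-f_2)/2$. After this substitution $2(1-\hat{f}_3) = 1+f_2$, so
\[
  \hat{A} = 1 + (1-f_2)(1-y_2) + (1+f_2)(1-y_3), \qquad \hat{F}_1 = 1/\hat{A},
\]
and the prefactor collapses cleanly:
\[
  \frac{\Phi(\hat{F}_1)\hat{F}_1}{\hat{A}} = \frac{1}{\hat{A}(1/2 - \hat{F}_1)} = \frac{2}{\hat{A}-2}.
\]
Under the hypothesis $y_2, y_3 \le 6/13$, we have $\hat{A} - 2 \ge \frac{14}{13} - 1 = \frac{1}{13} > 0$, so no singularity appears and $\hat{\alpha}$ is a genuine rational function of the three variables $(f_2, y_2, y_3)$ on the box $[1/13, 1/2] \times [0, 6/13]^2$.

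Next, I would multiply through by the positive common denominator and rearrange the inequality $\hat{\alpha} \le 9231/10000$ into a polynomial inequality $P(f_2, y_2, y_3) \ge 0$ of moderate degree on this box. Since the constraints are semialgebraic and the decision problem over the reals for polynomial inequalities is decidable via cylindrical algebraic decomposition, this can be discharged by \texttt{Resolve} or \texttt{Reduce} in Mathematica, exactly as in the verifications referenced in Section~\ref{sec:computer} for the previous resolve-lemmas. Before invoking the decision procedure, I would first sample the function on a dense grid to confirm that the constant $9231/10000$ is correct and that the maximum appears interior rather than at a singular point.

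The main obstacle is delicacy rather than difficulty of concept: the refined upper bound $y_j \le 6/13$ (rather than the weaker $1/2$) from Theorem~\ref{thm:bounds}, available because $d_2 \le 1$ and $D \ge 3$, is essential; with $y_j \le 1/2$ the inequality would fail because $\hat{A} - 2$ could vanish. So I must carefully propagate this refined bound through the derivation of $\hat{\alpha}$ to ensure the rational function stays away from its singularity. Should the three-variable \texttt{Reduce} call prove too slow, a fallback is to exploit apparent monotonicity of $\hat{\alpha}$ in $y_2$ or $y_3$ (the terms $\frac{1-f_2}{\Phi(y_2)} = (1-f_2)y_2(1/2-y_2)$ and $\frac{2(1-\hat{f}_3)}{\Phi(\hat{y}_3)}$ are concave in $y_2, y_3$, which suggests fixing them at boundary values where the overall expression is largest), reducing to a two-variable inequality and then to a one-variable one that can be checked symbolically with essentially no automation.
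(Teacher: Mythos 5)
Your proposal is correct and takes essentially the same approach as the paper: the paper's proof is exactly a computer-assisted verification that eliminates $f_3=(1-f_2)/2$ and calls Mathematica's \texttt{Resolve} on the box $f_2\in[1/13,1/2]$, $y_2,y_3\in[0,6/13]$, with the same denominator-clearing remark you make appearing in Section~\ref{sec:computer}. Your algebraic simplifications (the prefactor collapsing to $2/(\hat A-2)$ and the observation that $\hat A-2\ge 1/13>0$ under $y_j\le 6/13$) are consistent with what the code implicitly relies on.
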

\begin{proof}
  The lemma can be verified by \emph{Mathematica}. The codes are in
  Section~\ref{sec:computer}.
\end{proof}

In conclusion we have
\[ \alpha(\*f,\*y) \le \max\set{2M, \kappa \cdot
    \hat{\alpha}(f_2,\hat{f}_3,y_2,\hat{y}_3)} \le \max\set{2M,
    \frac{1018}{1000} \cdot \frac{9231}{10000}} < \frac{94}{100}. \]

\subsubsection{Case 2: $D_2$ is positive}

In this case
\[ \alpha(\*f,\*y) = \Phi(F_1)F_1\tp{M\cdot 2D_2 +
    \sum_{j=2}^{4}\frac{F_j}{(1-y_j)\Phi(y_j)}}. \]

As we did in Case 1, we first consider when $y_j=\frac{1}{2}$ for some
$j \in \set{2,3,4}$. We similarly obtain
\begin{align*}
  \alpha(\*f,\*y) &= 2Mf_2 \cdot \Phi(F_1)F_1\tp{1-\sum_{\substack{k=1 \\ k\neq 2}}^{4}\frac{F_k}{1-f_k}} \\
                  &= \frac{4Mf_2(A-3+y_3+y_4)}{A-2}.
\end{align*} 

Notice that $\alpha(\*f,\*y)$ is monotonically increasing on $y_3$ and
$y_4$, so we take $y_3=y_4=\frac{1}{2}$ which yields
\[ \alpha(\*f,\*y) \le 4Mf_2 \le 2M. \] Now we once more assume
$y_j \le \frac{6}{13}$ for all $j \in \set{2,3,4}$. Recall
$\lambda=\frac{9996}{10000}$, we now prove that
\begin{align*}
  \alpha(\*f,\*y)&=\frac{\sum_{j=2}^4(1-f_j)y_j(\frac{1}{2}-y_j)+2Mf_2(A-3+y_3+y_4)}{\frac{A}{2}-1}< \lambda.
\end{align*}

Since the denominator of $\alpha(\*f,\*y)$ is positive,
$\alpha(\*f,\*y)< \lambda$ is equivalent to
\[
  G\defeq\sum_{j=2}^4(1-f_j)y_j(\frac{1}{2}-y_j)+2Mf_2\tp{A-3+y_3+y_4}-\lambda\tp{\frac{1}{2}A-1}<0.
\]
Note that $G$ is quadratic on $y_3$, we can write it as
\begin{align*}
  G
  &=-(1-f_3)y_3^2+\tp{2Mf_2 + \frac{1}{2}(1-f_3)+\frac{1}{2}\lambda(1-f_3)- 2Mf_2(1-f_3)}y_3+C\\
  &=(1-f_3)\tp{-y_3^2+\tp{\frac{2Mf_2}{1-f_3}+\frac{1+\lambda}{2}-2Mf_2}y_3}+C,
\end{align*}
where $C$ is a polynomial containing no $y_3$.

Therefore, $G$ is increasing in $[-\infty,x_0]$ where
$x_0=\frac{Mf_2}{1-f_3}+\frac{1+\lambda}{4}-Mf_2\ge\frac{1+\lambda}{4}\ge\frac{6}{13}$. Since
$y_3$ and $y_4$ are symmetric, the same argument holds for $y_4$.

We only need to prove that $ G'\defeq
G|_{y_3=y_4=\frac{6}{13}}<0$. Applying $f_2+f_3+f_4=1$, a direct
calculation yields
\begin{align*}
  G'&=\frac{2}{13}Mf_2^2(-6 + 13 y_2) + \frac{1}{338} (6 - 91\lambda + 169 y_2 + 169\lambda y_2- 
      338y_2^2)\\
    &\quad +\frac{1}{338} f_2\tp{-13 \lambda (-6 + 13 y_2) + (-6 + 13 y_2) (-1 - 52 M + 26 y_2)}.
\end{align*}
Since $y_2\le \frac{6}{13}$, $G'$ is increasing in $[-\infty,x_1]$
where $x_1=\frac{1 + 52 M + 13 \lambda - 26 y_2}{104 M}>
\frac{1}{2}$. Therefore, we only need to prove that
\[
  G''\defeq G'|_{f_2=\frac{1}{2}}=\frac{9}{338} + \frac{3 M}{13} -
  \frac{2 \lambda}{13} + \tp{\frac{1}{4} - \frac{M}{2} +
    \frac{\lambda}{4}}y_2 - \frac{y_2^2}{2}<0,
\]
which holds for $y_2\in\left[0,\frac{6}{13}\right]$.

In conclusion we have
\[ \alpha(\*f,\*y) \le \max\set{2M,\lambda} = \lambda. \]

\paragraph{The case $d_2=0$.} 

At last we come to the discussion for $d_2=0$. In this case $y_1$ is not necessarily 0, but all $y_j$'s are accurately computed. Redefine
\begin{align*}
F_1 &= \frac{1-y_1}{1-y_1+\sum_{j=2}^{4}(1-f_j)(1-y_j)}, \\
A &= 1-y_1+\sum_{j=2}^{4}(1-f_j)(1-y_j).
\end{align*}
As we did before, we shall discard the derivatives of $y_j$'s and assume $D_2$ has the opposite sign against others. Now
\begin{align*}
\alpha(\*f,\*y) &= 2Mf_2 \cdot \Phi(F_1)F_1\abs{1-\sum_{\substack{k=1 \\ k\neq 2}}^{4}\frac{F_k}{1-f_k}} \\
 &= \frac{4Mf_2\abs{A-(1-y_1)-(1-y_3)-(1-y_4)}}{A-2(1-y_1)} \\
 &= \frac{4Mf_2\abs{\sum_{j=2}^{4}(1-f_j)(1-y_j)-2+y_3+y_4}}{\sum_{j=2}^{4}(1-f_j)(1-y_j)-(1-y_1)}
\end{align*}
is monotonically decreasing on $y_1$. So we can take $y_1=0$ and this is reduced to a situation we have discussed before.

To summarize the analysis in Section~\ref{sec:d1=1}, we have
\[ \alpha(\*f,\*y) \le \max\set{\frac{94}{100},\lambda} = \lambda. \]

So far we have exhausted all possible cases when $\deg[G]{v}=2$. Putting together the conclusions of Section~\ref{sec:deg=1} and Section~\ref{sec:deg=2}, we can finish the proof of Lemma~\ref{lem:meanvalue}.

\subsection{Proof of Theorem~\ref{thm:cd}}

By the discussion on cases in~\ref{sec:deg=1} and \ref{sec:deg=2}, we
have finished the proof of Lemma~\ref{lem:meanvalue} so far.

Thus we can prove Theorem~\ref{thm:cd} now.

\begin{proof}[Proof of Theorem~\ref{thm:cd}]
  Let $\lambda = \frac{9996}{10000}$ be constant.
  
  We first claim that if a vertex $v$ satisfies $\deg[G]{v}\le 2$ and
  $\abs{L(v)}\ge \deg[G]{v}+2$, then one of the following statements
  holds:
  \begin{itemize}
  \item $P(G,L,v,i,D)=\Pr[G,L]{c(v)=i}$;
  \item
    $\abs{\varphi(P(G,L,v,i,D))-\varphi(\Pr[G,L]{c(v)=i})}\le C_1\cdot\lambda^{D-2}$,
    where $\varphi(x)=2\ln x-2\ln\tp{\frac{1}{2}-x}$ and $C_1>0$ is a constant.
  \end{itemize}
  
  Given the claim, we have for some constant $C_2>0$, it holds that
  \[
    \abs{P(G,L,v,i,D)-\Pr[G,L]{c(v)=i}}= \frac{1}{\Phi(\tilde
      x)}\cdot\abs{\varphi(P(G,L,v,i,D))-\varphi(\Pr[G,L]{c(v)=i})}\le C_2\cdot\lambda^D,
  \]
  where $\Phi(x)\defeq\varphi'(x)=\frac{1}{x\tp{\frac{1}{2}-x}}$ and
  $\tilde x$ is some real between $\varphi(P(G,L,v,i,D))$ and
  $\varphi(\Pr[G,L]{c(v)=i})$.

  Now assume $\tp{G=(V,E),L}$ satisfies $\abs{L(v)}\ge\deg[G]{v}+1$
  for every $v\in V$. Let $v\in V$ be an arbitrary vertex and consider
  the computation tree of $P(G,L,v,i,D)$. According to the
  construction in Section~\ref{sec:prelim}, all the smaller instances
  $P(G',L',v',i',D')$ called by the procedure satisfy
  $\abs{L(v)}\ge \deg[G']{v'}+2$ and $\deg[G']{v'}\le 2$, i.e., the
  condition specified in the above claim. Further note that in all
  cases, the 1-norm of the gradients of our recursions
  \begin{align*}
    F(x,y,z)&=\frac{1-x}{3-x-y} & \mbox{if $\deg[G]{v}=1$ and $\abs{L(v)}=2$;}\\
    F(x,y) &=\frac{1-x}{2+y} & \mbox{if $\deg[G]{v}=1$ and $\abs{L(v)}=3$;}\\
    F(x) &=\frac{1-x}{3} & \mbox{if $\deg[G]{v}=1$ and $\abs{L(v)}=4$;}\\
    F(\*f,\*y) &=\frac{(1-f_i)(1-y_i)}{\sum_{j\in L(v)}(1-f_j)(1-y_j)} & \mbox{if $\deg[G]{v}=2$;}\\
    F(\*x,\*y,\*z) &=\frac{(1-x_i)(1-y_i)(1-z_i)}{\sum_{j\in L(v)}(1-x_j)(1-y_j)(1-z_j)} & \mbox{if $\deg[G]{v}=3$,}
  \end{align*}
  are bounded above by some constants for parameters in the range
  $[0,\frac{1}{2}]$. Therefore it follows from the mean value theorem
  and the claim that
  \[
    \abs{P(G,L,v,i,D)-\Pr[G,L]{c(v)=i}}\le
    C\cdot\lambda^D.
  \]
  for some constant $C>0$.
  
  It remains to prove the claim. We apply induction on $D$. The base
  case is that $D=2$. It follows from Theorem~\ref{thm:bounds} and
  Lemma~\ref{thm:consistency} that if $\Pr[G,L]{c(v)=i}$ is $0$ or
  $\frac{1}{2}$, then the algorithm return the correct value, i.e.,
  $P(G,L,v,i,D)=\Pr[G,L]{c(v)=i}$. Otherwise, the function
  $\varphi(\cdot)$ is bounded from above and thus the claim holds.
  For $D>2$, the claim follows from the induction hypothesis and
  Lemma~\ref{lem:ind}.
\end{proof}


\section{Proof of the Main Theorem}\label{sec:proofmain}

In this section, we prove Theorem~\ref{thm:main}. We start the proof
by first analyzing the running time of Algorithm~\ref{algo:marg-main}.

Let $G=(V,E)$ be a graph with $\abs{V}=n$, $L$ be its color lists,
$v\in V$ be a vertex, $i\in\set{1,2,3,4}$ be a color and $D$ be
nonnegative integer. Let $\tau(G,L,v,i,D)$ denote the running time of
the procedure $P(G,L,v,i,D)$, then we have:

\begin{lemma}\label{lem:time-pre}
  If $\deg[G]{v}\le 2$, then $\tau(G,L,v,i,D)=O\tp{n^312^D}$.
\end{lemma}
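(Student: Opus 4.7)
The plan is to proceed by induction on $D$, showing that there is an absolute constant $C$ such that $\tau(G,L,v,i,D) \le C \cdot n^3 \cdot 12^D$ for every list-coloring instance together with a vertex satisfying $\deg[G]{v}\le 2$. The base case $D=0$ is immediate: the algorithm takes the ``$D\le 0$'' branch of $P$ and returns $1/|L(v)|$ in $O(n)$ time after scanning the color list, which is well within $C n^3$. Throughout the induction I'd account separately for (a) the number of recursive calls to $P$ at depth $D-1$, and (b) the local overhead per call (constructing the reduced graphs $G_v$ and $G_{v,v_1}$, copying/updating color lists, and the closed-form arithmetic), which under any reasonable representation is at most $O(n^2)$.

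For the inductive step I would split on $\deg[G]{v}$. The case $\deg[G]{v}=0$ is trivial (no recursion). For $\deg[G]{v}=1$, procedure $P1$ issues at most two recursive calls, both on the vertex $v_1$ whose degree in $G_v$ satisfies $\deg[G_v]{v_1} = \deg[G]{v_1}-1 \le 2$, so the inductive hypothesis applies directly. The substantive case is $\deg[G]{v}=2$. Here I would observe the key efficiency point: although the inner loop computing $x_{k,w}=P(G_{v,v_1},L'_{k,w},u_k,w,D-1)$ is written inside the outer loop over $j\in L(v)$ in $P2$, the list $L'_{k,w}$ depends only on $k$ and $w$ and not on $j$, so the $x_{k,w}$ values need be computed only once. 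A reasonable implementation therefore issues at most $d_1\cdot |L(v_1)| \le 2\cdot 4 = 8$ distinct recursive calls for the $x$'s, plus at most $|L(v)|\le 4$ calls for the $y_j$'s, for a total of at most $12$ recursive calls at depth $D-1$. Each of these calls is on a vertex of degree at most $2$ in the modified graph: each $u_k$ has degree $\deg[G_{v,v_1}]{u_k} \le \deg[G]{u_k}-1\le 2$, and $v_2$ has degree $\deg[G_v]{v_2}\le \deg[G]{v_2}-1\le 2$. Hence the inductive hypothesis applies to every subcall, and I obtain
\[
  \tau(G,L,v,i,D) \le 12\cdot C n^3\cdot 12^{D-1} + O(n^2) \le C n^3\cdot 12^D
\]
for $C$ sufficiently large, closing the induction.

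The only nontrivial point in the argument is the bookkeeping observation that the $x_{k,w}$ computations in $P2$ are independent of $j$ and should be hoisted out of the outer loop; a literal reading of the pseudocode would suggest branching factor $|L(v)|\cdot d_1\cdot|L(v_1)|+|L(v)| \le 36$, which would give $O(36^D)$ rather than the claimed $O(12^D)$. I also need to verify the degree invariant $\deg \le 2$ is preserved by every recursive descent, which is immediate once one notes that each recursive call removes at least one neighbor of the queried vertex from the graph.
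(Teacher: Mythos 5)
Your overall accounting is the same as the paper's: at most $12$ recursive calls at depth $D-1$ when $\deg_G(v)=2$ (with the $x_{k,w}$'s hoisted out of the loop over $j$, which you correctly observe is necessary to avoid a branching factor of $36$), at most $2$ when $\deg_G(v)=1$, $O(n^2)$ local overhead per call, and the degree invariant $\deg\le 2$ preserved down the tree. The paper makes the same hoisting assumption implicitly when it asserts ``Algorithm~\ref{algo:marg-deg2} has at most 12 branches.''

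However, your inductive step does not close as written. You induct on $D$ and apply the hypothesis in the form $\tau \le Cn^3 12^{D-1}$ to each of the $12$ subcalls, obtaining
\[
\tau(G,L,v,i,D) \;\le\; 12\cdot C n^3 12^{D-1} + C'n^2 \;=\; Cn^3 12^D + C'n^2,
\]
which exceeds $Cn^3 12^D$ for every $C$ --- making $C$ larger scales both sides and does nothing to absorb the additive $C'n^2$. The missing ingredient is that every recursive call is on a strictly smaller graph: $G_v$ has $n-1$ vertices and $G_{v,v_1}$ has $n-2$. This is precisely why the paper inducts on $n$ rather than $D$. Using the hypothesis on the subcall's own vertex count gives $12\cdot C(n-1)^3 12^{D-1} + C'n^2 = C(n-1)^3 12^D + C'n^2$, and since $n^3-(n-1)^3 = 3n^2-3n+1 \ge n^2$, choosing $C\ge C'$ yields $C(n-1)^3 12^D + C'n^2 \le Cn^3 12^D$. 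You in fact remark that each subcall ``is on a vertex of degree at most $2$ \emph{in the modified graph},'' so you had the relevant fact in hand; you just did not feed the shrunken vertex count back into the cubic factor. With that one change (or equivalently, switching the induction variable to $n$ as the paper does), your argument is correct.
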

\begin{proof}
  We apply induction on $n$ to show that for some constant $C\ge 0$,
  $\tau(G,L,v,i,D)\le C\cdot n^312^D$. The base case is that $n=1$,
  then the algorithm terminates in constant time.

  For general $n$, we need to analyze cases $\deg[G]{v}=1,2$
  respectively.

  \textbf{Case $\deg[G]{v}=1$:} Algorithm~\ref{algo:marg-deg1}
  contains two subcases. We use an adjacency matrix to represent a
  graph.  Thus we can construct in $n^2$ time the graph $G_v$ which
  contains $n-1$ vertices.  We then have the following recursions for
  the two cases respectively (assuming notations in the description of
  Algorithm~\ref{algo:marg-deg1}):
  \begin{align*}
    \tau(G,L,v,i,D)&\le \tau(G_v,L_{1,i},v_1,i,D-1)+n^2\\
    \tau(G,L,v,i,D)&\le \tau(G_v,L_{1,i},v_1,i,D-1)+\tau(G_v,L_{1,j},v_1,j,D-1)+n^2
  \end{align*}
  Then the lemma follows from the induction hypothesis.

  \textbf{Case $\deg[G]{v}=2$:} Algorithm~\ref{algo:marg-deg2} has at
  most 12 branches, we have (assuming notations in the description of
  Algorithm~\ref{algo:marg-deg2}):
  \[
    \tau(G,L,v,i,D)\le \sum_{k\in d_1}\sum_{w\in
      L(v_1)}\tau(G_{v,v_1},L_{k,w}',w,D-1)+\sum_{j\in
      L(v)}\tau(G_v,L_{2,j},j,D-1)+n^2.
  \]
  Then the lemma follows from the induction hypothesis.
\end{proof}

If $\deg[G]{v}=3$, then the algorithm $P(G,L,v,i,D)$ will call
$P3(G,L,v,i,D)$ described in Algorithm~\ref{algo:marg-deg3}. However,
since the maximum degree of $G$ is at most three hence in further recursion call to Algorithm~\ref{algo:marg-main}, the degree
of a vertex decreases by at least one. Therefore
Algorithm~\ref{algo:marg-deg3} can be called at most once. Combining
Lemma~\ref{lem:time-pre}, we have

\begin{lemma}\label{lem:time}
  $\tau(G,L,v,i,D)=O\tp{n^312^D}$.
\end{lemma}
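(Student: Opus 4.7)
The plan is to bootstrap from Lemma~\ref{lem:time-pre}, which already handles the case $\deg[G]{v}\le 2$. Hence the only remaining case is $\deg[G]{v}=3$, where Algorithm~\ref{algo:marg-main} dispatches to Algorithm~\ref{algo:marg-deg3}. There is no need to run a fresh induction on $n$ here: a single unrolling of the top level suffices.

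First I would inspect Algorithm~\ref{algo:marg-deg3} and observe that it issues exactly $3\cdot\abs{L(v)}\le 12$ recursive calls of the form $P(G_v,L_{k,j},v_k,j,D)$, plus $O(n^2)$ auxiliary work for bookkeeping such as constructing the adjacency matrix of $G_v$. The key observation is that each recursive target $v_k$ is a former neighbor of $v$, so
\[
 \deg[G_v]{v_k}\;\le\;\deg[G]{v_k}-1\;\le\;\Delta-1\;=\;2,
\]
because the edge $(v,v_k)$ is deleted in passing from $G$ to $G_v$. Thus every one of the (at most) $12$ sub-calls lands in the regime covered by Lemma~\ref{lem:time-pre} on a graph of $n-1$ vertices, giving a time bound of $O\tp{(n-1)^{3}\cdot 12^{D}}$ per sub-call. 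Summing and absorbing the constant factor $12$ into the big-$O$ yields
\[
 \tau(G,L,v,i,D)\;\le\;12\cdot O\tp{(n-1)^{3}\cdot 12^{D}}\,+\,O(n^{2})\;=\;O\tp{n^{3}\cdot 12^{D}},
\]
which is precisely the claim.

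The one point that is easiest to mis-state — rather than a real obstacle — is that Algorithm~\ref{algo:marg-deg3} does \emph{not} decrement the depth parameter $D$ in its recursive calls. This is harmless because $\Delta\le 3$ forces every recursive target to have degree at most $2$ in $G_v$, so Algorithm~\ref{algo:marg-deg3} can be invoked at most once along any root-to-leaf path of the recursion; all deeper recursion is governed by Algorithms~\ref{algo:marg-deg1} and~\ref{algo:marg-deg2}, whose analysis in Lemma~\ref{lem:time-pre} is what actually drives $D$ downward and produces the $12^{D}$ factor. Making this ``at most once'' observation explicit is the only thing the write-up really needs beyond the routine summation above.
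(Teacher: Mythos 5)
Your argument is correct and is essentially the paper's own argument: invoke Lemma~\ref{lem:time-pre} for the at most $3\abs{L(v)}\le 12$ sub-calls of Algorithm~\ref{algo:marg-deg3}, each of whose target vertex has degree at most $2$ in $G_v$, and observe that since the maximum degree of $G$ is $3$, Algorithm~\ref{algo:marg-deg3} is invoked at most once on any root-to-leaf path. Your explicit note that the depth parameter is not decremented in $P3$'s calls and why that is harmless is a clean way to make precise what the paper states in one sentence.
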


Now we prove Lemma~\ref{lem:approx}.

\begin{proof}[Proof of Lemma~\ref{lem:approx}]
   First, we need to bound the value $\Pr[G,L]{c(v)=i}$ on the computation tree. If $\Pr[G,L]{c(v)=i}=0$ then it is clear to see $P(G,L,v,i,D)=0$ thus we are done. Otherwise we have $\Pr[G, L]{c(v) = i} \ge \frac{1}{13}$ if $(G, L, v)$ is a \emph{reachable} instance. In previous discussion we know that $(G, L, v)$ is on the root of our computation tree if this instance is not \emph{reachable}. In this case
  \[ \Pr[G,L]{c(v)=i} = \frac{\prod_{k=1}^{d}(1-\Pr[G_v,L_{k,i}]{c(v_k)=i})}{\sum_{j\in L(v)} \prod_{k=1}^{d}(1-\Pr[G_v,L_{k,j}]{c(v_k)=j})}, \]
  where $d=\deg[G]{v}\le 3$ and $|L(v)|\le 4$. It yields 
  \[ \Pr[G,L]{c(v)=i} \ge \frac{\tp{1-\frac{1}{2}}^3}{\tp{1-\frac{1}{2}}^3 +1+ 1+1} = \frac{1}{25}. \]
  Combining with the bound of reachable cases it implies $\Pr[G,L]{c(v)=i}\ge \frac{1}{25}$ for all instances in computation tree.

By Theorem~\ref{thm:cd}, there exists constants
$\lambda =\frac{9996}{10000}$ and $C > 0$ such that for every list-coloring instance $(G,L)$ satisfying conditions in the
statement of the theorem, it holds that
\[ \abs{P(G,L,v,i,D) - \Pr[G,L]{c(v)=i}} \le C\cdot
\lambda^{D-3} \] for all $D \ge 3$. 

  For any $0 < \eps < 1$, let $t$ be the smallest integer such that
  $C\cdot\lambda^{t-3} \le \frac{\eps}{25}$ and
  let $\hat p = P(G, L, v, i, t)$.
  We can show that Algorithm~\ref{algo:marg-main} up to depth $t$ is the algorithm outputs $\hat{p}$ such that 
  \[(1-\eps)\hat p  \le\Pr[G,L]{c(v)=i} \le (1+\eps)\hat p\] in time $\mathrm{poly}(\abs{V},\frac{1}{\eps})$.
  
  Theorem~\ref{thm:cd} implies
  \[\Pr[G,L]{c(v)=i} - \frac{\eps}{25} \le \hat p \le \Pr[G,L]{c(v)=i} +
  \frac{\eps}{25}\]
  and thus by the bound of $\Pr[G,L]{c(v)=i}$ above it holds that
  \[
    (1-\eps)\Pr[G,L]{c(v)=i} \le\hat p \le (1+\eps)\Pr[G,L]{c(v)=i}\,.
  \]
  So
  \[
   (1-\eps)\hat p \le \frac{1}{1 + \eps} \hat p \le \Pr[G,L]{c(v)=i} \le \frac{1}{1 - \eps} \hat p \le (1 + \eps) \hat p\,.
  \]
  
  Next we show that Algorithm~\ref{algo:marg-main} up to depth $t$ is a
  polynomial time algorithm with respect to $\abs{V}$ and
  $\frac{1}{\eps}$. By Lemma~\ref{lem:time},
  $\tau(G,L,v,x,t)=O\tp{n^312^t}$. Since $t$ is the smallest integer
  such that
  $C\cdot\lambda^{t-3} \le \frac{\eps}{25}$, we
  have
  \[t - 4 \le \log_\lambda{\frac{\eps}{25C}} \le t - 3\,,\] which
  implies
  $\tau(G,L,v,x,t)=O\tp{n^312^{\log_\lambda{\frac{\eps}{25C}}}}
  =
  O\tp{n^3\left(\frac{25C}{\eps}\right)^{-\log_\lambda
      {12}}}$.  $\lambda$ and $C$
  are constants, so
  $\tau(G,L,v,x,t)=\mathrm{poly}\!\tp{\abs{V},\frac{1}{\eps}}$.
\end{proof}

Finally, combining Lemma~\ref{lem:approx} and Lemma~\ref{lem:margtopartition} completes the proof of Theorem~\ref{thm:main}.


\section{Computer Assisted Proofs}\label{sec:computer}

We use some \emph{Mathematica} codes to assist our proof. All the
codes are summaried in this section.

\paragraph{Initialization} We use following code to initilize our
computer assisted part of the proof.
\begin{lstlisting}
  Phi[x_] := 1/x/(1/2 - x);
  M = (Maximize[{1/Phi[x]/(1-x), 0<=x<=1/2}, {x}])[[1]];
  F[k_] := (1-f[k])(1-y[k])/Sum[(1-f[i])(1-y[i]), {i,1,4}];
\end{lstlisting}
\paragraph{Code for Lemma~\ref{lem:jensen}} ~
\begin{lstlisting}
  G[w_,f_] := (1-f)/Phi[1-w/(1-f)] + 4M w/(1-f);
  Resolve[Exists[{w1,w2,f1,f2}, (G[w1,f1]+G[w2,f2])>=10195/10000(2G[(w1+w2)/2,(f1+f2)/2]) && 1/13<=f1<=1/2 && 1/13<=f2<=1/2 && 1/2<=w1/(1-f1)<=1 && 1/2<=w2/(1-f2)<=1]]
  Resolve[Exists[{w1,w2,f1,f2}, (G[w1,f1]/3 + 2G[w2,f2]/3)>=10181/10000(G[(w1+2w2)/3,(f1+2f2)/3]) && 1/13<=f1<=1/2 && 1/13<=f2<=1/2 && 1/2<=w1/(1-f1)<= 1 && 1/2<=w2/(1-f2)<=1]]
\end{lstlisting}

\paragraph{Code for Lemma~\ref{lem:resolve3+}} ~
\begin{lstlisting}
  Asym = Phi[F[1]] F[1] ((1-F[1])/(1-y[1])/Phi[y[1]]+Sum[F[j]/(1-y[j])/Phi[y[j]], {j,2,4}] + 4f[1]M Sum[F[k]/(1-f[k]),{k,2,4}]);
  Alpha = Asym/.{f[2]->(1-f[1])/3, f[3]->(1-f[1])/3,f[4]->(1-f[1])/3, y[3]->y[2], y[4]->y[2]};
  Resolve[Exists[{f[1],y[1],y[2]}, Alpha>963/1000 && 1/13<=f[1]<=1/2 && 0<=y[1]<=1/2 && 0<=y[2]<=1/2]]
\end{lstlisting}

\paragraph{Code for Lemma~\ref{lem:resolve2+1-}, case $f_1=\frac{1}{6}\tp{1-f_2}$} ~

\begin{lstlisting}
  F1 = F[1]/.{f[4]->(1-f[1]-f[2])/2, f[3]->(1-f[1]-f[2])/2,y[4]->y[3]};
  A = (1-f[1])(1-y[1])+(1-f[2])(1-y[2])+(1+f[1]+f[2])(1-y[3]);
  P1 = A(1-F1)(1/(1-y[1])/Phi[y[1]]-4M f[2]/(1-1/13));
  P2 = (1-f[2])/Phi[y[2]]+4f[1]M(1-y[2]);
  P3 = (1+f[1]+f[2])/Phi[y[3]]+8M(f[1]+f[2])(1-y[3]);
  Alpha1 = (1/A/(1/2-F1)*(P1+P2+P3))/.{f[1]->(1-f[2])/6};
  Resolve[Exists[{f[2],y[1],y[2],y[3]},Alpha1>9138/10000 && 0<=f[2]<=1/2 && 0<= y[1]<=1/2 && 0<=y[2]<=1/2 && 0<=y[3]<=1/2]]
\end{lstlisting}

\paragraph{Code for Lemma~\ref{lem:resolve2+1-}, case $f_1=\frac{1}{4}\tp{\frac{3}{4}-f_2}$} ~
\begin{lstlisting}
  F1 = F[1]/.{f[4]->(1-f[1]-f[2])/2, f[3]->(1-f[1]-f[2])/2, y[4]->y[3]};
  A =  (1-f[1])(1-y[1])+(1-f[2])(1-y[2])+(1+f[1]+f[2])(1-y[3]);
  P1 = A(1-F1)(1/(1-y[1])/Phi[y[1]]-4M f[2]/(1-1/13));
  P2 = (1-f[2])/Phi[y[2]]+4f[1]M(1-y[2]);
  P3 = (1+f[1]+f[2])/Phi[y[3]]+8M(f[1]+f[2])(1-y[3]);
  Alpha2 = (1/A/(1/2-F1)*(P1+P2+P3))/.{f[1]->(3/4-f[2])/4};
  Resolve[Exists[{f[2],y[1],y[2],y[3]}, Alpha2>9163/10000 && 0<=f[2]<=1/2 && 0<= y[1]<=1/2 && 0<=y[2]<=1/2 && 0<=y[3]<=1/2]]
\end{lstlisting}

\paragraph{Code for Lemma~\ref{lem:resolve2+1-}, case $f_1=\frac{1}{13}$} ~
\begin{lstlisting}
  F1 = F[1]/.{f[4]->(1-f[1]-f[2])/2, f[3]->(1-f[1]-f[2])/2, y[4]->y[3]};
  A = (1-f[1])(1-y[1])+(1-f[2])(1-y[2])+(1+f[1]+f[2])(1-y[3]);
  P1 = A(1-F1)(1/(1-y[1])/Phi[y[1]]-4M f[2]/(1-1/13));
  P2 = (1-f[2])/Phi[y[2]]+4f[1]M(1-y[2]);
  P3 = (1+f[1]+f[2])/Phi[y[3]]+8M(f[1]+f[2])(1-y[3]);
  Alpha3 = (1/A/(1/2-F1)*(P1+P2+P3))/.{f[1]->1/13};
  Resolve[Exists[{f[2],y[1],y[2],y[3]}, Alpha3>9102/10000 && 0<=f[2] && f[2]<=1/2 && 0<= y[1]<=1/2 && 0<=y[2]<=1/2 && 0<=y[3]<=1/2]]
\end{lstlisting}

In order to speed up above three code snippets for Lemma
\ref{lem:resolve2+1-}, we can simplify \texttt{Alpha}$_i$ to the form
$\frac{A}{B}$ where both $A\ge 0$ and $B\ge 0$ are polynomials. To
verify $\frac{A}{B}\le \alpha$, it is equivalent to verify $A-\alpha
B\le 0$, which can be done more efficiently by \emph{Mathematica}.

\paragraph{Code for Lemma~\ref{lem:jensen_mod}} ~
\begin{lstlisting}
G[w_,f_] := (1-f)/Phi[1-w/(1-f)] + M w/(1-f);
Resolve[Exists[{w1,w2,f1,f2}, (G[w1,f1]+G[w2,f2])>=1009/1000(2G[(w1+w2)/2, (f1+f2)/2]) && 0<=f1<=1/2 && 0<=f2<=1/2 && 1/2<=w1/(1-f1)<=1 && 1/2<=w2/(1-f2)<=1]]
Resolve[Exists[{w1,w2,f1,f2}, (G[w1,f1]/3 + 2G[w2,f2]/3)>=1009/1000(G[(w1+2w2)/3, (f1+2f2)/3]) && 0<=f1<=1/2 && 0<=f2<=1/2 && 1/2<=w1/(1-f1)<= 1 && 1/2<=w2/(1-f2)<=1]]
\end{lstlisting}

\paragraph{Code for Lemma~\ref{lem:resolve2+1-d1}} ~
\begin{lstlisting}
F1 = F[1]/.{f[4]->(1-f[1]-f[2])/2, f[3]->(1-f[1]-f[2])/2, y[4]->y[3]};
A = (1-f[1])(1-y[1])+(1-f[2])(1-y[2])+(1+f[1]+f[2])(1-y[3]);
P1 = A(1-F1)(1/(1-y[1])/Phi[y[1]]-2M f[2]/(1-f[1]));
P2 = (1-f[2])/Phi[y[2]]+ 2f[1] M(1-y[2]);
P3 = (1+f[1]+f[2])/Phi[y[3]]+4M(f[1]+f[2])(1-y[3]);
Alpha = (1/A/(1/2-F1)*(P1+P2+P3))/.{f[1]->0, y[1]->0};
Resolve[Exists[{f[2],y[2],y[3]}, Alpha>9231/10000 && 1/13<=f[2]<=1/2 && 0<=y[2]<=6/13 && 0<=y[3]<=6/13]]
\end{lstlisting} 

All the above verifications can be done within one hour on a laptop equipped with Intel i7-4700MQ CPU.


\section*{Acknowledgements}

We would like to thank Jingcheng Liu for insightful discussion.

\bibliographystyle{alpha} \bibliography{coloring}

\end{document}